\pdfoutput=1

\documentclass[11pt]{article}
\usepackage[a4paper, total={16cm, 24cm}]{geometry}
\usepackage{microtype}
\usepackage{authblk}
\title{\textbf{Explanation Systems for Approval-Based Multiwinner Voting}}
\date{}
\author[1]{Niclas Boehmer}
\author[1]{Luca Kreisel}
\author[2]{Jannik Peters}
\affil[1]{Hasso Plattner Institute, University of Potsdam, Germany}
\affil[2]{Shanghai University of Finance and Economics, China}

\usepackage{xcolor}
\usepackage{graphicx}
\usepackage{subcaption}
\usepackage{pdflscape}
\usepackage{url}

\usepackage{amsmath}
\allowdisplaybreaks
\usepackage{amssymb}
\usepackage{amsthm}
\usepackage{mathtools}
\usepackage{nicefrac}

\usepackage{thmtools}
\usepackage{thm-restate}
\newtheorem{theorem}{Theorem}[section]

\newtheorem{observation}[theorem]{Observation}
\newtheorem{axiom}{Axiom}
\theoremstyle{definition}
\newtheorem{definition}[theorem]{Definition}
\newtheorem{example}[theorem]{Example}

\newcounter{claim}
\newenvironment{claim}{%
  \refstepcounter{claim}%
  \par\noindent Claim~\theclaim.\space\itshape%
}{\par\medskip}
\newenvironment{claimproof}[1][]{%
  \def\claimproofarg{#1}%
  \par\noindent\textit{%
    \ifx\claimproofarg\empty
      Proof of Claim~\theclaim.%
    \else
      #1.%
    \fi
  }\hspace{\labelsep}\quad\upshape\ignorespaces%
}{\hfill $\blacksquare$\par\medskip}

\usepackage{tikz}
\usetikzlibrary{arrows.meta}
\usetikzlibrary{positioning}
\usetikzlibrary{patterns,shapes.arrows}
\usepackage[utf8]{inputenc}
\usepackage{pgfplots}
\DeclareUnicodeCharacter{2212}{-}
\usepgfplotslibrary{groupplots,dateplot}
\pgfplotsset{compat=newest}

\usepackage{hyperref}
\hypersetup{
    colorlinks=true,
    citecolor=green!50!black,
    linkcolor=red!60!black,
    hypertexnames=false,
}
\usepackage[nameinlink]{cleveref}
\crefname{claim}{claim}{claims}
\Crefname{claim}{Claim}{Claims}
\crefname{axiom}{axiom}{axioms}
\Crefname{axiom}{Axiom}{Axioms}
\crefname{definition}{definition}{definitions}
\Crefname{definition}{Definition}{Definitions}
\crefname{theorem}{theorem}{theorems}
\Crefname{theorem}{Theorem}{Theorems}
\crefname{lemma}{lemma}{lemmas}
\Crefname{lemma}{Lemma}{Lemmas}
\crefname{proposition}{proposition}{propositions}
\Crefname{proposition}{Proposition}{Propositions}
\crefname{corollary}{corollary}{corollaries}
\Crefname{corollary}{Corollary}{Corollaries}
\crefname{observation}{observation}{observations}
\Crefname{observation}{Observation}{Observations}
\crefname{example}{example}{examples}
\Crefname{example}{Example}{Examples}
\crefname{algocf}{algorithm}{algorithms}
\Crefname{algocf}{Algorithm}{Algorithms}

\usepackage{natbib}

\usepackage[ruled,noend,linesnumbered]{algorithm2e}
\SetKwProg{Fn}{Function}{:}{end}
\SetAlFnt{\small}
\SetAlCapFnt{\small}
\SetAlCapNameFnt{\small}
\SetAlCapHSkip{0pt}
\IncMargin{-\parindent}
\usepackage{algpseudocode}

\usepackage[inline]{enumitem}
\usepackage{booktabs}
\usepackage{array}
\usepackage{multirow}
\usepackage{tablefootnote}
\usepackage{listings}

\newcommand{\NN}{\mathbb{N}}
\newcommand{\RR}{\mathbb{R}}

\newcommand{\bigO}{\mathcal{O}}

\newcommand{\lnice}{laminar-coherence}

\newcommand{\app}{\mathcal{P}(C)^V}
\newcommand{\clrstr}{20}

\DeclareMathOperator*{\argmin}{arg\,min}

\begin{document}

\maketitle

\bigskip
{\footnotesize\tableofcontents}

\newpage
\begin{abstract}
In approval-based multiwinner voting, voters express approval preferences over a set of candidates, and the goal is to return a winning committee. This model captures a broad range of subset selection problems under preferences. Prior work has focused on the study of binary proportionality axioms that certify whether a given committee is proportionally representative or not. We take a more fine-grained perspective and initiate the study of explanation systems that quantify how a committee represents the electorate, i.e., how much influence each voter exerts, how this influence is allocated across selected candidates, how each candidate is backed by the voters, and why certain candidates were not chosen.

Building on the notion of priceability, we propose price systems as a framework for such explanations.
A price system assigns each voter an individual budget, which they can spend on selected candidates they approve, and each candidate needs to be purchased at a unit price.
Since many price systems can exist for a given outcome, selecting among them requires care. 
We initiate an axiomatic study of price systems and propose several axioms capturing structural coherence, faithful attribution of influence, and alignment with proportionality. 
On the algorithmic side, we introduce a polynomial-time computable rule in which voters continuously gain and exercise influence and show that it satisfies all jointly satisfiable axioms. Experiments on synthetic and real-world instances indicate that our explanations correlate with established proportionality notions and can recover unequal influence when it is present.
\end{abstract}

\section{Introduction}\label{sec:intro}

Subset selection problems arise in various contexts, including hiring, product recommendation, shortlisting, facility location, resource allocation, and content curation. 
To illustrate the breadth of such problems, consider three scenarios:
\begin{enumerate*}
    \item[(1)] \emph{Deliberation.} During an online or in-person deliberation, participants submit suggestions on an issue. A limited number of suggestions must be selected to be pursued further \citep{DGG+26a}.
    \item[(2)] \emph{Review curation.} A platform selects a handful of reviews from hundreds to display prominently to users browsing a product \citep{TNT11a}.
    \item[(3)] \emph{Participatory budgeting.} Based on citizens' votes, a city decides which projects to fund from a set of proposals \citep{PPS21a}. 
\end{enumerate*}
In such settings, multiple stakeholders hold preferences over which items should be selected. 
These stakeholders may be individuals (e.g., deliberating or voting citizens), but they can also represent other entities: in scenario~(2), for instance, each review can be viewed as a stakeholder that ``prefers'' the selection of similar reviews.

Approval-based multiwinner voting \citep{LaSk22a} provides an established modeling framework for subset selection  under preferences: 
An instance consists of a set $C$ of candidates, a target size $k$, and a set of voters, where each voter $i$ approves a subset $A_i\subseteq C$ of candidates.
The goal is to output a subset $W\subseteq C$ of $k$ candidates, called a \emph{committee}.
A central theme in the active literature on approval-based multiwinner voting is proportional representation \citep{LaSk22a,PeSk20a,ABC+16a}, which, informally speaking, prescribes that $x\%$ of the voters should decide on $x\%$ of the committee.  
This paper aligns with this line of thinking, adopting the ideal that each voter should exert equal influence over the outcome.

In this paper, we broaden the research agenda around \emph{explaining and auditing} a given committee.
Previous work has largely focused on binary axioms that yield a yes-or-no answer to whether a committee satisfies a particular interpretation of proportional representation (see end of \Cref{sub:RW}). 
We aim to move beyond this coarse evaluation and towards finer-grained questions such as:
\begin{itemize}
    \item[(Q1)] How over- or underrepresented is each voter in the committee?
    \item[(Q2)] How did individual voters influence the selected candidates?
    \item[(Q3)] Which voters were important for a given candidate to get selected?
    \item[(Q4)] How can it be justified that some unselected candidate was not selected? 
\end{itemize}
Answering these questions enables us to evaluate and understand the outcome of any selection process, thereby  
improving the process's trustworthiness, even if the inner workings of the process are opaque or unknown. 
The audits our framework enables can both thoroughly justify outcomes that merely 
appear unfair and detect genuine unfairness  
at both the instance and voter-level.
Returning to the three scenarios above: 
In scenario~(1), our approach could clarify why a narrowly supported suggestion is shortlisted while other seemingly more popular suggestions are not. 
 In scenario~(2), it could audit whether displaying only high-rated reviews, despite the presence of many low-rated reviews, is driven by overrepresentation of high ratings, or instead because the issues raised in low-rated reviews are adequately covered by the selected high-rated ones.
  In scenario~(3), it could assess whether voters from a district that received no funded project are genuinely underrepresented or simply exercise equal influence through funded projects in other districts.
\footnote{Regarding scenario (1), such explanations would substantially extend recent work by \citet{DGG+26a}, who analyze whether deliberative summarization processes satisfy a global notion of proportional representation. Regarding scenario (2), note that platforms are prohibited from overrepresenting, or otherwise preferentially featuring, positive reviews \citep{ftc2022featuring}. Lastly, regarding scenario (3), such explanations would provide a principled basis for justifying participatory budgeting outcomes that have already been sought, e.g., by \citet{wieliczka2023green}. While our paper focuses on unit-cost projects, our general approach also extends to non-unit costs.}

To address Questions (Q1)--(Q4), we propose to explain outcomes via \emph{price systems}.
Informally, a price system assigns each voter a (not necessarily identical) budget, while each selected candidate must be purchased at a unit price; voters spend their budget on selected candidates they approve, and any unspent money becomes residual. We require that no unselected candidate can be afforded by the combined residuals of its supporters.\footnote{The general idea of price systems is closely related to the proportionality notion of \emph{priceability} \citep{PeSk20a,PPSS21a} (see end of \Cref{sec:intro}).}
In this framework, budgets quantify how influence is distributed across voters and hence highlight how close a given outcome is to the principle of equal influence (cf.\ Q1),  payments reveal how voters exercise their influence and how each selected candidate is backed (cf.\ Q2 \& Q3), and payments and residuals together justify why unselected candidates were not chosen, specifically, how their supporters spend their budgets~(cf.~Q4).

We initiate the axiomatic and algorithmic study of finding explanation systems for how a given committee represents a given approval profile (see \Cref{sec:explanation_systems} for an overview of our agenda). Note that choosing the ``right'' explanation is inherently ambiguous since many explanations may be consistent with a given committee, e.g.,  if two voters have identical approval sets, their roles are interchangeable.
At the same time, we want to develop general-purpose explanation rules that return an explanation for \emph{any} given committee.
To guide this choice in a principled way, we conduct an axiomatic analysis of price systems (\Cref{sec:axioms}) centered around the following three desiderata\footnote{Most of our axioms serve as sanity checks; satisfying a subset should not be interpreted as certifying an explanation's quality beyond doubt. 
In contrast, aiming for axioms that impose strong, universal restrictions proved extremely challenging, given the subjective, nuanced, and ambiguous nature inherent to price systems.}:
\begin{itemize}
    \item \emph{Structural coherence (\Cref{str:coherance}):} We capture the coherence of a price system in terms of respecting structural symmetries in the instance.
    \item \emph{Faithfulness axioms (\Cref{sec:faith}):}  We describe conditions that align the price system with the intended meaning of budgets, payments, and residuals.
    \item \emph{Alignment with proportionality (\Cref{sec:algin}):} We relate budget distributions to existing notions of proportional representation.
\end{itemize}

In \Cref{sec:alg}, we propose our main explanation rule, which we term \texttt{Continuous Phragm{\'e}n}.
The key idea is that voters continuously accumulate budget and spend it on candidates they approve, as long as doing so does not destabilize the resulting price system. We show that \texttt{Continuous Phragm{\'e}n} satisfies all our axioms except for a monotonicity axiom. This is not accidental: we prove an impossibility result showing that monotonicity is incompatible with three of our other axioms. Additionally, we demonstrate that approaches built directly on the existing notion of priceability are insufficient for our goals, as they fail many of our axioms. We also discuss a simple baseline method called \texttt{Equal Split}, which assumes that every selected candidate is equally backed by each of its supporters. While this produces easily interpretable price systems, we show that it fails to fully capture the intended meanings of payments and budgets described above.

In \Cref{sec:exp}, we empirically relate our explanations to the established proportionality axiom EJR+ (\Cref{sec:exp:ejr}), observing that disproportionality according to EJR+ is reflected through low minimum budgets of our computed price systems. Moreover, we demonstrate that our methods can meaningfully recover unequal influence in the selection process at the voter level~(\Cref{sec:exp:power}).

\paragraph{Related Work}\label{sub:RW}
We first discuss the relation of our work to the literature on proportional representation and then further related work on explainability and fairness. 
Our work connects and contributes to the rich literature on proportionality in approval-based multiwinner voting  \citep{BrPe23a,LaSk22a,ABC+16a,MPS23a}.
Although previous research suggests that binary proportionality axioms may lose discriminative power in practice \citep{BGP26a,BFNK19a}, quantitative proportionality measures remain rare \citep{BBC+24a,Skow21a, BeFr26a}.  
Like prior work on proportionality, our explanation systems can certify the proportionality of an outcome. 
However, our approach is substantially more fine-grained. 
We aim to quantify representation at both the instance and voter levels and to examine how voters’ influence is exercised across the selected candidates.
Closest to our work is the binary proportionality notion of priceability, which uses price systems as certificates for the proportionality of an outcome \citep{PeSk20a,PPSS21a}. While we draw inspiration from this approach, we show in \Cref{sec:app_price,sub:price} that the price systems from priceability are not suitable as explanations in our sense.  
Another related work is by \citet{HKMS24a}, who study a game-theoretic model of committee selection via so-called budgeting games. While these budgeting games also lead to price systems, the focus of \citeauthor{HKMS24a} lies in characterizing desirable committees as the equilibria of budgeting games, whereas our approach is applicable to any committee.

More broadly, our work conceptually fits within broader efforts on explainable AI and, specifically, explainable mechanism design \citep{SSK22a}. 
One line of work explains single-winner outcomes as the unique survivor after repeatedly applying a computed set of natural axioms \citep{BoEn20a,NBE22a,BEH22a,CaEn16a,PPPZ20a}. 
A second line studies the margin of victory \citep{BBFN21a,BFJ+24b,Xia12a}, quantifying how close a candidate is to being (un)selected under a given rule in various voting settings. Related to the margin of victory is also the concept of minimal supports, used by \citet{CGM26a} to provide minimal explanations for why the winner of a tournament won.
A third line quantifies voter influence as the probability that a voter or coalition is pivotal in an election under some voting rule generated according to some distribution \citep{Xia23a,FKKN11a,Xia23b}.
However, unlike our work, all three previous lines of work do not provide us with an explanation for how an outcome represents a profile.

Finally, our work relates to fairness in subset selection problems such as text summarization \citep{DBLP:conf/www/ShandilyaGG18,DBLP:conf/www/KeswaniC21} and hiring \citep{DBLP:conf/fat/RaghavanBKL20}. These works typically operationalize fairness through predefined sensitive attributes, whereas our approach captures fairness semantically via preferences.

\section{Preliminaries and Explanation Systems}\label{sec:explanation_systems}

\subsection{General Definitions}
  Let $V$ be a set of $n$ \emph{voters} and $C$ a set of \emph{candidates}. Each voter $i\in V$ is equipped with an \emph{approval set} $A_i \subseteq C$ containing all candidates approved by voter $i$. An \emph{approval profile} $A\coloneqq (A_i)_{i\in V}$ contains each voter's approval set.     Note that $\app$ is the set of all possible approval profiles. 
  For two approval profiles $A\in \app$ and $A'\in \mathcal{P}(C')^{V'}$ with $V\cap V'=\emptyset$, we let $A+A' \in \mathcal{P}(C \cup C')^{V \cup V'}$ denote the concatenation of $A$ and $A'$. 
  We say that an approval profile is \emph{unanimous} if $A_i=A_j$ for all $i,j\in V$ and call a candidate $c \in C$ \emph{unanimously approved} if $c \in A_i$ for all voters $i \in V$. 
  For $c\in C$, we write $V[c]\coloneqq \{i \in V \mid c \in A_i\}$ to denote the \emph{supporters} of candidate $c$ and refer to $|V[c]|$ as $c$'s \emph{approval score}.\footnote{To simplify exposition, we assume in the following that $V[c]\neq \emptyset$ for all $c\in C$ and that $A_i\neq \emptyset$ for all $i\in V$.} Candidates with maximum approval score are \emph{approval winners}.  We refer to non-empty subsets of candidates as \emph{committees}.\footnote{Note that we do not constrain the committee size; our approach creates explanations for any possible given committee.} For an approval profile $A\in \app$, a committee $W\subseteq C$ is called \emph{Pareto optimal} if there is no committee $W'\subseteq C$ with $\lvert W'\rvert = \lvert W\rvert$ such that $|A_i\cap W|\leq|A_i\cap W'|$ for all $i\in V$ and $|A_i\cap W|<|A_i\cap W'|$ for some $i\in V$.
  For an approval profile $A\in \app$, a set of voters $V'\subseteq V$, and set of candidates $C'\subseteq C$, we write $A|_{V',C'}$ to denote the restriction of $A$ onto $V'$ and $C'$, i.e., $A|_{V',C'}\coloneqq(A_i \cap C')_{i\in V'}$. 
  For two integers $x,y\in \mathbb{N}$, we write $[x,y]\coloneqq\{x,x+1,\dots,y\}$ as well as $[x] \coloneqq [1,x]$. 

A special class of approval profiles is that of laminar profiles (see \Cref{fig:lp-sub1}):
\begin{definition}[\citep{PeSk20a}\protect\footnote{Our definition differs slightly, as  \citet{PeSk20a} define such profiles with respect to a fixed committee size. }]\label{def:laminar}
    We call an approval profile $A\in \app$ \emph{laminar} if
    \begin{enumerate*}[label=(\roman*)]
      \item the profile is unanimous, or 
      \item there exists a unanimously approved candidate $c \in C$ and $A|_{V,C \setminus \{c\}}$ is  laminar, or 
      \item the profile $A$ can be written as the union of two laminar profiles with disjoint sets of candidates and voters, i.e.,  there exist partitions $V_1  \uplus V_2=V$ and $C_1  \uplus C_2=C$ so that $A=A|_{V_1,C_1}+A|_{V_2,C_2}$ and both $A|_{V_1,C_1}$ and $A|_{V_2,C_2}$ are laminar.
    \end{enumerate*}    
\end{definition}

\subsection{Price Systems and Explanation Rules: Formal Definitions} \label{ps:def}

  The central concept of our paper is that of a price system. We interpret price systems as explanation systems that are defined with respect to a given approval profile and committee. 
  
  \begin{definition}[Price System and Residual Stability]\label{def:pricesystem}
      A \emph{price system} $\mathbf{ps} = (V,W, p)$ for an approval profile $A\in \app$ and committee $W\subseteq C$ contains the voters $V$, committee $W$, and a \emph{payment function} $p\colon (V \times W) \cup \{(i,i)\mid i\in V\}  \to \mathbb{R}_{\ge 0}$. For each voter $i\in V$ and candidate $c\in W$, $p(i,c)$ specifies the non-negative spending of $i$ towards $c$ and $p(i,i)$ the voter's remaining budget, where:
  \begin{itemize}
      \item  For each voter $i\in V$, it holds that $p(i,c)=0$ for all $c\in W\setminus A_i$, i.e., every voter only spends money on candidates in $W$ that they approve.
      \item For $c\in W$,  we require $\sum_{i \in V} p(i,c)=1$, i.e., $c$ is bought at a \emph{price} of $1$.
  \end{itemize}

    Following the priceability definition by \citet{PeSk20a}, we say that a price system $\mathbf{ps} = (V,W, p)$ for an approval profile $A\in \app$ and committee $W\subseteq C$ is \emph{residual-stable} if  for all $c \in C\setminus W$ it holds that $\sum_{i \in V[c]} p(i,i)\leq 1$.
  \end{definition}

  Under our assumption that each candidate is approved by at least one voter, a residual-stable price system exists for any committee. For instance, set $p(i,c) = \frac{1}{\lvert V[c]\rvert}$ for each $c \in W$ and $i \in V[c]$ and $p(i,i)=0$ for each $i\in V$. 
  For a voter $i\in V$, we call $b_i \coloneqq p(i, i) + \sum_{c \in W} p(i, c)$ the \emph{budget} of $i$, $r_i\coloneqq p(i, i)$ the  \emph{residual} of $i$, and $p_i \coloneqq \sum_{c\in W} p(i,c)$ the \emph{total payment} of $i$; note that $b_i=p_i+r_i$.  We call a price system $\mathbf{ps}=(V,W, p)$ \emph{budget-uniform} if all voters have the same budget, i.e., $b_i=b_j$ for all $i,j\in V$. See \Cref{fig:lp} for examples of budget-uniform, residual-stable price systems.

  Given an approval profile $A\in \app$ and committee $W\subseteq C$, an \emph{explanation rule} $\mathfrak{E}$ maps each pair $(A, W)$ to a set of price systems $\mathfrak{E}(A,W)$.  An explanation rule is \emph{resolute} if $|\mathfrak{E}(A,W)|=1$ for every pair of approval profile $A\in \app$ and committee $W\subseteq C$. For notational convenience, when $\mathfrak{E}$ is resolute, we will sometimes write $\mathfrak{E}(A,W)=\mathbf{ps}$ instead of $\mathfrak{E}(A,W)=\{\mathbf{ps}\}$.
  We will define some axioms on price systems (e.g., 1-stability) and will say that an explanation rule satisfies an axiom if all price systems returned by the explanation rule are guaranteed to satisfy the axiom. Throughout the paper, we will interpret price systems as concrete explanations for a fixed approval profile and committee $(A,W)$, while explanation rules map each $(A,W)$ to its price system(s). 

\subsection{Explanation Systems: Intuition and Discussion}\label{sec:expl-intu}
Suppose we are given an approval profile $A\in\app$ and a committee $W\subseteq C$ selected by some algorithm or a human decision-maker. We want to understand how and in what way the selected candidates are backed by and represent the preferences of the voters (see Q1-Q4 in \Cref{sec:intro}). 
For this, we seek to construct an explanation of how the committee relates to the approval profile. Following the ideal of proportional representation, our explanations will be grounded in the notion of influence: for each voter $i$, an explanation should indicate (i) how much influence $i$ can be seen as having over the selected committee, and (ii) how this influence is exercised across candidates.

We are interested in finding price systems that address these two points through their three components (notably, as discussed later, doing so turns out to be an intricate task):
\begin{description}
    \item[Budget] The budget $b_i$ of voter $i$ should quantify the \emph{influence} that $i$ could, in principle, exert on the outcome. It can be interpreted as the voter's \emph{power}. Note that the budgets are not fixed externally, but derived from the selected price system.
    \item[Payments] Payments should quantify how voters exercise their influence. The payment $p(i,c)$ should  represent voter $i$'s contribution to the selection of candidate $c$, while the total payment $p_i$ measures $i$'s overall exercised influence.
    \item[Residual] The residual $r_i$ of voter $i$ should represent $i$'s \emph{unexercised influence}. Residuals are essential to account for heterogeneous approval sets: for instance, a voter who approves no candidate in $W$ need not be \emph{powerless}; rather, the candidates they approve may simply lack sufficient support from other voters. Residual stability enforces that for any unselected candidate $c \notin W$ the supporters of $c$ do not have sufficient influence to strictly afford $c$.
\end{description}

A well-designed price system 
implementing these intuitions enables several forms of analysis, which provide answers to our questions Q1-Q4 from \Cref{sec:intro}.
\emph{Power distribution:} The budgets $(b_i)_{i\in V}$ provide a concise view of how influence is distributed across voters and which voters are under- or overrepresented in the committee (cf.\ Q1). In particular, (approximate) budget-uniformity certifies that the committee satisfies (approximate) proportional representation. \emph{Attribution of support:} For a given voter $i$ the payments $(p(i,c))_{c\in W}$ describe how $i$'s influence is attributed across the selected candidates. This reveals which candidates are selected due to the voters' support~(cf.~Q2). \emph{Justifying~selected~candidates:} For each  candidate $c\in W$, the incoming payments $(p(i,c))_{i\in V}$ constitute an explanation for why the candidate is bought and because of whose support the candidate got selected. Moreover, the magnitude of payments indicates how critical each supporter is to the candidate's selection (cf.\ Q3). \emph{Justifying unselected candidates:} For each  unselected candidate $c^*\notin W$, residual stability together with the payments $(p(i,\cdot))_{i\in V[c^*]}$ show how the influence of the supporters of $c^*$ is distributed in the committee and help explain why these candidates were selected instead of $c^*$  (cf.\ Q4).

\subsubsection{Relation to Proportionality and Majoritarianism.}
Note that the explanations we seek are inherently non-majoritarian.
In our framework, a voter’s budget must be split across the selected candidates they approve, rather than being reusable in full for each approved candidate.
By contrast, a majoritarian explanatory approach would assign weights to voters and justify $W$ as the set of candidates with the highest weighted approval scores.

In fact, residual-stable price systems are closely linked to proportionality. Recall the established $\alpha$-PJR+ axiom for $\alpha \ge 1$ \citep{BrPe23a}:
  A committee $W$ satisfies $\alpha$\nobreakdash-PJR+ if for every $\ell \in [\lvert W \rvert]$ and group $V' \subseteq V$ of voters with $\lvert V'\rvert \ge \alpha \cdot \ell\frac{n}{\lvert W \rvert}$ it holds that
  $\left\lvert \bigcup_{i \in V'} (A_i \cap W) \right\rvert \ge \ell  \text{ or } \bigcap_{i \in V'} A_i \subseteq W$.
It is known that any committee admitting a budget-uniform residual-stable price system satisfies $1$-PJR+ \citep{BrPe23a}. Even further, the minimum voter budget is directly linked to PJR+; inspecting $\min_{i\in V} b_i$ in a residual-stable price system yields an explicit proportionality guarantee:
\begin{restatable}{proposition}{MinBudgetPJR}\label{prop:minBudgetPJR}
  Let $\mathbf{ps}=(V,W,p)$ be a residual-stable price system for some approval profile $A\in \app$ and committee $W\subseteq C$.
  For any $\alpha\ge 1$, if $\min_{i\in V} b_i > \frac{1}{\alpha} \cdot \frac{\lvert W\rvert}{n}$, then $W$ satisfies $\alpha$-PJR+.
\end{restatable}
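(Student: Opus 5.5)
We argue by contradiction, in the same spirit as the known result that budget-uniform residual-stable price systems certify $1$-PJR+. Suppose $\min_{i\in V} b_i > \frac{1}{\alpha}\cdot\frac{|W|}{n}$ but $W$ violates $\alpha$-PJR+. Then there are $\ell\in[|W|]$ and a group $V'\subseteq V$ with $|V'|\ge \alpha\ell\frac{n}{|W|}$ such that $T\coloneqq\bigcup_{i\in V'}(A_i\cap W)$ has $|T|<\ell$, i.e., $|T|\le \ell-1$. Moreover, some candidate $c^*\in\bigcap_{i\in V'}A_i$ satisfies $c^*\notin W$. We aim to show that the supporters of $c^*$ hold strictly more than $1$ unit of residual, which contradicts residual stability.

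\textbf{Total budget of the group.} Since every voter in $V'$ has budget strictly above $\frac{1}{\alpha}\frac{|W|}{n}$ (and $V'\neq\emptyset$ because $\ell\ge1$),
\[
\sum_{i\in V'} b_i > |V'|\cdot\frac{1}{\alpha}\cdot\frac{|W|}{n}\ \ge\ \alpha\ell\frac{n}{|W|}\cdot\frac{1}{\alpha}\cdot\frac{|W|}{n}=\ell .
\]

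\textbf{Total payment of the group.} A voter $i\in V'$ pays only for candidates in $A_i\cap W\subseteq T$. Each candidate in $T$ receives total payment exactly $1$ from all voters combined, and payments are non-negative. Hence
\[
\sum_{i\in V'} p_i \le \sum_{c\in T}\sum_{j\in V} p(j,c) = |T| \le \ell-1 .
\]

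\textbf{Residual of the group.} Subtracting gives $\sum_{i\in V'} r_i = \sum_{i\in V'} (b_i - p_i) > \ell-(\ell-1)=1$. Every voter in $V'$ approves $c^*$, so $V'\subseteq V[c^*]$, and residuals are non-negative. Therefore $\sum_{i\in V[c^*]} r_i \ge \sum_{i\in V'} r_i > 1$. Since $c^*\notin W$, this contradicts residual stability, and so $W$ satisfies $\alpha$-PJR+.

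\textbf{Main obstacle.} The argument is nearly routine. The one point needing care is the strictness needed to contradict the non-strict residual-stability condition $\sum_{i\in V[c^*]} r_i\le 1$. That strictness comes entirely from the strict inequality in the hypothesis on $\min_i b_i$, combined with the integrality fact $|T|\le\ell-1$.
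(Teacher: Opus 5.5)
Your proof is correct and is essentially the paper's argument. The paper argues directly: it assumes a common unselected candidate exists, uses residual-stability to get $\sum_{i\in V'} r_i\le 1$, and obtains $\bigl|\bigcup_{i\in V'}(A_i\cap W)\bigr|\ge \ell$ by integrality — the contrapositive of your contradiction argument, with the same budget and payment bounds; your explicit note that $V'\neq\emptyset$ is needed for the strict inequality is a detail the paper leaves implicit.
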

\begin{proof}
  Let $\ell \in [|W|]$. Let $k=|W|$ and let $V' \subseteq V$ be an $\alpha \cdot \ell$-large group of voters. If $\bigcap_{i \in V'} A_i \subseteq W$, we are done. Thus, assume there is $c \in \bigcap_{i \in V'} A_i$ with $c \notin W$. Since $\mathbf{ps}$ is residual-stable, it needs to hold that
  \begin{align}
    \label{prop:minBudgetPJR:eq:stable}
    \sum_{i \in V'} r_i \leq 1.
  \end{align} 

  We can rewrite
  \begin{align*}
    \sum_{i \in V'} r_i =  \sum_{i \in V'} (b_i - \sum_{c' \in W} p(i,c')) > \alpha \cdot \ell\frac{n}{k} \cdot \frac{k}{\alpha \cdot n}  - \sum_{i \in V'} \sum_{c' \in W} p(i,c') \ge \ell  - \left\lvert \bigcup_{i \in V'} A_i \cap W \right\rvert.
  \end{align*}
  With \Cref{prop:minBudgetPJR:eq:stable}, it follows that
  \begin{align*}
    \ell  - \left\lvert \bigcup_{i \in V'} (A_i \cap W) \right\rvert < 1
  \end{align*}
  and with this, $\left\lvert \bigcup_{i \in V'} (A_i \cap W) \right\rvert \ge \ell$, since the number of approvals is integer-valued.
\end{proof}
Conversely, the maximum budget is closely related to the extent of overrepresentation exhibited by (groups of) voters. From the definition of a price system, it follows that the minimum average number of distinct supporters per selected candidate, where the minimum is taken over all subsets of selected candidates, is lower bounded by the inverse of the maximum budget. This quantity coincides with the instance's maximin support (MMS) value \citep{DBLP:conf/aft/CevallosS21,DBLP:journals/mp/FernandezGFB24}. Thereby, a low maximum budget certifies that every subset of selected candidates is backed by a sufficiently large group of voters:
  \begin{observation}
    Let $\mathbf{ps}=(V,W,p)$ be a price system for an approval profile $A\in \app$ and committee $W\subseteq C$. Then, $\min_{\emptyset \subset C^*\subseteq W}\frac{|\bigcup_{c\in C^*} V[c]|}{|C^*|}\geq \frac{1}{\max_{i\in V} b_i}$. 
  \end{observation}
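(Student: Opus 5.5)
The plan is a double-counting argument on the money flowing into a fixed set of selected candidates. Fix an arbitrary nonempty $C^*\subseteq W$ and set $U\coloneqq\bigcup_{c\in C^*}V[c]$. It suffices to show $|C^*|\le |U|\cdot\max_{i\in V} b_i$. Dividing by $|C^*|\cdot\max_{i\in V}b_i$ then gives the claim for this $C^*$. Since $C^*$ was arbitrary, the bound also holds for the minimum.

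First I compute the total payment into $C^*$. By the price condition of \Cref{def:pricesystem}, $\sum_{i\in V}p(i,c)=1$ for each $c\in C^*$. Summing over $C^*$ gives
\[
  |C^*| \;=\; \sum_{c\in C^*}\sum_{i\in V}p(i,c).
\]
Next I restrict to $U$, using the first condition of \Cref{def:pricesystem}: a voter $i$ pays for $c$ only if $c\in A_i$, that is, only if $i\in V[c]$. So for $c\in C^*$, every $i$ with $p(i,c)>0$ lies in $V[c]\subseteq U$. Swapping the order of summation then gives
\[
  |C^*| \;=\; \sum_{i\in U}\sum_{c\in C^*}p(i,c)\;\le\;\sum_{i\in U}\Bigl(\sum_{c\in W}p(i,c)+p(i,i)\Bigr)\;=\;\sum_{i\in U}b_i\;\le\;|U|\max_{i\in V}b_i .
\]
The first inequality holds because all payments are nonnegative. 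This yields the desired inequality.

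The only delicate point is that $\max_i b_i>0$, which is needed for the right-hand side to make sense. It holds because $W$ is nonempty: each $c\in W$ receives a total payment of $1$, so some voter has a positive budget. No other obstacle is expected; the argument is pure bookkeeping.
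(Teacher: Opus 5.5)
Your proof is correct and is the argument the paper has in mind. The paper gives no written proof; it only says the bound follows from the definition of a price system. Your double counting makes this explicit: each candidate in $C^*$ costs one unit, paid only by voters in $\bigcup_{c\in C^*}V[c]$, each of whom has budget at most $\max_{i\in V} b_i$.
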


Another perspective on voters’ budgets is to interpret them as weights: any price system induces a weighted approval profile in which voter $i$ has weight $b_i$.
Under this interpretation, we search for budgets such that the given committee $W$ fulfills the ideal of proportional representation on the induced weighted profile.
Notably, from the work of \citet{BrPe23a} it follows that if a price system satisfies residual-stability, then in the induced weighted instance, $W$ satisfies weighted~PJR+.

\subsubsection{What makes a good explanation? }

A given approval profile and committee typically admit many (residual-stable) price systems, with some unavoidable ambiguity depending on which aspects of the profile one wishes to emphasize.
Formalizing what constitutes a ``good'' explanation is therefore an extremely delicate task; unsurprisingly so,  given that even the simpler task of certifying proportional representation has led to a rich literature of proportionality axioms. 

We evaluate price systems through a collection of desiderata, formalized as axioms in \Cref{sec:axioms}.

\begin{itemize}
\item \textbf{Structural coherence.} Price systems should be interpretable and well-behaved. Differences in budgets or payments must reflect meaningful differences, not arbitrary algorithmic choices. We formalize this through equal treatment of equals (\Cref{def:eqtev}) and symmetry (\Cref{def:symmetry}), which require symmetric voters to receive symmetric budgets and payments. We also characterize how price systems should look on laminar profiles (laminar-coherence, \Cref{def:lnice}).
\item \textbf{Faithful representation of influence.}    Budgets, payments, and residuals should serve the intended reading  of influence, enabling the analyses described above.
We begin with \emph{1-stability} (\Cref{def:one_w_costStable}), a satisfiable strengthening of residual-stability yielding more meaningful budgets, payments, and residuals.
  We then add axioms targeting (i) payments (single-winner-payment-responsiveness, \Cref{def:single_winner_payments}),
  (ii) residuals (budget-averaging, \Cref{def:budget_averaging}), and (iii) the impact of deleting approvals on budgets (monotonicity, \Cref{def:monotonicity}).
\item \textbf{Alignment with proportionality.} The distribution of budgets should be indicative of the level of proportionality the committee provides: 
  If $W$ satisfies a strong proportionality notion for $A$, then the induced price system should be budget-uniform (laminar-proportional-uniformity, \Cref{def:lpu}; perfect-coverage-uniformity, \Cref{pcu}; and perfect-symmetry-uniformity, \Cref{psu}). If $W$ exhibits a clear lack of proportionality, then the price system should reflect this through unequal budgets (unproportional-responsiveness, \Cref{def:unprop_responsive}).

  \end{itemize}

Most of our axioms should be understood as \emph{sanity checks}: violating them suggests that a price system is misleading or hard to interpret. Conversely, satisfying any single axiom should not be viewed as certifying that an explanation is unquestionable beyond doubt. 

\subsection{Priceability}\label{sub:price}

Price systems are closely related to the proportionality notion of \emph{priceability} introduced by \citet{PeSk20a}. In our terminology, a committee $W$ is priceable for an approval profile $A$ if there exists a budget-uniform and residual-stable price system for $(A,W)$.\footnote{In \citet{PeSk20a}, voters have fixed unit budgets while there is a single price to buy candidates, which may vary between instances. This is equivalent to our normalization in which each selected candidate has price~$1$: if the price to buy candidates is $\rho>0$, dividing all payments and residuals by $\rho$ yields a price-$1$ system (and conversely).} While priceability has a strong explanatory flavor, it has almost exclusively been studied as a binary proportionality notion and we are not aware of detailed analyses of the price systems output by priceability. For our purposes, however, price systems witnessing priceability are a natural starting point (we address the fact that they do not exist for all committees below).
However, budget-uniform and residual-stable price systems are not guaranteed to yield a meaningful or interpretable explanation: 

\begin{figure}
    \centering
    
    \begin{subfigure}[b]{0.24\textwidth}
        \centering
        \resizebox{\textwidth}{!}{\begin{tikzpicture}
        [yscale=0.4,xscale=0.9,
        voter/.style={anchor=south}]
        
        \foreach \i in {1,...,4}
            \node[voter] at (\i-0.5, -1.25) {$\i$};

        \draw[fill=magenta!\clrstr] (0,0) rectangle (4,1);
        \draw[fill=magenta!\clrstr] (0,1) rectangle (4,2);
        \node at (2,0.5) {$c_{1}$};
        \node at (2,1.5) {$c_{2}$};

        \draw[fill=magenta!\clrstr] (0,2) rectangle (2,3);
        \draw[fill=magenta!\clrstr] (0,3) rectangle (2,4);
        \draw[fill=white!\clrstr] (0,4) rectangle (2,5);
        \node at (1,2.5) {$c_{3}$};
        \node at (1,3.5) {$c_{4}$};
        \node at (1,4.5) {$c_{5}$};

        \draw[fill=white!\clrstr] (2,2) rectangle (4,3);
        \draw[fill=white!\clrstr] (2,3) rectangle (4,4);
        \draw[fill=white!\clrstr] (2,4) rectangle (4,5);
        \node at (3,2.5) {$c_{6}$};
        \node at (3,3.5) {$c_{7}$};
        \node at (3,4.5) {$c_{8}$};
        \end{tikzpicture}}
        \caption{Laminar profile}
        \label{fig:lp-sub1}
    \end{subfigure} \quad 
    \begin{subfigure}[b]{0.33\textwidth}
        \centering
        \begin{tikzpicture}[
        voter/.style={circle, thick, draw=black, minimum size=5mm, inner sep=1pt, font=\scriptsize},
        candidate/.style={rectangle, draw=black, thick, minimum size=5mm, inner sep=1pt, font=\scriptsize},
        payment/.style={-Stealth, thick},
        every node/.style={font=\tiny},
        node distance=0.8cm
        ]
        \node[voter] (v1) {$1$};
        \node[voter, right=of v1] (v2) {$2$};
        \node[voter, right=of v2] (v3) {$3$};
        \node[voter, right=0.4cm of v3] (v4) {$4$};
        \node[candidate, below=0.6cm of v1] (c1) {$c_3$};
        \node[candidate, right=of c1] (c2) {$c_4$};
        \node[candidate, right=of c2] (c3) {$c_1$};
        \node[candidate, right=0.4cm of c3] (c4) {$c_2$};
        \draw[payment] (v1) -- (c1) node[midway, left] {$0.7$};
        \draw[payment] (v1) -- (c2) node[pos=0.1, above right, inner sep=0.1pt] {$0.1$};
        \draw[payment] (v2) -- (c1) node[pos=0.1, above left, inner sep=0.1pt] {$0.3$};
        \draw[payment] (v2) -- (c2) node[midway, right] {$0.9$};
        \draw[payment] (v2) -- (c3) node[midway, right] {$0.05$};
        \draw[payment] (v3) -- (c3) node[midway, right] {$0.95$};
        \draw[payment] (v4) -- (c4) node[midway, right] {$1$};
        \draw[payment, loop above, looseness=10, min distance=4mm] (v1) to node[above] {$0.45$} (v1);
        \draw[payment, loop above, looseness=10, min distance=4mm] (v2) to node[above] {$0$} (v2);
        \draw[payment, loop above, looseness=10, min distance=4mm] (v3) to node[above] {$0.3$} (v3);
        \draw[payment, loop above, looseness=10, min distance=4mm] (v4) to node[above] {$0.25$} (v4);
        \end{tikzpicture}
        \caption{Price system 1}
        \label{fig:lp-sub2}
    \end{subfigure}
    \hspace{0.25cm}
    \begin{subfigure}[b]{0.33\textwidth}
        \centering
        \begin{tikzpicture}[
        voter/.style={circle, thick, draw=black, minimum size=5mm, inner sep=1pt, font=\scriptsize},
        candidate/.style={rectangle, draw=black, thick, minimum size=5mm, inner sep=1pt, font=\scriptsize},
        payment/.style={-Stealth, thick},
        every node/.style={font=\tiny},
        node distance=0.8cm
        ]
        \node[voter] (v1) {$1$};
        \node[voter, right=of v1] (v2) {$2$};
        \node[voter, right=of v2] (v3) {$3$};
        \node[voter, right=of v3] (v4) {$4$};
        \node[candidate, below=0.6cm of v1] (c1) {$c_3$};
        \node[candidate, right=of c1] (c2) {$c_4$};
        \node[candidate, right=of c2] (c3) {$c_1$};
        \node[candidate, right=of c3] (c4) {$c_2$};
        \draw[payment] (v1) -- (c1) node[midway, left] {$0.5$};
        \draw[payment] (v1) -- (c2) node[pos=0.1, above right, inner sep=0.1pt] {$0.5$};
        \draw[payment] (v2) -- (c1) node[pos=0.1, above left, inner sep=0.1pt] {$0.5$};
        \draw[payment] (v2) -- (c2) node[midway, right] {$0.5$};
        \draw[payment] (v3) -- (c3) node[midway, left] {$0.5$};
        \draw[payment] (v3) -- (c4) node[pos=0.1, above right, inner sep=0.1pt] {$0.5$};
        \draw[payment] (v4) -- (c3) node[pos=0.1, above left, inner sep=0.1pt] {$0.5$};
        \draw[payment] (v4) -- (c4) node[midway, right] {$0.5$};
        \draw[payment, loop above, looseness=10, min distance=4mm] (v1) to node[above] {$0$} (v1);
        \draw[payment, loop above, looseness=10, min distance=4mm] (v2) to node[above] {$0$} (v2);
        \draw[payment, loop above, looseness=10, min distance=4mm] (v3) to node[above] {$0$} (v3);
        \draw[payment, loop above, looseness=10, min distance=4mm] (v4) to node[above] {$0$} (v4);
        \end{tikzpicture}
        \caption{Price system 2}
        \label{fig:lp-sub4}
    \end{subfigure}
     \caption{Laminar profile and  budget-uniform, residual-stable price systems.
In \Cref{fig:lp-sub1}, voters are placed on a line and each candidate is shown as a box containing exactly the voters who approve it; selected candidates are highlighted in red.
Price systems are depicted as directed bipartite graphs with voter circular vertices (top) and candidate squared vertices (bottom); arc labels indicate payments, and voter self-loops indicate residuals.}
    \label{fig:lp}
\end{figure}
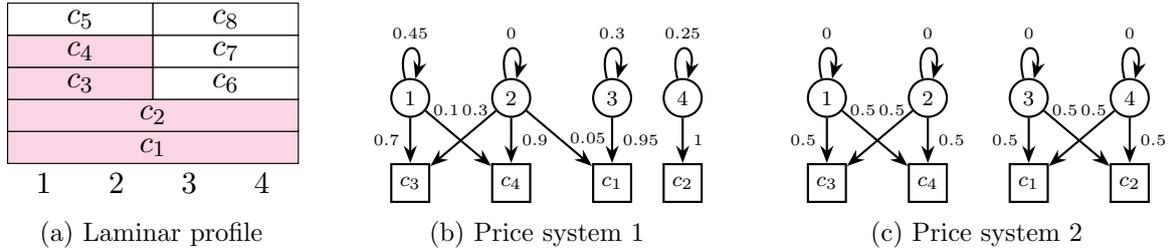

\begin{example}\label{ex:priceability}
Consider the laminar profile and committee in \Cref{fig:lp-sub1}.
\Cref{fig:lp-sub2,fig:lp-sub4} display two budget-uniform, residual-stable price systems for this instance, any of which could be returned as a priceability witness.
The system in \Cref{fig:lp-sub2} is particularly counterintuitive: although voters $1$ and $2$ (and likewise voters $3$ and $4$) have identical approval sets, their residuals and payments differ substantially.
Thereby, the system offers little insight into why the selected candidates were chosen and which voters are responsible for their selection.
\Cref{fig:lp-sub4} treats identical voters symmetrically; yet, we argue that in this instance, a uniform budget assignment fails to reflect the underlying structure.
If all voters had the same influence, it is unclear why one should select both $c_3$ and $c_4$, rather than one of $c_3$, $c_4$, and $c_5$, and one of $c_6$, $c_7$, and $c_8$.
Further, the fact that multiple qualitatively different price systems can witness priceability is problematic.
One could impose additional tie-breaking criteria, e.g., minimizing budgets, to rule out extreme witnesses such as \Cref{fig:lp-sub2}, but the issue of misrepresenting the strength and distribution of influence remains.
\end{example}

Priceability naturally only yields explanations for priceable committees. A natural always-satisfiable relaxation of priceability is to drop budget-uniformity, allowing voters to have different budgets, while continuing to impose residual-stability, i.e., search for a residual-stable price system. However, this relaxation leaves substantial freedom in how budgets are allocated. In particular, voters with identical approval sets can be assigned vastly different budgets, creating artificial influence differences that are not justified by the profile.  As a first explanation rule and natural baseline, we consider \texttt{Approximate Priceability} that selects, among all residual-stable price systems, those whose budgets are as equal as possible, i.e., that minimize the total absolute differences between voters’ budgets  $\sum_{i,j\in V} |b_i-b_j|$ (see \Cref{sec:app_price}). 

In Appendix~\ref{app:price}, we discuss an established extension of priceability and propose a new strengthening of priceability. \citet{PPSS21a} introduce \emph{stable priceability}, yet a stable price system might fail to exist already on small profiles. We propose \emph{weak stability}, a relaxation of \citet{PPSS21a}'s notion that only rules out deviations that would lead to improvements for all involved voters if executed. We show that weakly stable price systems may still fail to exist, and that deciding whether a given price system is weakly stable is \textsc{coNP}-complete. In \Cref{sub:1-stability}, we introduce a satisfiable strengthening of residual stability that is arguably far less demanding than both stability and weak stability.

\section{Axioms}\label{sec:axioms}
In this section, we introduce the axioms we impose on our explanation rules, which aim to align them  with our previously described intuitions of what a ``good'' explanation should entail.

\subsection{Structural Coherence} \label{str:coherance}

Since we want to present computed price systems as explanations, we want them to be comprehensible and intuitive.
In particular, differences between voters should reflect meaningful structural differences in the instance, rather than arbitrary tie-breaking choices of the explanation rule.
A first step in this direction is to impose that voters with identical approval sets are treated identically: they should have the same budget and make the same payments to every selected candidate. 
\begin{axiom}[Equal Treatment of Equals]\label{def:eqtev}
A price system $\mathbf{ps}=(V,W, p)$ for an approval profile $A\in \app$ and committee $W\subseteq C$ satisfies \emph{equal treatment of equals} if for any two voters $i,j\in V$ with $A_i = A_j$ it holds that
\begin{enumerate*}[label=(\alph*)]
\item $p(i,c)=p(j,c)$ for all $c \in W$ and
\item $p(i,i)=p(j,j)$.
\end{enumerate*}
\end{axiom}

However, equal treatment of equals alone is insufficient to rule out price systems that make arbitrary decisions not indicative of structural differences:

\begin{restatable}[Brick-Wall Instance]{example}{brickWallInstance}\label{ex:brick_wall}
Let $V=\{x_1,x_2\}\uplus \{y_1,y_2\}\uplus \{i^*\}$, $C=\{c_1,\dots,c_6\}$, and $W=\{c_1,\dots,c_4\}$.
Voters $x_1$ and $x_2$ approve  $\{c_2,c_4,c_6\}$, voters $y_1$ and $y_2$ approve  $\{c_1,c_3,c_5\}$, and voter $i^*$ approves all candidates.
\end{restatable}

In \Cref{ex:brick_wall}, equal treatment of equals forces $x_1$ and $x_2$ to be treated identically, as well as  $y_1$ and $y_2$.
Nevertheless, a price system may still treat the $x$-voters and $y$-voters differently, for instance, by letting $i^*$ contribute predominantly to the costs of $c_1$ and $c_3$. %
This is counterintuitive: the roles of $\{x_1,x_2\}$ and $\{y_1,y_2\}$ and of $\{c_1,c_3\}$ and $\{c_2,c_4\}$ are fully symmetric in the profile and the committee, so there is no structural reason why $i^*$ should contribute to some selected candidates but not others, or why the total payments of the $x$-voters should differ from those of the $y$-voters.

To rule out such artifacts, we require invariance under automorphisms of the instance.

\begin{definition}
A \emph{profile automorphism} for an approval profile $A\in \app$ is a pair of bijections $\sigma\colon V \to V$ and $\pi\colon C \to C$ such that $A_{\sigma(i)}=\pi(A_i) := \{\pi(c) \mid c \in A_i\}$ for every $i\in V$.
For an approval profile $A\in \app$ and committee $W\subseteq C$, an \emph{automorphism} of $(A,W)$ is a profile automorphism $(\sigma,\pi)$ for $A$ that satisfies $W=\pi(W):= \{\pi(c) \mid c \in W\}$.
\end{definition}
For \Cref{ex:brick_wall}, one automorphism for $(A,W)$ has $\sigma(x_i)=y_i$, $\sigma(y_i)=x_i$ for $i \in \{1,2\}$, $\sigma(i^*)=i^*$, $\pi(c_1)=c_2$, $\pi(c_2)=c_1$, $\pi(c_3)=c_4$, $\pi(c_4)=c_3$, $\pi(c_5)=c_6$, and $\pi(c_6)=c_5$. 

Our next axiom requires that a price system respects all such symmetries.
\begin{axiom}[Symmetry]\label{def:symmetry}
A price system $\mathbf{ps}=(V,W, p)$ for an approval profile $A\in \app$ and committee $W\subseteq C$ satisfies \emph{symmetry} if, for every $i\in V$ and $c \in W$ and every automorphism $\sigma\colon V \to V$ and $ \pi\colon C \to C$ of $(A,W)$, it holds that $
p(i, i) = p(\sigma(i), \sigma(i))$ and $p(i, c) = p(\sigma(i), \pi(c))$.

\end{axiom}
In \Cref{ex:brick_wall}, symmetry implies that $x_1,x_2,y_1,y_2$ have the same residual. Further, it enforces that the $x$-voters both pay an equal amount to $c_2$ and $c_4$ (via the automorphism mapping $c_2$ to $c_4$) and that the payment from each $x$-voter to each of $c_2$ and $c_4$ is the same as the payment from each $y$-voter to each of $c_1$ and $c_3$.
It also implies that $i^*$ contributes equally to each selected candidate (e.g., via automorphisms that fix $i^*$ while permuting candidates).

It is easy to see that symmetry implies equal treatment of equals: for any two voters $i$ and $j$ with $A_i=A_j$,  the automorphism that maps every candidate to itself, every voter from $V\setminus \{i,j\}$ to itself, $i$ and $j$ to each other ensures they are treated identically.

\begin{observation}
    Every price system that satisfies symmetry satisfies equal treatment of equals. 
\end{observation}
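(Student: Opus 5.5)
The plan is to verify directly that a suitable transposition is an automorphism of $(A,W)$, and then read off the conditions of equal treatment of equals from \Cref{def:symmetry}. Fix a price system $\mathbf{ps}=(V,W,p)$ for $(A,W)$ that satisfies symmetry, and fix two voters $i,j\in V$ with $A_i=A_j$. If $i=j$ there is nothing to show, so assume $i\neq j$.

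Define $\sigma\colon V\to V$ as the transposition that swaps $i$ and $j$ and fixes every voter in $V\setminus\{i,j\}$. Let $\pi\colon C\to C$ be the identity. Both maps are clearly bijections. We then check the profile automorphism condition $A_{\sigma(v)}=\pi(A_v)=A_v$ for every $v\in V$:
\begin{itemize}
\item for $v\notin\{i,j\}$ it holds trivially, since $\sigma(v)=v$;
\item for $v=i$ it holds because $A_{\sigma(i)}=A_j=A_i$;
\item for $v=j$ it holds by the symmetric argument.
\end{itemize}
Moreover $\pi(W)=W$ because $\pi$ is the identity. Hence $(\sigma,\pi)$ is an automorphism of $(A,W)$.

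Applying \Cref{def:symmetry} to the voter $i$ and this automorphism gives $p(i,i)=p(\sigma(i),\sigma(i))=p(j,j)$, which is condition (b). For every $c\in W$ it also gives $p(i,c)=p(\sigma(i),\pi(c))=p(j,c)$, which is condition (a).

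There is no real obstacle in this argument. The only point that needs care is checking that the transposition respects every voter's approval set, and this is exactly where the hypothesis $A_i=A_j$ is used.
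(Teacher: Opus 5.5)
Your proposal is correct and follows the paper's argument exactly: the automorphism that swaps $i$ and $j$, fixes all other voters, and is the identity on candidates, combined with the symmetry axiom, yields both conditions of equal treatment of equals. Your explicit check that this pair is an automorphism of $(A,W)$, using $A_i=A_j$, is precisely the detail the paper leaves implicit.
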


Finally, we turn to the special case of laminar profiles (cf. \Cref{def:laminar}).
For such profiles, there is a clear benchmark for what an ``ideal'' explanation should look like: selected candidates’ costs should be shared equally among their supporters, and voters with identical approval sets should have identical residual budgets.
Voters $i$ and $j$ from different ``subparties'' (i.e., profiles as in (iii) of \Cref{def:laminar}) intuitively contribute equal influence to any shared selected candidate $c$ they both approve, since their structural role for $c$'s selection is identical, as any alternative candidate that they disagree on has a smaller supporter set. Hence, on such profiles, differences in power between $i$ and $j$ manifest in which candidates are selected among those they disagree about (their subparties' ``own'' candidates), and should be reflected in the voters' total payments.
Unequal payments for shared candidates would either obscure genuine power disparities or introduce non-uniform payments without structural justification.
This leads to the following axiom:

  \begin{axiom}[Laminar-Coherence]\label{def:lnice}
   For a laminar approval profile $A\in \app$ and committee $W\subseteq C$,   a price system $\mathbf{ps}=(V,W, p)$ is \emph{laminar-coherent} if the following holds: 
    \begin{enumerate}
      \item The payment of each candidate is equally split among its supporters, i.e., for all $c\in W$, $p(i,c)=\frac{1}{|V[c]|}$ for all $i\in V[c]$ and $p(i,c)=0$ for all $i\in V\setminus V[c]$.
      \item Voters with the same approval set have the same residual, i.e., for each $i,j\in V$ with $A_i=A_j$, we have $r_i=r_j$. 
    \end{enumerate}
  \end{axiom}
On \Cref{fig:lp-sub1}, laminar-coherence implies, for example, that all four voters pay $0.25$ each to candidates $c_1$ and $c_2$, and that voters $1$ and $2$ each pay $0.5$ towards $c_5$ and $c_6$.

The laminar-coherence property clearly implies equal treatment of equals: the first condition ensures that voters with the same approval set make identical payments, while the second condition ensures their residuals are equal.
  \begin{observation}
      Let $\mathbf{ps}$ be a laminar-coherent price system for a laminar profile. Then, $\mathbf{ps}$ satisfies equal treatment of equals.
  \end{observation}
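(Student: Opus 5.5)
We check the two conditions of \Cref{def:eqtev} directly against the two conditions of \Cref{def:lnice}. Fix a laminar profile $A$, a committee $W$, and a laminar-coherent price system $\mathbf{ps}=(V,W,p)$. Let $i,j\in V$ be two voters with $A_i=A_j$.

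\emph{Condition (a), payments.} Fix any $c\in W$ and distinguish two cases.
\begin{itemize}
\item If $c\in A_i=A_j$, then $i,j\in V[c]$. The first condition of laminar-coherence then gives $p(i,c)=\frac{1}{|V[c]|}=p(j,c)$.
\item If $c\notin A_i=A_j$, then $i,j\in V\setminus V[c]$. The same condition gives $p(i,c)=0=p(j,c)$. This also agrees with the general requirement in \Cref{def:pricesystem} that voters pay nothing to non-approved candidates.
\end{itemize}
Hence $p(i,c)=p(j,c)$ for every $c\in W$.

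\emph{Condition (b), residuals.} By definition $p(i,i)=r_i$ and $p(j,j)=r_j$. The second condition of laminar-coherence states directly that $r_i=r_j$ whenever $A_i=A_j$.

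Both conditions hold, so $\mathbf{ps}$ satisfies equal treatment of equals. There is no real obstacle here. The only point to watch is that the first condition of laminar-coherence also covers non-supporters, which is needed for the case $c\notin A_i$. Laminarity itself is never used beyond the fact that the axiom is only stated for laminar profiles.
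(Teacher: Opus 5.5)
Your proof is correct and follows the same approach as the paper, which justifies the observation in one sentence: the first condition of laminar-coherence forces voters with identical approval sets to make identical payments, and the second forces their residuals to be equal. Your explicit case split on whether $c\in A_i$ just spells out that sentence.
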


\subsection{Faithful Representation of Influence}\label{sec:faith}
We present axioms that check whether a price system faithfully reflects our intended notion of influence.
We propose \emph{1-stability} (\Cref{def:one_w_costStable}), discuss its influence on budgets and payments, and introduce additional axioms on payments (single-winner-payment-responsiveness, \Cref{def:single_winner_payments}), and residuals (budget-averaging, \Cref{def:budget_averaging}).  
Additionally, in Appendix~\ref{app:mono}, we introduce monotonicity (\Cref{def:monotonicity}), an axiom that prescribes that a voter's budget should, relatively speaking, not increase after we delete their approval for a selected candidate. We show an impossibility result for monotonicity with three of our other axioms; none of our explanation rules will satisfy it. 

\subsubsection{Strengthening the Stability Criterion}\label{sub:1-stability}

As observed in \Cref{ex:priceability}, imposing residual-stability alone still permits counterintuitive price systems.
In this section, we focus on one specific source of this problem: the high degree of freedom in  assigning and balancing between residuals and payments it allows. To this end, we propose a satisfiable strengthening of residual-stability. 
Consider the following example: 

\begin{example}\label{ex:rs}
    Let $V=[2q]$, $C=\{c_1,c_2\}$, and $W=\{c_2\}$ where $c_1$ is approved by the voters in $V\setminus\{1\}$ and $c_2$ is approved by $[q]$. Let $p$ be the payment function with $p(i,i)=0$ and $p(i,c_2)=\frac{1}{q}$ for all $i\in [q]$ and $p(j,j)=\frac{1}{q}$ for all $j\in [q+1,2q]$. Then, $\mathbf{ps}=(V,W,p)$ is residual-stable.
\end{example}

There is a clear tension between our interpretation of budgets as voters’ influence and the price system $\mathbf{ps}$ in \Cref{ex:rs}. If all voters had an equal influence on the outcome, there would be no reason to select $c_2$ instead of $c_1$, which is the unique approval winner and approved by all supporters of $c_2$ except voter $1$. The reason why residual-stability still allows for this is that it does not impose that voters spend their budgets effectively. While stronger notions such as stability and weak stability were designed to address this issue, they do not always admit feasible price systems (cf. Appendix~\ref{sec:str_price} for an extended discussion).
We therefore introduce 1-stability. This axiom rules out the possibility of replacing a selected candidate $c' \in W$ with an unselected candidate $c \in C \setminus W$ in a way that allows deviating voters either to strictly improve their utility while using only their residual budget, or to maintain their utility while strictly reducing their total payment.
\begin{axiom}[1-Stability]\label{def:one_w_costStable}
  A price system $\mathbf{ps}=(V,W, p)$ for an approval profile $A\in \app$ and committee $W\subseteq C$ is \emph{1-stable} if it is residual-stable and there are no candidates $c\in C\setminus W$ and $c'\in W$ such that
  $
  \sum_{i\in V[c] \setminus V[c']} r_i + \sum_{i\in V[c']\cap  V[c] }  p(i,c') >1.
  $
\end{axiom}

In \Cref{ex:rs}, 1-stability rules out budget-uniform price systems: either voter $1$ needs to pay a larger share of candidate $c_2$'s cost than the other supporters, or all voters in $[q+1,2q]$ together must have a total budget of at most $1/q$ (otherwise, they form a violating coalition for $c_1$ together with the voters in $[q]\setminus \{1\}$).
1-stability is not only easy to verify but also always satisfiable. In particular, any residual-stable price system can be made 1-stable by setting all residuals to zero.

\begin{restatable}{observation}{OneStablePriceSystem}\label{obs:one_stable_price_system}
  For every approval profile $A\in \app$ and committee $W$, there exists a price system that is $1$-stable and one can check in polynomial time whether a given price system is 1-stable.
\end{restatable}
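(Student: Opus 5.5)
Both parts are handled directly: existence by an explicit zero-residual construction, and the polynomial-time check by brute-force enumeration of the finitely many constraints.

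\emph{Existence.} For each $c\in W$, set $p(i,c)=\frac{1}{|V[c]|}$ for every $i\in V[c]$ and $p(i,c)=0$ otherwise; set $p(i,i)=0$ for all $i\in V$. This is a valid price system:
\begin{itemize}
\item payments go only to approved selected candidates;
\item by the standing assumption $V[c]\neq\emptyset$, each $c\in W$ collects exactly $1$.
\end{itemize}
Since all residuals are zero, residual-stability holds trivially: $\sum_{i\in V[c]} r_i=0\le 1$ for every $c\notin W$.

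For the second condition of 1-stability, fix $c\in C\setminus W$ and $c'\in W$. The residual term $\sum_{i\in V[c]\setminus V[c']} r_i$ vanishes. The payment term satisfies
\[
\sum_{i\in V[c']\cap V[c]} p(i,c') \;\le\; \sum_{i\in V} p(i,c') \;=\; 1 .
\]
Hence the left-hand side of the 1-stability condition is at most $1$, so no violating pair exists.

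The same argument gives the remark preceding the observation: zeroing the residuals of any price system (and thus lowering the budgets) preserves validity and yields 1-stability. The argument uses only nonnegativity of payments and the unit price of $c'$.

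\emph{Polynomial-time verification.} Given $p$ explicitly, first check the basic well-formedness conditions:
\begin{itemize}
\item nonnegativity of all entries;
\item $p(i,c)=0$ for every $c\in W\setminus A_i$;
\item unit prices, i.e., $\sum_i p(i,c)=1$ for each $c\in W$.
\end{itemize}
Next, check residual-stability by computing $\sum_{i\in V[c]} r_i$ for each $c\in C\setminus W$; this takes $\bigO(|C|\cdot n)$ arithmetic operations. Finally, enumerate all $|C\setminus W|\cdot|W|\le |C|^2$ pairs $(c,c')$ and evaluate the defining sum for each in $\bigO(n)$ time, for $\bigO(|C|^2 n)$ operations overall.

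All computations are additions and comparisons of the input rationals, so the bit complexity stays polynomial. There is no real obstacle. The only point needing care is that the standing assumption $V[c]\neq\emptyset$ is what makes the equal-split construction well defined.
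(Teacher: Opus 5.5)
Your proof is correct and follows the paper's argument: equal-split payments with zero residuals bound every 1-stability sum by the unit price of $c'$, and verification enumerates all $|C\setminus W|\cdot|W|$ pairs. Your separate residual-stability check is a small but welcome addition, since the pair condition alone does not imply residual-stability and the paper's proof mentions only the pairs.
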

\begin{proof}
  Verifying whether a given price system is 1-stable can clearly be done in polynomial-time, as it suffices to check for all pairs $c \in C \setminus W$ and $c'\in W$ whether   $\sum_{i\in V[c] \setminus V[c']} r_i + \sum_{i\in V[c']\cap  V[c] }  p(i,c') \leq 1$.

  Consider the price system constructed by splitting the cost of each selected candidate $c\in W$ equally across the supporters, i.e., $p(i,c)=\frac{1}{|V[c]|}$ for all $i\in V[c]$.
  By setting the residuals $r_i=0$ for each voter $i\in V$, we obtain a 1-stable price system, as for all $c \in C \setminus W$ and $c'\in W$, we have
  \[
      \sum_{i\in V[c] \setminus V[c']} r_i + \sum_{i\in V[c']\cap  V[c] }  p(i,c') \leq
\sum_{i\in V[c']} p(i,c')  =  1.
  \]
\end{proof}

\subsubsection{Payments.}
Besides budgets, 1-stability also imposes constraints on payments; it moves them closer to their intended role of indicating which voters were crucial for getting a candidate selected:
\begin{example}\label{ex:1stab_payments}
    Let $V=\{1,2,3\}$, $C=\{c_1,c_2\}$, and $W=\{c_2\}$. Candidate $c_1$ is approved by voters $\{1,2\}$, and $c_2$ is approved by voters $\{2,3\}$. 
    The price system with $p(1,1)=\frac{2}{3}$, $p(2,2)=0$, $p(3,3)=\frac{1}{3}$, $p(2,c_2)=\frac{2}{3}$, and $p(3,c_2)=\frac{1}{3}$ is residual-stable and budget-uniform. However, it misrepresents the instance structure and who is critical for $c_2$'s selection: voter $2$, who approves both candidates, pays more for $c_2$ than voter $3$, who approves only $c_2$. 
    Note that the given price system violates 1-stability. In fact, 1-stability ensures that $p(3,c_2)\geq p(2,c_2)$ in any budget-uniform price system (since otherwise $p(2,c_2)>\frac{1}{2}$ and thus $r_1+p(2,c_2)>2\cdot \frac{1}{2}=1$, rendering $V[c_1]$ a violating coalition).
\end{example}
However, 1-stability alone does not ensure that payments meaningfully reflect different levels of influence. To illustrate, consider \Cref{ex:rs} where the only selected candidate $c_2$ is not an approval winner, i.e., the outcome is suboptimal in a way that suggests unequal influence. Intuitively, among the supporters of $c_2$, those who also approve some approval winner should pay less for $c_2$: these voters are indifferent between this suboptimal outcome and choosing an approval winner that would be selected under equal influence. Thus, as they are not the cause for this unfair selection, they should bear less of the cost of $c_2$. Nevertheless, there exist 1-stable price systems that ignore this distinction, for instance, the system that splits the cost of $c_2$ equally among its supporters and assigns zero residual to everyone. We therefore introduce the following axiom:
\begin{axiom}[Single-Winner-Payment-Responsiveness]\label{def:single_winner_payments}
A price system  $\mathbf{ps}=(V,W,p)$ for an approval profile $A\in \app$ and committee $W=\{c\}$ where $c$ is not an approval winner in $A$ satisfies \emph{single-winner-payment-responsiveness}, if for each pair of voters $i,j\in V[c]$ where $i$ does not approve an approval winner and $j$ approves some approval winner, we have 
$
p(i,c)>p(j,c).
$
\end{axiom}

In Appendix~\ref{sec:app-payments}, we discuss additional criteria for payments. Overall, identifying payment requirements that are broadly desirable and compatible with other desiderata is challenging: strengthening these requirements can easily introduce counterintuitive behavior from other perspectives.

\subsubsection{Residuals.}

Recall that while payments capture the influence voters actively exert, residuals represent unused influence that voters still hold but were unable to use. This may occur, for instance, because all candidates a voter approves are already in the committee, or because their unelected approved candidates lack support by other voters.
Residuals are essential to our framework. If all residuals were set to zero, differences in budgets would no longer provide a reliable signal of unequal voter influence, and thus of potential unfairness. In particular, there would be no way to differentiate between voters who have low payments (and thus a low budget) because they are underrepresented and voters who have low payments, but have no additional ``claim'' to be represented because they are unable to use their additional influence.
This line of reasoning implies that residuals should be used to offset budget inequalities that do not reflect unjustified disparities in influence, but instead arise because a voter cannot legitimately claim additional representation (for instance, because no unelected candidate they approve has sufficiently broad support). This intuition motivates the following axiom, which requires that whenever a voter has a budget below the average payment, this is because allocating them additional residual would violate 1-stability, i.e., it would enable a justified claim to further representation:

\begin{axiom}[Budget-Averaging]\label{def:budget_averaging}
A price system $\mathbf{ps}=(V,W, p)$ for an approval profile $A\in \app$ and committee $W\subseteq C$ satisfies \emph{budget-averaging} if the budget of every voter $i\in V$ with $b_i<\frac{\lvert W\rvert}{n}$ is maximal with respect to 1-stability. Formally, for every $\varepsilon>0$, the price system $\mathbf{ps}'=(V,W, p')$ derived from $\mathbf{ps}$ by setting $r'_i=r_i+\varepsilon$ and keeping everything else unchanged violates 1-stability.
\end{axiom}

\subsection{Alignment with Proportionality} \label{sec:algin}

Since budgets should reflect voters’ influence, unequal budgets should indicate unequal influence and thus a violation of proportional representation. Accordingly, we align the conclusions of our price systems with established proportionality notions. We consider two directions.

\subsubsection{Proportionality Implies Equal Budgets}\label{prop-equal-budget}

When a committee satisfies a strong notion of proportional representation, we want the computed price system to be budget-uniform; otherwise, budget differences would falsely suggest an unfair distribution of influence.

Because proportional representation admits many formalizations, some care is required in operationalizing this requirement. In particular, we only want to enforce budget uniformity in settings where the proportionality of the committee is beyond reasonable doubt (otherwise we would wrongly restrict our price systems). Many mainstream proportionality notions do not meet this standard, for example, even some core-stable committees may plausibly be regarded as unfair \citep{PPSS21a}.
We therefore focus on three settings in which proportionality is unambiguous.

Our first axiom deals with laminar profiles.
In particular, as argued by \citet{PeSk20a}, laminar proportional committees fulfill  a strong notion of proportionality: disagreements between ``subparties'' (i.e., profiles as in (iii) of \Cref{def:laminar}) are fairly resolved by selecting candidates in proportion to their sizes.
\begin{axiom}[Laminar-Proportional-Uniformity]\label{def:lpu}
     For a laminar profile $A\in \app$, a committee $W\subseteq C$ is \emph{laminar proportional} if: 
    \begin{enumerate}
        \item All candidates from $W$ are unanimously approved, or   
        \item there is a unanimously approved candidate $c\in C$ such that $W\setminus \{c\}$\footnote{$W$ does not necessarily contain $c$, as we do not require our committees to be efficient, but are mainly concerned with differences of influence between voters. In the original definition by \citet{PeSk20a}, $c\in W$ is required.} is laminar proportional in $A|_{V,C\setminus \{c\}}$, or
        \item there exist partitions $V_1  \uplus V_2 =V$ and $C_1  \uplus C_2=C$ so that $A|_{V,C}=A|_{V_1,C_1}+A|_{V_2,C_2}$, $\frac{|V_1|}{|W\cap C_1|}=\frac{|V_2|}{| W\cap C_2|}$, $W\cap C_1$ is laminar proportional in $A|_{V_1,C_1}$ and $W\cap C_2$ is laminar proportional in $A|_{V_2,C_2}$.
    \end{enumerate}
    
     An explanation rule $\mathfrak{E}$ satisfies \emph{laminar-proportional-uniformity} if each price system returned by $\mathfrak{E}$ for $A\in \app$ and $W\subseteq C$ is budget-uniform  when $A$ is laminar and $W$ is laminar proportional. 
\end{axiom}

Another strong form of proportional representation arises when the committee consists of the $\lvert W\rvert$ candidates with the highest approval scores and each voter approves exactly one of these candidates. 
Such a committee arguably meets the strongest conceivable standard of proportional representation (while also maximizing both utilitarian and egalitarian social welfare).

\begin{axiom}[Perfect-Coverage-Uniformity]\label{pcu}
  For an approval profile $A\in \app$, a committee $W\subseteq C$ provides \emph{perfect coverage}, if (a) $W$ is an exact cover of $V$, i.e., $|A_i\cap W|=1$ for each $i\in V$, and (b) there is no $c \in C\setminus W$ and $c' \in W$ with $|V[c]|>|V[c']|$. An explanation rule $\mathfrak{E}$ satisfies \emph{perfect-coverage-uniformity} if each price system returned by $\mathfrak{E}$ for $A\in \app$ and $W\subseteq C$ is budget-uniform when $W$ provides perfect coverage for $A$. 
\end{axiom}

A strong notion of proportional representation also emerges in highly symmetric instances.
For example, in \Cref{ex:brick_wall}, the candidates are split into two symmetric types, and the
committee selects the same number of candidates from each type. In such a situation, there is
no structural reason for voters to claim unequal influence, and the explanation should therefore
be budget-uniform:

\begin{axiom}[Perfect-Symmetry-Uniformity]\label{psu}
  Fix an approval profile $A \in \app$ and a committee $W \subseteq C$.
  We call two candidates $c,c' \in C$ \emph{isomorphic} if there exists a profile automorphism
  $(\sigma,\pi)$ of $A$ with $\pi(c)=c'$.
  Moreover, we say that $c$ and $c'$ have the same \emph{type} if $V[c]=V[c']$.

  An explanation rule $\mathfrak{E}$ satisfies \emph{perfect-symmetry-uniformity} if,
  whenever all candidates are pairwise isomorphic and $W$ contains the same number of
  candidates from each type, every price system returned by $\mathfrak{E}$ is budget-uniform.
\end{axiom}

\subsubsection{Non-Proportionality Implies Unequal Budgets}
Conversely, when a committee clearly violates proportional representation, we want the resulting price systems to reflect this through unequal budgets.
Note that if $W$ violates ($1$-)PJR+, then we have already seen that any residual-stable price system for $W$ cannot be budget-uniform. 

On laminar profiles, proportionality requires that the number of candidates selected from two ``subparties'' (i.e., profiles as in (iii) of \Cref{def:laminar}) should be proportional to their size (see \citet{PeSk20a} and \Cref{def:lpu}). We introduce the notion of $\Delta$-laminar-unproportionality, which quantifies the extent to which some committee $W$ deviates from this ideal: $W$ is $\Delta$-laminar-unproportional if there exist two subparties where one is underrepresented by more than $\Delta$ candidates relative to its proportionality demand.
The associated axiom, $\Delta$-unproportional-responsive, then requires that an explanation rule only returns non-budget-uniform price systems for all $\Delta$-unproportional committees; it is desirable that rules already fulfill this axiom for $\Delta=1$.
\begin{axiom}[$\Delta$-Unproportional-Responsive]\label{def:unprop_responsive}
  Let $\Delta\in \NN$. A committee $W\subseteq C$ for a laminar profile $A\in \app$ is \emph{$\Delta$-laminar-unproportional}, if one of the following holds:
  \begin{enumerate}
    \item The profile with candidates $C$ is unanimous,  $W\subseteq C$, and $\Delta=0$, or 
    \item There is a unanimously approved candidate  $c \in C$ such that $W\setminus \{c\}$ is $\Delta$-laminar-unproportional in the profile $A|_{V, C\setminus \{c\}}$, or
    \item There exist partitions $V=V_1\uplus V_2$ and $C=C_1\uplus C_2$ into two laminar profiles with $A=A|_{V_1,C_1}+A|_{V_2,C_2}$ such that: 
    \begin{enumerate}
        \item  The committee $W\cap C_1$ is $\Delta_1$-laminar-unproportional on $A|_{V_1,C_1}$ and $W \cap C_2$ is $\Delta_2$-laminar-unproportional on $A|_{V_2,C_2}$, and
        \item Let $\Delta^* \in \NN$ be maximum such that
      $
      \lvert W \cap C_1\rvert + \Delta^* <  \lvert W \cap C_2\rvert \cdot  \frac{|V_1|}{|V_2|}
      $
      and $ \lvert(\bigcap_{i\in V_1} A_i )\setminus W\rvert \geq \Delta^*$.
      It holds that $\max\{\Delta_1,\Delta_2,\Delta^*\}\geq \Delta$.
    \end{enumerate}
  \end{enumerate}

An explanation rule $\mathfrak{E}$ is \emph{$\Delta$-unproportional-responsive} if each price system returned by $\mathfrak{E}$ for $A\in \app$ and $W\subseteq C$ is not budget-uniform when $A$ is laminar and $W$ is $\Delta$-laminar-unproportional.
\end{axiom}

The committee in \Cref{fig:lp-sub1} is $1$-laminar-unproportional, but not $2$-laminar-unproportional.
We show that satisfying laminar-coherence (cf. \Cref{def:lnice}) and residual-stability (cf. \Cref{def:pricesystem}) ensures $1$-unproportional-responsiveness:
\begin{restatable}{proposition}{LniceResidualImpliesUnpropResp}\label{prop:lnice-residual-unpropresp}
  An explanation rule that is laminar-coherent and residual-stable satisfies $1$-unproportional-responsiveness.
\end{restatable}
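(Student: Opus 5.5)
We argue by contradiction via induction along the recursive structure of \Cref{def:unprop_responsive}. Let $A$ be laminar and $W$ be $1$-laminar-unproportional, and suppose some laminar-coherent, residual-stable price system $\mathbf{ps}=(V,W,p)$ for $(A,W)$ is budget-uniform with common budget $b$. The claim we prove by induction is slightly stronger: for every laminar profile and $1$-laminar-unproportional committee, no laminar-coherent, residual-stable price system is budget-uniform. The induction is on the number of candidates, following which clause of the definition witnesses unproportionality. Clause 1 requires $\Delta=0$, so it never witnesses $1$-laminar-unproportionality, and only clauses 2 and 3 need treatment.

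\emph{Clause 2.} There is a unanimously approved $c$ such that $W\setminus\{c\}$ is $1$-laminar-unproportional in $A|_{V,C\setminus\{c\}}$. We restrict $\mathbf{ps}$ to $W\setminus\{c\}$ by deleting the payments to $c$ (if $c\in W$) and keeping all residuals. By laminar-coherence each voter paid exactly $\frac1n$ to $c$, so the restricted system is budget-uniform with budget $b-\frac1n$ (or $b$ if $c\notin W$). The restricted system remains laminar-coherent, since equal splits and equal residuals for identical voters are unaffected; note that voters with equal restricted approval sets had equal approval sets before, because $c$ is unanimous. It also remains residual-stable, because residuals are unchanged and the set of unselected candidates only shrinks. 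The restricted profile is laminar, so the induction hypothesis applies and gives a contradiction.

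\emph{Clause 3.} We have $A=A|_{V_1,C_1}+A|_{V_2,C_2}$. Every candidate in $C_1$ is approved only by voters in $V_1$, and symmetrically for $C_2$. Hence restricting $p$ to $V_1\times (W\cap C_1)$ together with the residuals of $V_1$ yields a valid price system for $(A|_{V_1,C_1},W\cap C_1)$: each selected candidate still receives total payment $1$, and $V_1$-voters paid nothing outside $C_1$. This restricted system inherits budget-uniformity (budget $b$), laminar-coherence, and residual-stability, since supporter sets of candidates in $C_1$ are unchanged. So if $\Delta_1\ge 1$ (or symmetrically $\Delta_2\ge1$), the induction hypothesis yields a contradiction. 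The remaining subcase is $\Delta^*\ge1$, which is the core counting step. Voters in $V_2$ pay exactly $|W\cap C_2|$ in total, so $|V_2|\,b\ge |W\cap C_2|$. Therefore
\[
\sum_{i\in V_1} r_i = |V_1|\,b-|W\cap C_1| \ \ge\ |W\cap C_2|\cdot\frac{|V_1|}{|V_2|}-|W\cap C_1| \ >\ \Delta^*\ \ge 1 .
\]
Moreover, $\Delta^*\ge1$ guarantees a candidate $c\in\bigcap_{i\in V_1}A_i\setminus W$. Since $c\in C_1$, we have $V[c]=V_1$, and so $\sum_{i\in V[c]}r_i>1$, contradicting residual-stability.

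The main obstacle is the bookkeeping in the restriction steps. In clause 2, the budget shift by exactly $\frac1n$ uses laminar-coherence to keep the restricted system uniform; without equal splitting, removing $c$ could break budget-uniformity. In clause 3, one must confirm that payments and supporter sets decompose cleanly along the partition, so that every inherited property survives. Once these restrictions are in place, the $\Delta^*$ case is a direct averaging argument.
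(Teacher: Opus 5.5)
Your proof is correct, and its core is the paper's counting argument: you lower-bound the common budget via the total payment of $V_2$, conclude $\sum_{i\in V_1} r_i>1$, and contradict residual-stability at the candidate in $(\bigcap_{i\in V_1}A_i)\setminus W$. The difference is bookkeeping. The paper argues directly at the (possibly nested) split with $\Delta^*\ge 1$ and uses laminar-coherence to note that all voters of $V_1\cup V_2$ pay a common amount $p^*$ to candidates outside $C_1\cup C_2$, whereas you strip off clause-2 and clause-3 layers by restriction, which makes the $\Delta_1,\Delta_2$ cases explicit. One small fix: induct on the derivation or on $|V|+|C|$ rather than on $|C|$, since in a restricted subprofile a clause-3 split may have $C_2=\emptyset$.
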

\begin{proof}
Let $\mathfrak{E}$ be a laminar-coherent and residual-stable explanation rule, let $A\in \app$ be a laminar profile, and let $W\subseteq C$ be a $1$-laminar-unproportional committee. Let $V=V_1\uplus V_2$ and $C=C_1\uplus C_2$ be partitions into two laminar profiles satisfying 3) in \Cref{def:unprop_responsive} for $\Delta^*=1$, i.e.,
\begin{equation}\label{prop_lnice_unprop:eqStar}
  |W \cap C_1 |  < |W \cap C_2| \cdot \frac{|V_1|}{|V_2|} - 1.
\end{equation}
Note that $\Delta^*\geq 1$ also implies that there exists a candidate $c^* \in (\bigcap_{i\in V_1} A_i) \setminus W$, so $V_1 \subseteq V[c^*]$.

Assume for the sake of contradiction that a price system $\mathbf{ps}\in\mathfrak{E}(A,W)$ is budget-uniform with budget $b^*\in \RR$ for all voters.
Note that \lnice{} implies that each voter $i\in V_1 \cup V_2$ pays the same amount $p^*\coloneqq \sum_{c \in W \setminus (C_1 \cup C_2)} p(i,c)$ for the candidates approved by all (or none) of the voters in $V_1 \cup V_2$.

Since the candidates in $C_2$ can only be funded by voters in $V_2$, we have that
$$
b^* \geq p^* + \frac{|W \cap C_2|}{|V_2|}.
$$
We can lower bound the total residual of the voters in $V_1$ by
\begin{align*}
  \sum_{i \in V_1} r_i = \sum_{i \in V_1}( b^* - p_i )\geq |V_1| \cdot b^* - (p^* \cdot |V_1| +  |W \cap C_1 |)\\
  \geq p^* \cdot |V_1| + \frac{|W \cap C_2| \cdot |V_1|}{|V_2|} - (p^* \cdot |V_1| +  |W \cap C_1 |)\\
  =  \frac{|W \cap C_2| \cdot |V_1|}{|V_2|} - |W \cap C_1 | >_{\eqref{prop_lnice_unprop:eqStar}} \frac{|W \cap C_2| \cdot |V_1|}{|V_2|}-(|W \cap C_2| \cdot \frac{|V_1|}{|V_2|} - 1) = 1.
\end{align*}
Since $V_1 \subseteq V[c^*]$ and $c^* \notin W$, this contradicts that $\mathbf{ps}$ is residual-stable, which completes the proof.
\end{proof}

\section{Algorithms}\label{sec:alg}

We present three explanation rules. We start with \texttt{Approximate Priceability}, then discuss \texttt{Equal Split}, and finally present our main explanation rule \texttt{Continuous Phragm\'{e}n} (see \Cref{tab:axiom_small}).  

\subsection{Baseline: Approximate Priceability} \label{sec:app_price}
\texttt{Approximate Priceability} computes a residual-stable price system that minimizes the sum of absolute pairwise differences between voters’ budgets via a linear program. In particular, it returns a budget-uniform residual-stable price system when one exists.
 \texttt{Approximate Priceability} satisfies our three axioms capturing that proportionality implies budget-uniformity, as committees meeting our strong notions of proportionality naturally admit budget-uniform residual-stable price systems (as they are priceable). 
At the same time, the high flexibility in distributing payments and budgets that priceability allows for (cf.\ \Cref{ex:priceability}) means that \texttt{Approximate Priceability} can output equal budgets even in instances where voters arguably exert very unequal influence. Consequently, it violates $\Delta$-unproportional-responsiveness for arbitrary $\Delta$, and our structural coherence axioms. 
\begin{restatable}{theorem}{priceabilityGuarantees}
\texttt{Approximate Priceability} satisfies \begin{enumerate*}[label=(\alph*)]
    
    \item laminar-proportional-uniformity,
    \item perfect-coverage-uniformity, and
    \item perfect-symmetry-uniformity.
\end{enumerate*}

\texttt{Approximate Priceability} violates
\begin{enumerate*}[label=(\alph*)]
    \item equal treatment of equals,
    \item 1-stability,
    \item single-winner-payment-responsiveness, and
    \item $\Delta$-unproportional-responsiveness for any $\Delta \in \NN$.
\end{enumerate*}
   
\end{restatable}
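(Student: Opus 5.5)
The positive half reduces to one observation. Whenever a budget-uniform residual-stable price system exists, it has objective value $\sum_{i,j}|b_i-b_j|=0$, so every price system \texttt{Approximate Priceability} returns is budget-uniform. For (a)--(c) it therefore suffices to exhibit, in each setting, one budget-uniform residual-stable price system.

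\textbf{Laminar-proportional-uniformity.} I will prove by induction along \Cref{def:lpu} the stronger claim that there is a price system with all residuals $0$ in which every voter pays exactly $|W|/n$. Zero residuals make residual-stability trivial.
\begin{itemize}
\item Case 1: split every selected (unanimous) candidate equally among all voters.
\item Case 2: apply induction to $W\setminus\{c\}$; if $c\in W$, add an equal split of $c$ among all $n$ voters, which adds $1/n$ to every voter.
\item Case 3: the condition $|V_1|/|W\cap C_1|=|V_2|/|W\cap C_2|$ forces both ratios to equal $n/|W|$. So the two inductively obtained systems have the same per-voter payment and can be glued.
\end{itemize}
The degenerate case $W\cap C_\ell=\emptyset$ needs separate care, since the ratio is then undefined.

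\textbf{Perfect-coverage-uniformity.} Let $s=\min_{c\in W}|V[c]|$. Every voter has exactly one approved winner $c_i$ and pays $1/|V[c_i]|$ to it. Give every voter budget $1/s$, so voter $i$ keeps residual $1/s-1/|V[c_i]|\le 1/s$. Condition (b) of perfect coverage gives $|V[c]|\le s$ for every unselected $c$. Hence $\sum_{i\in V[c]} r_i\le |V[c]|/s\le 1$, which is residual-stability.

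\textbf{Perfect-symmetry-uniformity.} All candidates are isomorphic, so they share a common approval score $s$. I again aim for zero residuals and uniform payment $|W|/n$. This requires a fractional assignment of each selected candidate's unit cost to its supporters such that every voter pays $|W|/n$. By a flow/Hall argument, it suffices that every voter set $S$ satisfies
\[
\Bigl|\bigcup_{i\in S}(A_i\cap W)\Bigr|\ \ge\ |S|\cdot\frac{|W|}{n}.
\]
I would derive this by double counting, using the transitivity of the automorphism group on candidates and the fact that $W$ takes the same number of candidates from every type. If that fails, the fallback is the construction from perfect coverage: pay by equal split, top up residuals to the maximum payment, and bound the residual mass of the supporters of any unselected candidate via the symmetry. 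I expect this case to be the main obstacle, because symmetry acts on candidates but not obviously on voters.

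\textbf{Negative results.} Since the rule returns all minimizers, it suffices to exhibit one bad optimal system in each case.
\begin{itemize}
\item \emph{Equal treatment of equals.} The system in \Cref{fig:lp-sub2} is budget-uniform and residual-stable, hence optimal, yet treats the identical voters $1$ and $2$ differently.
\item \emph{1-stability.} The budget-uniform residual-stable system of \Cref{ex:1stab_payments} is optimal but not 1-stable.
\item \emph{Single-winner-payment-responsiveness.} In \Cref{ex:rs} with small $q$, I will find a budget-uniform residual-stable system in which voter $1$ pays no more than the other supporters of $c_2$. This should be possible by giving the voters in $[q+1,2q]$ residual $1/q$ and checking that the support of $c_1$ stays affordable-bounded.
\item \emph{$\Delta$-unproportional-responsiveness for any $\Delta$.} Take two equal-size subparties $V_1,V_2$ sharing $t$ selected unanimous candidates. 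Select $\Delta+1$ candidates unanimous in $V_2$, select no $V_1$-own candidates, and leave $\Delta+1$ candidates unanimous in $V_1$ unselected. For $t$ large, let the $V_1$ voters bear a larger share of the unanimous candidates so that all budgets are equal and all residuals are $0$. The resulting system is residual-stable and budget-uniform, although $W$ is $\Delta$-laminar-unproportional.
\end{itemize}
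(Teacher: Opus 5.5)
Your reduction for the positive half is the paper's: once some budget-uniform residual-stable system exists, every minimizer has objective $0$. The difference is how you get the witness. The paper invokes \Cref{thm:contPhragmenGuarantees} (\texttt{Continuous Phragm{\'e}n} is 1-stable and budget-uniform in all three settings), while you build witnesses by hand. Your witnesses for laminar-proportional-uniformity and perfect-coverage-uniformity are correct and self-contained. The degenerate case in (a) is harmless: read the ratio condition cross-multiplied, and a side with no selected candidates simply pays $0$. All four negative examples also work. \Cref{fig:lp-sub2} and \Cref{ex:1stab_payments} are valid substitutes for the paper's two-voter instance and for \Cref{ex:rs}. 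For single-winner-payment-responsiveness you need $q\ge 2$. Your $\Delta$-construction needs $t\ge \Delta+1$, and with $|V_1|=|V_2|=1$, $t=\Delta+1$ it is exactly the paper's instance.

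The genuine gap is perfect-symmetry-uniformity. Your Hall condition is the right criterion, but you never verify it. The plain double count you hint at, $\sum_{i\in S}|A_i\cap W|\le s\,|N(S)|$, does not yield $|N(S)|\ge |S|\,|W|/n$ when the degrees $|A_i\cap W|$ differ. That is exactly your worry that symmetry need not act transitively on voters.

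Your fallback fails outright. The brick-wall instance (\Cref{ex:brick_wall}) satisfies the hypotheses of this axiom. Equal split gives $i^*$ payment $4/3$ and each $x$-voter $2/3$. Topping the $x$-voters up to $4/3$ gives them residual $2/3$ each, so the supporters $\{x_1,x_2,i^*\}$ of the unselected $c_6$ hold residual $4/3>1$. This is precisely why \texttt{Equal Split} violates the axiom.

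The missing idea is to split each voter's budget, not each candidate's cost. Set $r_i=0$ and $p(i,c)=\frac{|W|}{n\,|A_i\cap W|}$ for $c\in A_i\cap W$. This is well defined: a voter approving $c$ lies in $V[c]=V[c'']$ for every $c''$ of $c$'s type, so each voter approves whole types and hence at least $w\ge 1$ winners.

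Now fix $c,c'\in W$ and take a profile automorphism $(\sigma,\pi)$ with $\pi(c)=c'$. It maps types onto types. Post-compose $\pi$ with a within-type permutation fixing $c'$ so that $\pi(W)=W$. This is possible because $W$ has equally many candidates per type, and the result is still a profile automorphism because approval sets are unions of types.

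Then $\sigma(V[c])=V[c']$ and $|A_{\sigma(i)}\cap W|=|\pi(A_i\cap W)|=|A_i\cap W|$. Hence $\sum_{i\in V[c]}\frac{1}{|A_i\cap W|}$ is the same for every $c\in W$. Since voters pay $n\cdot\frac{|W|}{n}=|W|$ in total, each winner receives exactly $1$. In the brick-wall instance, the $x$- and $y$-voters pay $2/5$ per approved winner and $i^*$ pays $1/5$ to each, so all budgets are $4/5$.

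This is also what \texttt{Continuous Phragm{\'e}n} computes here: all voters stay active with $C_i=A_i\cap W$ and all $h_c$ stay equal. That is how the paper obtains the result. With these weights your Hall condition follows, but it is no longer needed.
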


\subsection{A First Attempt: Equal Split}\label{alg:es}

We present \texttt{Equal Split}, a first attempt towards more regularized and interpretable price systems that first computes a payment scheme and then assigns residuals.
For the payments, \texttt{Equal Split} treats each selected candidate independently and splits its unit cost equally among its supporters, resulting in a 1-stable price system. While this choice is not fully aligned with our intended interpretation of budgets and payments (see below), it yields simple and easily accessible price systems.
To determine voters’ residuals, we follow the principle that residuals should be granted unless there is a compelling reason not to. For our residual distribution, we call a voter \emph{blocked} if increasing their residual would violate 1-stability, and \emph{active} otherwise. We repeatedly increase the residuals of all active voters whose current budgets are minimal among the active voters in a continuous fashion. We continue until either no active voters remain, or further increasing any active voter’s residual would increase the maximum budget in the instance. This procedure is formalized as \textsc{Distribute Residual} in \Cref{alg:cont-phragmen}; we discuss it in more detail in the next section.

It turns out that \texttt{Equal Split} performs well with respect to many of our axioms:
\begin{restatable}{theorem}{eqSplitGuarantees}\label{thm:eqSplitGuarantees}
\texttt{Equal Split} satisfies \begin{enumerate*}[label=(\alph*)]
    \item symmetry,
    \item \lnice{},
    \item 1-stability,
    \item budget-averaging,
    \item laminar-proportional-uniformity,
    \item perfect-coverage-uniformity, and
    \item 1-unproportional-responsiveness.
\end{enumerate*}
\texttt{Equal Split} violates \begin{enumerate*}[label=(\alph*)]
    \item single-winner-payment-responsiveness, and
    \item perfect-symmetry-uniformity.
\end{enumerate*}
\end{restatable}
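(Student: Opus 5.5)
The plan is to check each positive guarantee directly against the two stages of \texttt{Equal Split}. The first stage fixes the payments $p(i,c)=\frac{1}{|V[c]|}$ for $i\in V[c]$. The second stage is the continuous residual procedure \textsc{Distribute Residual}. Two facts about this procedure do most of the work.
\begin{enumerate*}[label=(\roman*)]
\item It starts from the all-zero residual system, which is 1-stable by \Cref{obs:one_stable_price_system}, and it only ever raises residuals of active voters. The 1-stability constraints are non-strict linear inequalities, so 1-stability is preserved all the way to termination, including at the limit points where voters become blocked.
\item At termination, either no voter is active, or every active voter already has the current maximum budget.
\end{enumerate*}
Fact (i) gives \textbf{1-stability} immediately.

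For \textbf{budget-averaging}, I use that the payments sum to $|W|$. Hence $\max_i b_i\ge \max_i p_i\ge |W|/n$. A voter with $b_i<|W|/n$ is therefore not at the maximum budget, so by fact (ii) it must be blocked. Being blocked is exactly the statement that any $\varepsilon$-increase of its residual breaks 1-stability.

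For \textbf{symmetry}, the payments depend only on $V[c]$, which automorphisms respect. The residual procedure is defined entirely in terms of quantities that are invariant under automorphisms of $(A,W)$: budgets, blocked/active status, and the 1-stability constraints. So if the state is invariant at some time, it stays invariant, and by following the process through its finitely many event times (moments where a voter becomes blocked), the final residuals are invariant. \textbf{Laminar-coherence} follows: part 1 is the definition of the payments, and part 2 follows from symmetry via equal treatment of equals. Finally, \textbf{1-unproportional-responsiveness} follows from \Cref{prop:lnice-residual-unpropresp}, since the rule is laminar-coherent and 1-stable (hence residual-stable).

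For \textbf{laminar-proportional-uniformity}, I would prove by structural induction on \Cref{def:lpu} that equal split gives every voter total payment exactly $|W|/n$.
\begin{itemize}
\item Case 1: every voter pays $1/n$ for each of the $|W|$ unanimously approved winners.
\item Case 2: the unanimous candidate $c$ contributes $1/n$ to everyone if $c\in W$, and induction handles $W\setminus\{c\}$.
\item Case 3: a voter in $V_h$ pays $|W\cap C_h|/|V_h|$. This equals $|W|/n$ because $\frac{|V_1|}{|W\cap C_1|}=\frac{|V_2|}{|W\cap C_2|}$, with the degenerate case of an empty side handled separately.
\end{itemize}
With uniform payments and zero residuals, everyone already sits at the maximum budget. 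Raising any voter would increase the maximum, so the procedure stops immediately and the result is budget-uniform.

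For \textbf{perfect-coverage-uniformity}, let $s=\min_{c'\in W}|V[c']|$ and $M=1/s$, which is the maximum payment. I claim no voter becomes blocked while all budgets are at most $M$, so all voters are raised to $M$. Take $c\notin W$ and $c'\in W$. Each $i\in V[c]\setminus V[c']$ contributes $r_i\le M$. Each $i\in V[c]\cap V[c']$ contributes $p(i,c')=p_i\le M$, since $i$ approves exactly one winner. Hence the 1-stability sum is at most $|V[c]|\,M\le s M=1$, using condition (b). The same bound covers residual-stability. The main obstacle is this step: one must check that blocking exactly at the boundary is harmless, which holds because all budgets are already equal to $M$ at that point.

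For the \textbf{violations}:
\begin{itemize}
\item \emph{Single-winner-payment-responsiveness}: \Cref{ex:rs} with equal split gives voter $1$ and the other supporters of $c_2$ the same payment $1/q$, although only the others fail to approve the approval winner $c_1$.
\item \emph{Perfect-symmetry-uniformity}: \Cref{ex:brick_wall} has all candidates pairwise isomorphic and two winners of each type. Equal split makes $i^*$ pay $4/3$ and each $x$- and $y$-voter pay $2/3$. Residual-stability for $c_6$ (supporters $x_1,x_2,i^*$) caps $r_{x_1}+r_{x_2}\le 1$. By symmetry, each $x$-voter's residual is at most $1/2$ and its budget at most $7/6<4/3$, so the returned system is not budget-uniform.
\end{itemize}
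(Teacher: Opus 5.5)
Your proof is correct and follows essentially the paper's route. 1-stability and budget-averaging come from the blocking rule in \textsc{Distribute Residual}. Symmetry comes from automorphism-invariance of the procedure, and it yields laminar-coherence and, via \Cref{prop:lnice-residual-unpropresp}, 1-unproportional-responsiveness. Laminar-proportional-uniformity comes from equal total payments with zero residuals, perfect-coverage-uniformity from ruling out premature blocking, and the two violations from \Cref{ex:rs} and \Cref{ex:brick_wall}.

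Two minor points. For perfect coverage, the paper instead shows that every residual stays strictly below $M$ while the loop runs, because voters above the current minimum budget still have zero residual. Your equality-case argument also works, but phrase it as ``tightness forces every voter in the tight constraint, in particular any blocked one, to already have budget $M$'' rather than ``all budgets equal $M$''. In \Cref{ex:rs} you have the roles reversed: it is voter $1$ who does not approve the approval winner $c_1$ (take $q\ge 2$). This does not affect your conclusion, since equal payments violate the strict inequality either way.
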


Nevertheless, \texttt{Equal Split} can produce price systems that fail to reflect the structure of the instance. First, because \texttt{Equal Split} charges every supporter of a candidate the same amount, its payments do not distinguish voters who were critical for the inclusion of a candidate from those who were not. As a result, \texttt{Equal Split} violates, for instance, single-winner-payment-responsiveness. Second, the resulting budgets can be highly unequal even when the committee satisfies strong proportionality ideals:  
In \Cref{ex:brick_wall}, \texttt{Equal Split} assigns voter $i^*$ a payment of $\nicefrac{1}{3}$ for each selected candidate, yielding a budget of $\nicefrac{4}{3}$. Residual-stability then limits the remaining voters to budgets of at most $\nicefrac{7}{6}$ (i.e., total payments of $\nicefrac{2}{3}$ and residuals of $\nicefrac{1}{2}$). Thus, \texttt{Equal Split} suggests that $i^*$ exerts strictly more influence and is effectively overrepresented, even though $i^*$ merely approves all candidates and hence should not affect which candidates are selected in the first place. Consequently, \texttt{Equal Split} also violates perfect-symmetry-uniformity.

\paragraph{Additional Method.}
A different route is to link the computation of payments and residuals through an optimization approach, thereby balancing both simultaneously. In Appendix~\ref{app:subsec:opt}, we discuss a convex optimization approach with objective function 
$\max  \sum_{i \in V} \log p(i,i) + \sum_{c \in W} \sum_{i \in V[c]} \log p(i,c)$.

\newcommand{\yes}{\ensuremath{\checkmark}}
\newcommand{\no}{\ensuremath{\times}}
\begin{table}[t]
\centering
\caption{Guarantees for
  Symmetry (Sym.,~\Cref{def:symmetry}),
  Single-Winner-Payment-Responsiveness (SW-Pay.,~\Cref{def:single_winner_payments}),
  Perfect-Symmetry-Uniformity (PSU,~\Cref{psu}), and
  $\Delta$-Unproportional-Responsiveness ($\Delta$-Lam.,~\Cref{def:unprop_responsive}).
         Every rule satisfies all omitted axioms apart from Monotonicity (\Cref{def:monotonicity}); and for \texttt{Approximate Priceability}  Equal Treatment of Equals (\Cref{def:eqtev}), 1-Stability (\Cref{def:one_w_costStable}), and Laminar-Coherence (\Cref{def:lnice}).}
\label{tab:axiom_small}
\renewcommand{\arraystretch}{1}
\resizebox{0.7\textwidth}{!}{\begin{tabular}{@{}l
                c
                c
                c
                c@{}}
\toprule
\textbf{Explanation Rule}
  & \textbf{Sym.}
  & \textbf{SW-Pay.}
  & \textbf{PSU}
  & $\boldsymbol{\Delta}$\textbf{-Lam.} \\
\midrule
\texttt{Approximate Priceability} (Sec.~\ref{sec:app_price})
  & \no & \no & \yes & $\Delta{=}\infty$\\
  \texttt{Equal Split} (Sec.~\ref{alg:es})
  & \yes & \no & \no & $\Delta{=}1$ \\
\texttt{Continuous Phragm\'{e}n} (Sec.~\ref{sec:cont-phr})
  & \yes & \yes & \yes & $\Delta{=}1$ \\
\bottomrule
\end{tabular}}
\end{table}

\subsection{Continuous Phragm{\'e}n}\label{sec:cont-phr}

We present \texttt{Continuous Phragm{\'e}n}, our main explanation rule. We give an overview of the rule's general idea and properties in \Cref{sec:ov} and a detailed description in \Cref{sec:det}.

\subsubsection{Overview}\label{sec:ov}
We draw inspiration from the voting rule \emph{sequential Phragm{\'e}n} for selecting a committee of size $k$. Under sequential Phragm{\'e}n, each voter maintains a bank account that is initially set to zero. All voters earn money continuously at the same rate. As soon as the supporters of a candidate $c$ have collectively accumulated one unit of money, candidate $c$ is added to the committee and the accounts of all supporters of $c$ are reset to zero, while all other voters keep their current balances. This is continued until $k$ candidates have been selected. 
We adapt the idea of voters continuously earning money to the construction of price systems. A natural starting point would be to treat $W$ as the candidate set and exercise the \emph{sequential Phragm{\'e}n} rule, letting voters ``purchase'' candidates until all candidates in $W$ have been added. The resulting price system would then consist of the incurred payments together with the remaining account balances as residuals. 

However, there are two problems with this approach.
First, as it ignores unselected candidates, the price system obtained in this way need not be residual-stable or 1-stable. For instance, if $W$ contains a candidate approved by only a single voter, that voter would need a long time to pay for this candidate, while the remaining voters would accumulate substantial residuals, yielding significant violations of 1-stability and residual-stability. 
Second, sequential Phragm{\'e}n constructs spending in a highly discrete manner and is very sensitive to tie-breaking. If a voter approves two candidates that become purchasable at the same time, then a sequential-Phragm{\'e}n-style procedure will need to break ties in favor of one of them, forcing the voter to pay substantially more for this candidate. More generally, this will yield price systems that are highly non-symmetric.

\paragraph{High-Level Idea and Properties.}
To address these two challenges, we propose \texttt{Continuous Phragm{\'e}n}, which differs in the following two key ways. First, unlike sequential Phragm{\'e}n, where voters accumulate money and then spend it once in a lump sum on a single candidate, \texttt{Continuous Phragm{\'e}n} lets voters spend continuously over time. More concretely, voters continuously split their spending across (multiple) selected candidates they approve, and, when they have no remaining affordable spending opportunities, accumulate residual. This continuous funding approach side-steps tie-breaking and thus ensures equal treatment of structurally symmetric candidates, as a supporter's spending can be split equally between them.
Secondly, \texttt{Continuous Phragm{\'e}n} explicitly limits both (i) how much residual a voter can accumulate and (ii) how much a voter can pay towards any candidate to ensure that the resulting price system is 1-stable. Note that restricting payments is necessary because if a voter pays too much for a candidate, this can create a 1-stability violation.

We designed \texttt{Continuous Phragm{\'e}n} to satisfy all of our axioms except monotonicity, which is ruled out by our impossibility result in \Cref{prop:no-expl-rule}: 

\begin{restatable}{theorem}{contPhragmenGuarantees}\label{thm:contPhragmenGuarantees}
\texttt{Continuous Phragm{\'e}n} satisfies \begin{enumerate*}[label=(\alph*)]
    \item symmetry,
    \item \lnice{},
    \item \mbox{1-stability,}
    \item single-winner-payment-responsiveness,
    \item budget-averaging,
    \item laminar-proportional-uniformity,
    \item perfect-coverage-uniformity,
    \item perfect-symmetry-uniformity, and
    \item 1-unproportional-responsiveness.
\end{enumerate*}
\end{restatable}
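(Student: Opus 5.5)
The proof goes axiom by axiom, and every part rests on a few invariants of the continuous process that defines \texttt{Continuous Phragm\'en}. That process has a spending phase, in which voters continuously fund the selected candidates they approve, with each payment capped so as to preserve 1-stability. It then has a residual phase, \textsc{Distribute Residual}, which raises the residuals of active voters whose budgets are minimal. The first step is to establish these invariants. \emph{(I1)} At every moment the partial system is 1-stable, and in particular residual-stable; this holds because each payment increment and each residual increment is only allowed while no pair $c\notin W$, $c'\in W$ has $\sum_{i\in V[c]\setminus V[c']} r_i+\sum_{i\in V[c]\cap V[c']}p(i,c')>1$. \emph{(I2)} The process is fully determined by the profile and the committee, with no tie-breaking: at each moment, the active voters and the candidates being funded are defined only through the sets $V[c]$ and the current state. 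These two invariants give 1-stability at once, since the final state satisfies (I1) and every candidate in $W$ reaches price $1$. The termination and completeness argument of \Cref{sec:det} is taken as given. Symmetry then follows from (I2) by induction over the breakpoints of the piecewise-linear dynamics. Any automorphism $(\sigma,\pi)$ of $(A,W)$ maps the state at time $t$ to a state that is also valid at time $t$. Since the rates are a deterministic function of the state, the unique trajectory is invariant under $(\sigma,\pi)$.

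Budget-averaging comes from the stopping rule of \textsc{Distribute Residual}. Suppose some voter $i$ ends with $b_i<|W|/n$ and is not blocked. Then the phase stopped only because raising $i$ would increase the maximum budget, so every voter's budget is at most the current level of active minimal voters. Every blocked voter has budget at most the maximum as well. Summing over all voters gives total budget $<|W|$. But total budget is at least the total payments, which equal $|W|$, a contradiction. This argument may need adjusting to match the exact stopping condition, but the counting idea is the one I would use.

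For \lnice{}, laminar-proportional-uniformity and $1$-unproportional-responsiveness, I induct along the recursive structure of \Cref{def:laminar}. In a laminar profile, any two supporters of a selected candidate $c$ face the same constraints on $c$: the candidates they disagree on have strictly smaller supporter sets nested inside. One then shows inductively that they receive equal spending rates on $c$, which yields equal splits, and symmetric residual dynamics give equal residuals for identical approval sets. Once \lnice{} and residual-stability are established, $1$-unproportional-responsiveness follows directly from \Cref{prop:lnice-residual-unpropresp}. Laminar-proportional-uniformity is a computation under equal splits. In case (iii), the ratio condition $|V_1|/|W\cap C_1|=|V_2|/|W\cap C_2|$ gives the two subparties equal per-voter payments. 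Case (ii) adds a common payment to everyone. In case (i), everyone approves the same selected candidates, so payments are equal. The residual phase then raises budgets to a common level, because the only obstacle would be a 1-stability block, and under laminar proportionality such a block hits all voters at the same level.

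Perfect-coverage-uniformity and perfect-symmetry-uniformity require explicitly tracing the trajectory. For perfect coverage, each voter approves exactly one $c\in W$, so the payments are $1/|V[c]|$. Condition (b) lets residuals rise until every budget equals $\max_c 1/|V[c']|$ without violating 1-stability, and I would check this pair-by-pair against the 1-stability inequality. For perfect symmetry, I would use the isomorphism of candidates together with equal type counts in $W$ to show that the symmetric spending rates give every voter the same total payment, or that residuals equalize them. For single-winner-payment-responsiveness, with $W=\{c\}$, a supporter $j$ of $c$ who also approves an approval winner $a$ has its payment to $c$ capped by 1-stability for the pair $(a,c)$ earlier than a supporter $i$ who approves no approval winner. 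I must show this cap binds strictly, so that $i$ keeps paying after $j$ stops. I expect this step, together with perfect-symmetry-uniformity, to be the main obstacle, because both require quantitative control over when the payment caps bind rather than purely structural arguments. My first attempt is to compare the time at which $j$'s cap activates with the time at which $c$ is fully paid, using $|V[a]|>|V[c]|$.
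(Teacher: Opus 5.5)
Your treatment of symmetry, budget-averaging and perfect-coverage-uniformity, and your reduction of $1$-unproportional-responsiveness to laminar-coherence plus residual-stability, follow the paper. The first gap is (I1), which you justify by a rule the algorithm does not follow and cannot follow. Take $C=\{c,c'\}$, $W=\{c\}$, $V[c]=\{1\}$, $V[c']=\{1,2\}$. The pair $(c',c)$ becomes tight at time $\frac{1}{2}$, both voters are blocked, and $c$ is only half paid. So ``increments only while all constraints are slack'' stalls, which contradicts the completeness you take as given. The algorithm instead declares $c$ critical and reactivates its supporters. \textsc{Check Blocking} exempts every voter approving a critical candidate, so their spending can push a pair above $1$. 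Hence 1-stability of the output rests on two facts your proposal lacks. First, the pruning phase restores every pairwise constraint; this is possible because $\sum_{j\in V[c]\cap V[c']}p(j,c')\le 1$, so scaling down the residuals of $V[c]\setminus V[c']$ suffices. Second, plain residual-stability, which pruning never checks, holds because only voters with $C_i=\emptyset$ accrue residual. Such voters approve no remaining, hence no critical, candidate, so they are blocked once $\sum_{j\in V[c]}r_j=1$ and are never reactivated.

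Your last step for laminar-proportional-uniformity also fails. \textsc{Distribute Residual} only raises minimal active budgets, so if anyone accrued residual in the main loop the budgets may stay unequal; the claim that blocks ``hit all voters at the same level'' is unsupported. You need that all residuals are zero when the main loop ends. This follows from an invariant that is also the missing core of your laminar-coherence induction. While $c\in W$ is unpaid, every selected candidate whose supporter set is strictly nested in $V[c]$ still has $h=1$, so \textsc{MoneyFlow}'s ratio rule never picks it for a supporter of $c$. All of $V[c]$ is active with identical spending sets and zero residual, and any blocking hits all of $V[c]$ at once and is undone by making $c$ critical. Consequently every voter pays at unit rate until the first moment some voter $i$ has no unpaid approved candidate. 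At that point $i$ has paid its final amount $p^*$, which by your equal-payment computation is everyone's final amount, so all of $W$ is funded at that instant.

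Likewise, perfect-symmetry-uniformity needs no quantitative control of caps. Composing a profile automorphism with a within-type permutation yields, for any $c,c'\in W$, an automorphism of $(A,W)$ sending $c$ to $c'$, so by symmetry all $h_c$ coincide. As $W$ meets every type, each voter always has a non-empty spending set. Residuals therefore stay $0$, every constraint stays strictly below $1$, nobody is ever blocked, and all voters finish together. Your timing comparison for single-winner-payment-responsiveness matches the paper's. You must add, however, that the blocked voter $j$ can be reactivated only when $c$ becomes critical, which reactivates $i$ as well.
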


\subsubsection{Detailed Description}\label{sec:det}
\Cref{alg:cont-phragmen} presents \texttt{Continuous Phragm{\'e}n}. The algorithm maintains
the set $V_{\mathrm{active}}$ of \emph{active} voters who currently accumulate residual or pay for selected candidates;
the set $V_{\mathrm{blocked}}$ of \emph{blocked} voters who are currently prevented from accumulating or spending;
the set $W_{\mathrm{critical}}$ of \emph{critical} candidates  that are difficult to fund under the current constraints (see below);
the set $W_{\mathrm{remaining}}\subseteq W$ of selected candidates that have not yet been fully paid for; and a vector
$(h_c)_{c\in W}\in[0,1]^W$, where $h_c$ denotes the remaining unpaid cost of candidate $c$.
A further central object are the \emph{spending sets} $(C_i)_{i\in V_{\mathrm{active}}}$, which prescribe how active voters split their payments.

Conceptually, the algorithm evolves in continuous time and distributes money until all candidates in $W$ are paid for, i.e., until $h_c=0$ for all $c\in W$. For readability, we present this as a loop advancing time in infinitesimal steps of size $\varepsilon$. The algorithm can be implemented with running time $\bigO(n^3\cdot m^2\cdot \lvert W\rvert^2)$; we refer to Appendix~\ref{app:algos} for details.
In each outer iteration, we perform (i) a \emph{planning phase}, which determines spending sets and updates $V_{\mathrm{active}}$, $V_{\mathrm{blocked}}$, and $W_{\mathrm{critical}}$ until they stabilize, (ii) a \emph{spending phase}, in which voters spend according to the computed spending sets, and (iii) a \emph{pruning phase}, which reduces residuals to eliminate 1-stability violations that may arise from unblocking voters to fund $W_{\mathrm{critical}}$ (explained below).
Once all candidates in $W$ have been fully paid for, we execute a final \emph{residual phase} that allocates additional residual while preserving 1-stability.

\paragraph{Spending phase.}
In Lines~\ref{line:spending-start}--\ref{line:spending-end}, voters accumulate and spend money. For each active voter $i\in V_{\mathrm{active}}$, we maintain the voter's spending set $C_i$ of candidates on which $i$ currently spends. In each infinitesimal step, voter $i$ distributes the incoming amount $\varepsilon$ uniformly across the candidates in $C_i$. If $C_i=\emptyset$, then $i$  adds $\varepsilon$ to their residual $p(i,i)$.
In the spirit of sequential Phragm{\'e}n, the spending sets prioritize spending on candidates that can be funded quickly, i.e., candidates that require small per-supporter contributions given the remaining unpaid costs (see following paragraph).\footnote{A simpler alternative would be to set $C_i=A_i\cap W_{\mathrm{remaining}}$ for all $i$. However, this may introduce unjustified differences in spending for shared candidates between supporters, and violates laminar-proportional-uniformity (cf.\ \Cref{ex:cont-phragmen-ci-alt}). }

\begin{algorithm}[t!]
\caption{Continuous Phragm{\'e}n}
\label{alg:cont-phragmen}
 $V_{\mathrm{active}}\gets V$, $V_{\mathrm{blocked}}\gets \emptyset$, $W_{\mathrm{critical}}\gets \emptyset$, $W_{\mathrm{remaining}}\gets W$, $h_c\gets 1$ for all $c\in W$\;
 $p(i,c) \gets 0, p(i,i) \gets 0$ for all $i\in V$, $c \in W$\;
 \While{$W_{\mathrm{remaining}}\neq \emptyset$: for an infinitesimal $\varepsilon$ step \label{line:outer-while}}{
\Repeat(\tcp*[f]{Planning phase}){$(C_i)_{i\in V_{\mathrm{active}}}, V_{\mathrm{active}},V_{\mathrm{blocked}},W_{\mathrm{critical}}$ remain unchanged}{
  $(C_i)_{i\in V_{\mathrm{active}}}\gets$ \textsc{MoneyFlow}{$(V_{\mathrm{active}}, (h_{c})_{c\in W}, W_{\mathrm{remaining}})$}\label{line:C-fund}\;
 $V_{\mathrm{blocked}}\gets V_{\mathrm{blocked}} \cup $\textsc{Check Blocking}$(V_{\mathrm{active}}$, $W_{\mathrm{critical}}$,$p$, $(C_i)_{i\in V_{\mathrm{active}}})$\label{line:V-blocked}, $V_{\mathrm{active}}\gets V_{\mathrm{active}} \setminus V_{\mathrm{blocked}} $\;
  $W_{\mathrm{unsupp}}\gets \{c\in W_{\mathrm{remaining}}\mid V[c]\cap V_{\mathrm{active}}=\emptyset\}$\label{line:uns}\;
 $W_{\mathrm{critical}}\gets W_{\mathrm{critical}}\cup W_{\mathrm{unsupp}}, V_{\mathrm{active}}\gets V_{\mathrm{active}}\cup \bigcup_{c\in W_{\mathrm{unsupp}}}V[c], V_{\mathrm{blocked}}\gets V_{\mathrm{blocked}}\setminus V_{\mathrm{active}}$\label{line:crit}\;
}
 \For(\tcp*[f]{Spending phase}){each voter $i\in V_{\mathrm{active}}$\label{line:spending-start}}{
 \lIf{$C_i\neq\emptyset$}{ 
    $h_c\gets h_c-\nicefrac{\varepsilon}{|C_i|}$, $p(i,c)\gets p(i,c)+\nicefrac{\varepsilon}{|C_i|}$  for each  $c\in C_i$
  }
  \lElse{$p(i,i)\gets p(i,i)+\varepsilon$\label{line:spending-end}}}
  $t_i \gets 1$ for all $i\in V$ \label{line:pruning-start}  \tcp*[r]{Pruning phase}
 \For{each  $c \in C\setminus W$ and $c'\in W$ with $\delta\coloneqq (\sum_{j\in V[c] \setminus V[c']} p(j,j) + \sum_{j\in V[c']\cap  V[c] }  p(j,c') )-1>0$}{
    $t_i \gets \min(t_i,  (1- \nicefrac{\delta}{\sum_{j\in V[c] \setminus V[c']} p(j,j)}))$ for each $i\in  V[c] \setminus V[c']$\;
 }
 $p(i,i) \gets p(i,i) \cdot t_i$ for each $i\in V$\label{line:pruning-end}\;
 
    $W_{\mathrm{remaining}}\gets \{c\in W\mid h_c>0\}$ and $W_{\mathrm{critical}}\gets \{c\in W_{\mathrm{critical}}\mid h_c>0\}$\label{line:update-remaining}\;
}
\textsc{Distribute Residual}$(p)$ \tcp*[r]{Residual phase}\label{line:dis}
\KwRet $(V,W,p)$\;
\BlankLine
\Fn{\textsc{MoneyFlow}{$(V_{\mathrm{active}}, (h_{c})_{c\in W}, W_{\mathrm{remaining}})$}}{
  $C_i\gets \emptyset$ for all $i\in \{j\in V_{\mathrm{active}}\mid A_{j}\cap W_{\mathrm{remaining}}=\emptyset \}$ and $V^*\gets \{j\in V_{\mathrm{active}}\mid A_{j}\cap W_{\mathrm{remaining}}\neq\emptyset \}$\;
  \While{$V^*\neq \emptyset$}{
$C'\gets \argmin_{c\in W_{\mathrm{remaining}} : V[c]\cap V^* \neq \emptyset} \frac{h_c}{|V[c]\cap V^*|}$\;
\For{$i\in V^*$}{\lIf{$A_i\cap C'\neq \emptyset$}{$C_i\gets A_i\cap C'$, $V^*\gets V^*\setminus \{i\}$\label{line:assign-ci}}}
  }
  \KwRet $(C_i)_{i\in V_{\mathrm{active}}}$\;
}

\BlankLine
\Fn{\textsc{Check Blocking}{$(V_{\mathrm{active}}$, $W_{\mathrm{critical}}$,$p$, $(C_i)_{i\in V_{\mathrm{active}}})$}}{
 $V^* \gets \emptyset$\;
 \For{each unselected candidate $c \in C\setminus W$}{
    \lIf{$\sum_{i\in V[c]}p(i,i)=1$}{ $V^* \gets V^* \cup \{i\in V[c] \cap V_{\mathrm{active}} \mid A_i\cap W_{\mathrm{critical}}=\emptyset\wedge C_i=\emptyset   \}$}
    
    \For{each $c'\in W$ with
  $\sum_{i\in V[c] \setminus V[c']} p(i,i) + \sum_{i\in V[c']\cap  V[c] }  p(i,c') =1
  $}{$V^* \gets V^* \cup (\{i\in (V[c] \setminus V[c']) \cap V_{\mathrm{active}} \mid A_i\cap W_{\mathrm{critical}}=\emptyset\wedge C_i=\emptyset   \}\cup \{i\in V[c']\cap  V[c] \cap V_{\mathrm{active}} \mid A_i\cap W_{\mathrm{critical}}=\emptyset\wedge c' \in C_i\})$}  
}
  \KwRet $V^*$\;
}

\BlankLine
\Fn{\textsc{Distribute Residual}{$(p)$}}{
  $V_{\mathrm{active}}\gets V$; $V_{\mathrm{blocked}}\gets \emptyset$\;
  \While{$V_{\mathrm{active}}\neq \emptyset$ and $\min_{i\in V_{\mathrm{active}}} b_i<\max_{i\in V} b_i$: for an infinitesimal $\varepsilon$ step}{
    $V_{\mathrm{blocked}}\gets V_{\mathrm{blocked}} \cup $\textsc{Check Blocking}$(V_{\mathrm{active}}$, $\emptyset$,$p$, $(\emptyset)_{i\in V_{\mathrm{active}}})$,$V_{\mathrm{active}}\gets V_{\mathrm{active}}\setminus V_{\mathrm{blocked}}$\label{line:block-res}\;\
    $p(i,i)\gets p(i,i)+\varepsilon$ for each $i\in \{j \in V_{\mathrm{active}} \mid b_j = \min_{j'\in V_{\mathrm{active}}} b_{j'} \}$\label{line:distr-res}\;
  }
  \KwRet $p$\;
}
\end{algorithm}

\paragraph{Planning phase.}

The planning phase consists of several steps that are repeated until a fixed point is reached (i.e., until $V_{\mathrm{blocked}}$, $V_{\mathrm{active}}$, $W_{\mathrm{critical}}$, and the spending sets $(C_i)_{i\in V_{\mathrm{active}}}$ no longer change).

Line \ref{line:C-fund}: We compute $(C_i)_{i\in V_{\mathrm{active}}}$ via \textsc{MoneyFlow}. We set $C_i=\emptyset$ whenever $A_i\cap W_{\mathrm{remaining}}=\emptyset$.
Otherwise, \textsc{MoneyFlow} assigns spending sets by iteratively ``covering'' voters: let $V^*\subseteq V_{\mathrm{active}}$ be the set of active voters whose spending sets are not yet assigned. For each $c\in W_{\mathrm{remaining}}$, we compute the time until $c$ would be fully paid if all supporters with an unassigned spending set $V[c]\cap V^*$ funneled their money exclusively to $c$, namely $\frac{h_c}{|V[c]\cap V^*|}$.
Let $C'$ be the set of candidates attaining the minimum ratio (i.e., those candidates that can be bought with minimum contributions by each supporter), and assign each voter $i\in V^*$ with $A_i\cap C'\neq\emptyset$ the set $C_i:=A_i\cap C'$. Remove these voters from $V^*$ and repeat until $V^*=\emptyset$.

Line \ref{line:V-blocked}: In \textsc{Check Blocking}, we block any voter $i$ who would create a 1-stability violation if they were to receive money in the next infinitesimal step and who does not approve a candidate from $W_\mathrm{critical}$ (see below):
if $C_i=\emptyset$, this means that increasing $p(i,i)$ would violate 1-stability; if $C_i\neq\emptyset$, it means that increasing $i$'s payments towards some candidate in $C_i$ would violate 1-stability.

Lines \ref{line:uns} and \ref{line:crit}: A critical issue arises when all supporters of some remaining candidate $c$ become blocked. Then $c$ would never be fully paid for.\footnote{This is unavoidable, for instance, when $W$ is not Pareto optimal and there is an unselected candidate $c' \in C\setminus W$ with $V[c]\subset V[c']$. Resolving this is necessary to ensure that our explanation rule returns a price system for every committee.} 
We therefore declare such a candidate $c$ \emph{critical} by adding it to $W_{\mathrm{critical}}$, and we unblock its supporters by re-adding them to $V_{\mathrm{active}}$. Moreover, in \textsc{Check Blocking}, we do not block voters who approve any currently critical candidate.

We iterate until a fixed point is reached, i.e., until the sets of blocked voters, critical candidates, and spending sets remain unchanged from one iteration to the next. This guarantees that the subsequent spending phase does not introduce any 1-stability violations, except for those handled later in the pruning phase.\footnote{If we executed the loop only once, spending sets would be outdated at the point where money gets spent as they would be computed with respect to a set of active voters that might subsequently get changed.}
The inner loop executes at most $n+\lvert W\rvert$ times before a fixed point is reached, since in each iteration we must either add a new candidate to $W_{\mathrm{critical}}$ or block a previously unblocked voter. Each candidate enters $W_{\mathrm{critical}}$ at most once, and each voter can enter $V_{\mathrm{blocked}}$ at most once, as they are only removed if they approve a critical candidate and can then only be re-added later once this candidate is fully paid for.

\paragraph{Pruning phase.}
Unblocking supporters of a critical candidate $c^*$ in Line \ref{line:crit} can lead to violations of 1-stability when these supporters increase their spending on selected candidates; note that supporters of $c^*$ are ``ignored'' during \textsc{Check Blocking} and are thus free to cause such deviations. We therefore consider all pairs $(c,c')$ with $c\in C\setminus W$ and $c'\in W$ that violate the 1-stability constraint, and compute the corresponding excess $\delta= (\sum_{i\in V[c] \setminus V[c']} p(i,i) + \sum_{i\in V[c']\cap  V[c] }  p(i,c') )-1>0$.
While one could eliminate such a deviation by setting the residuals of all voters in $V[c]\setminus V[c']$ to zero (note that  $\delta\leq \sum_{i\in V[c]\setminus V[c']} p(i,i)$), this is overly aggressive. Instead, we scale down the residuals of the voters in $V[c]\setminus V[c']$ uniformly, reducing each residual by the same fraction
$\nicefrac{\delta}{\sum_{i\in V[c]\setminus V[c']} p(i,i)}$.
If a voter participates in multiple violating pairs, we apply the strongest required scaling (i.e., the maximum reduction factor among the violations involving that voter) once in Line~\ref{line:pruning-end}.

\paragraph{Residual phase.}
In Line~\ref{line:dis}, we allocate additional residual. Following \Cref{alg:es}, we maintain the set $V_{\mathrm{active}}$ of voters for whom increasing $p(i,i)$ would not violate 1-stability. Among these voters, we identify those with minimum current budget and increase their residuals by an infinitesimal amount~$\varepsilon$. We repeat until either (i) $V_{\mathrm{active}}$ becomes empty (so any further increase would violate 1-stability) or (ii) the minimum budget among  voters from $V_{\mathrm{active}}$ reaches the maximum budget in the instance, at which point we stop to prevent excessive residual accumulation.

\section{Experimental Analysis}\label{sec:exp}
We empirically investigate whether our explanations align with established proportionality notions (\Cref{sec:exp:ejr}) and can reveal latent differences in voter influence (\Cref{sec:exp:power}).
We report results on $1000$ profile--committee pairs sampled from the Euclidean-VCR model, a standard model for synthetic, approval elections \citep{DBLP:conf/ijcai/BoehmerFJ0LPRSS24}. In each run, we sample $n,m\in[10,100]$ and a committee $W\subseteq C$ of size $k= \lfloor m/2 \rfloor$ uniformly at random. To generate the approval profile, we embed voters and candidates independently and uniformly in the two-dimensional unit square. For each profile, we sample an approval radius $r\in[0.05,0.3]$ and let the voters approve all candidates within distance~$r$.
In Appendix~\ref{app:exp}, we present additional experiments, among others  results for profiles sampled from the resampling model, as well as for real-world participatory budgeting data from Pabulib, where the results are qualitatively comparable. We also vary the committee size using $k\in\{\lfloor m/8 \rfloor,\,\lfloor m/4 \rfloor\}$. The code for all experiments is available on GitHub: \url{https://github.com/luca-kreisel/Explanation-Systems}. 

\subsection{EJR+ and Budgets}\label{sec:exp:ejr}

We study the empirical relationship between budget distributions and proportionality. To this end, we use \textsc{EJR+}, one of the strongest known always satisfiable proportionality notions:
\begin{definition}[\citep{BrPe23a}]
For an approval profile $A \in \app$ and $\alpha>0$, a committee $W$ satisfies $\alpha$-\textsc{EJR+} if there is no candidate $c \notin W$, group of voters $V' \subseteq V$, and $\ell \in \mathbb{N}$ with $|V'| \geq \alpha \cdot \frac{\ell n}{|W|}$ such that
$
c \in \bigcap_{i \in V'} A_i \text{ and }  |A_i \cap W| < \ell \text{ for all } i \in V'.
$
\end{definition}
For each profile $A\in\app$ and committee $W\subseteq C$, we compute the minimum value $\alpha$ for which $W$ satisfies $\alpha$-EJR+ in~$A$. The larger the values of $\alpha$ the less proportional the committee. 
\Cref{fig:ejr:main} shows scatter plots where each dot corresponds to an instance; the $x$-axis reports the minimum budget as a fraction of $\nicefrac{\lvert W\rvert}{n}$, and the $y$-axis reports the minimum $\alpha$ value for EJR+. Values below $1$ on the $x$-axis correspond to non-uniform budgets and signal under-representation according to our explanation systems, while values above $1$ on the $y$-axis correspond to violations of $1$-EJR+.

We ask whether non-proportionality, as captured by EJR+, results in non-uniform budgets. For \texttt{Continuous Phragm{\'e}n} and  \texttt{Equal Split}\footnote{\label{fn:equalsplit}The similar performance of \texttt{Equal Split} compared to \texttt{Continuous Phragmén} may seem surprising given that it can assign undesirable payments (cf.~\Cref{alg:es}). A possible explanation is that on the tested instances, residuals—which are set in the same manner for both—may play a more important role than payments.  This suggests that 1-stability, which both methods satisfy but \texttt{Approximate Priceability} does not, may be a significant reason for their superior performance.}, we observe a strong association: instances with larger $\alpha$ tend to exhibit smaller minimum-budget fractions and in particular every committee that is unproportional under EJR+ also has non-uniform budget and a minimum budget fraction below $1$ (cf. \Cref{fig:ejr:k-hist-three}). In contrast, for \texttt{Approximate Priceability}, the correlation is noticeably weaker. In particular, there are many instances with $\alpha>1$ for which the method still outputs uniform budgets. 

Note that the converse does not hold in general: satisfying $1$-EJR+ does not imply budget-uniformity. This is expected, since EJR+ captures only one aspect of proportional representation, and there are instances where committees satisfy EJR+ despite exhibiting   inequalities in influence.

\begin{figure}[t]
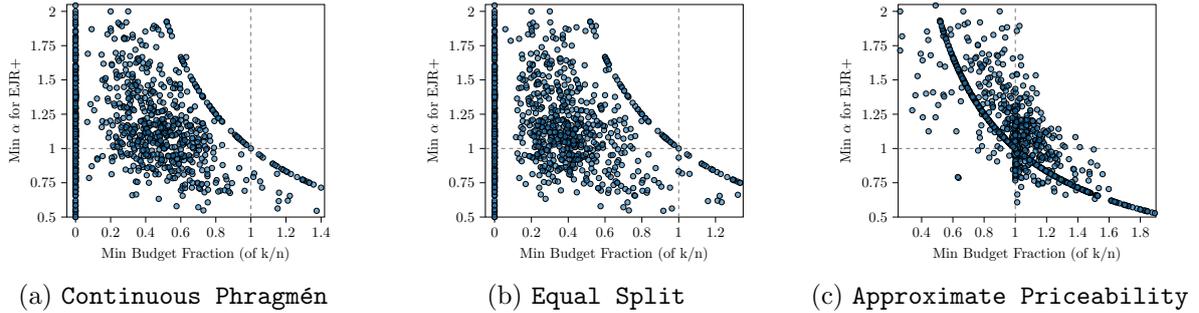

\centering
\begin{subfigure}[b]{0.32\textwidth}
\centering
\resizebox{!}{3.5cm}{%
    \input{figures/tikzplotlib/EUCLIDEAN2DVCRwithk=m0.5CONTPHRAGMENscatter}
}
\caption{\texttt{Continuous Phragmén}}
\end{subfigure}
\hfill
\begin{subfigure}[b]{0.32\textwidth}
\centering
\resizebox{!}{3.5cm}{%
    \input{figures/tikzplotlib/EUCLIDEAN2DVCRwithk=m0.5EQUALSPLITscatter}
}
\caption{\texttt{Equal Split}}
\end{subfigure}
\hfill
\begin{subfigure}[b]{0.32\textwidth}
\centering
\resizebox{!}{3.5cm}{%
    \input{figures/tikzplotlib/EUCLIDEAN2DVCRwithk=m0.5APPROXPRICEABILITYscatter.tex}
}
\caption{\texttt{Approximate Priceability}}
\end{subfigure}
\caption{Scatterplot of minimum budget fraction and $\alpha$-EJR+ approximation for $k=\lfloor m/2 \rfloor$ on profiles sampled from the  Euclidean-VCR model and random committees. Axes are clipped to the 5th–95th percentile range. }
\label{fig:ejr:main}
\end{figure}

\subsection{Recovering Voting Power}\label{sec:exp:power}
One use case of explanation systems is to recover unequal voter influence. We therefore ask:
\emph{Given an approval profile and a committee produced by a proportional rule run on the profile, but where voters were assigned different weights, can our explanation rules recover the underlying voter weights via the assigned budgets?}
This question is highly non-trivial, as the committee alone does not uniquely determine the weights, e.g., any splitting of some voting weight on two voters with an identical approval set would lead to the same outcome for most rules.

We build on Euclidean-VCR instances and generate voter weights in a structured way using a spatial bias: we sample a point $q$ uniformly at random in the two-dimensional unit square. 
For each voter, we compute their weight following a Gaussian decay\footnote{The parameters were chosen such that voters exhibit a meaningful range of weights across the unit square.} as: $\exp(\frac{-d^2}{2\cdot 0.3^2})$, where $d$ is the distance between the voter and $q$. We then compute the committee $W$ using the method of equal shares \citep{joss-abcvoting,PeSk20a} in the resulting weighted instance.

For each instance and each computed price system, we measure the Pearson correlation coefficient (PCC) between the ground-truth voter weights and the budgets output by the explanation rule (run on the profile without information about the weights).
For \texttt{Continuous Phragm{\'e}n}, 11.9\% of instances exhibit a very strong correlation (PCC $> 0.9$), 65.3\% show a strong correlation (PCC $> 0.7$), and 91.1\% show a moderate correlation (PCC $> 0.4$); see \citep{schober2018correlation} for these interpretations of PCC values.
For \texttt{Equal Split}, these figures are 28.1\%, 72.8\%, and 92.9\%, respectively (cf. Footnote~\ref{fn:equalsplit} on its similar performance). For \texttt{Approximate Priceability} they are 3.3\%, 37.3\%, and 79.3\%.
Overall, these results suggest that our explanations, especially those computed by  \texttt{Continuous Phragm{\'e}n} and \texttt{Equal Split}, can meaningfully reconstruct latent unequal influence.

\section{Conclusion}
We initiated the formal study of explanation systems for multiwinner voting. Explanation systems enable a more nuanced analysis of committees and their proportionality
beyond standard binary axioms. In particular, we advocate for using price systems as explanations and introduce \texttt{Continuous Phragm{\'e}n} as an axiomatically grounded explanation rule. Using these price systems, we obtain a principled way to measure influence in an election. Our work opens several directions. First, on the technical side, it would be valuable to develop and experiment with additional explanation rules and axioms. Further, we would be interested in identifying axioms that lend themselves to characterization results. Beyond allowing non-uniform budgets, an interesting extension is to permit non-uniform candidate prices as an additional indicator of misrepresentation. Second, it is natural to extend explanation systems to other collective decision-making settings, such as multiwinner voting with ranked or cardinal preferences (for instance, deliberation platforms typically allow participants to rate comments or to cast up- and down-votes) or participatory budgeting with heterogeneous project costs. Finally, it is promising to apply explanation systems to audit fairness in broader subset selection problems.
As an example, consider extractive text summarization, where a subset of documents from a corpus is selected as a summary. Prior work has highlighted unfairness in established summarization algorithms using group-fairness notions based on sensitive attributes \citep{DBLP:journals/pacmhci/DashSBGGC19,DBLP:conf/www/KeswaniC21,DBLP:conf/www/ShandilyaGG18}. This setting fits our model: we can view each document as a candidate and as a voter approving semantically similar documents. In Appendix~\ref{app:exp:textsum}, as a proof of concept, we apply our framework to audit summaries produced by the extractive LSA algorithm \citep{DBLP:journals/jasis/DeerwesterDLFH90} on a dataset derived from a deliberation hosted on \citet{polis_data_2026}. We find that LSA summaries are not proportionally representative of the topics in the corpus, and that the price systems computed by \texttt{Continuous Phragm{\'e}n} and \texttt{Equal Split} are able to reveal this disproportionality.

\newpage

\addtocontents{toc}{\protect\setcounter{tocdepth}{-1}}
\bibliographystyle{plainnat}

\newpage

\appendix

\section{Strengthening Priceability}\label{app:price}\label{sec:str_price}

We are not the first to observe that priceability can be too unrestrictive in many cases: \citet{PPSS21a} make similar points and introduce the notion of stable priceability, which additionally requires that voters spend their money efficiently. Specifically, they require that no unselected candidate  can be afforded with the supporters either paying slightly less than their residual or maximum payment to any single selected candidate:

\begin{definition}\label{def:s_costStable}
  A price system $\mathbf{ps} = (V,W, p)$ for an approval profile $A\in \app$ and committee $W\subseteq C$ is \emph{stable} if for all $c \in C\setminus W$ it holds that
  $
    \sum_{i \in V[c]} \max\left(r_i,\: \max_{c^* \in W} p(i, c^*) \right) \leq 1.
  $

\end{definition}

A committee is called stable priceable if a budget-uniform, stable price system exists. \citet{PPSS21a} show that stable priceability is a desirable proportionality notion (for instance, it implies core stability), yet also demonstrate that some instances admit no committee satisfying stable priceability. We show that contrary to residual stable price systems, stable price systems need not exist for all committees (even when the price system need not be budget-uniform).
\begin{observation}\label{app:obs:nonExStable}
    There exists an approval profile $A\in \app$ and a committee $W\subseteq C$ such that no price system is stable for $(A, W)$.
\end{observation}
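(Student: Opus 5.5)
The plan is to exhibit one explicit small instance in which the stability inequality for some unselected candidate is forced to fail by the price constraints alone, whatever residuals are chosen. The key observation is that residuals can always be set to zero, so they can never rescue stability. What matters is that a selected candidate with a single supporter forces that supporter to pay the full price~$1$. If two such supporters share an unselected candidate, then that candidate's stability sum is at least $2$.

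Concretely, I would take $V=\{1,2\}$ and $C=\{c,w_1,w_2\}$ with $A_1=\{c,w_1\}$ and $A_2=\{c,w_2\}$, and the committee $W=\{w_1,w_2\}$. Every candidate has at least one supporter and every approval set is nonempty, so the standing assumptions of the paper hold. Now fix an arbitrary price system $\mathbf{ps}=(V,W,p)$ for $(A,W)$.

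First, $V[w_1]=\{1\}$. Since voters only pay for selected candidates they approve and each selected candidate costs exactly $1$, we get $p(1,w_1)=1$. By the same argument, $p(2,w_2)=1$.

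Next, look at the unselected candidate $c$, whose supporters are $V[c]=\{1,2\}$. Bounding each maximum from below by the forced payment gives
\[
\sum_{i\in V[c]} \max\Bigl(r_i,\ \max_{c^*\in W} p(i,c^*)\Bigr) \;\ge\; p(1,w_1)+p(2,w_2) \;=\; 2 \;>\; 1 .
\]
This violates the condition of \Cref{def:s_costStable} for $c$. Since $\mathbf{ps}$ was arbitrary, no stable price system exists for $(A,W)$.

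There is no real obstacle in this argument. The only step needing care is checking that the forced payments $p(1,w_1)=p(2,w_2)=1$ follow directly from \Cref{def:pricesystem}, which they do because each of $w_1,w_2$ has a unique supporter. It is also worth remarking that this committee is not Pareto optimal. One could note that a Pareto-optimal example likely exists as well, but the statement does not require one.
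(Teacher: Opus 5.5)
Your proof is correct and matches the paper's: it uses the same two-voter instance, where each selected candidate has a unique supporter (so that supporter must pay $1$) and both voters approve the unselected candidate, whose stability sum is therefore at least $2$. Your closing remark also checks out, since the paper later gives a Pareto-optimal instance (Proposition~\ref{prop:weakCostAwareNonEx}) in which even weakly stable, and hence also stable, price systems do not exist.
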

\begin{proof}
    Consider a profile with $V=[2]$ and candidates $C=\{c^*, c_1, c_2\}$, where all voters approve $c^*$, and candidates $c_1$ and $c_2$ are approved only by voters $1$ and $2$, respectively. The committee is $W=\{c_1,c_2\}$. In any price system $\mathbf{ps}$ for $W$, it needs to hold that $p(1,c_1)=1$ and $p(2,c_2)=1$, which however directly implies that stability is violated for candidate $c^*\in C\setminus W$.
    \end{proof}

This motivates us to search for weakenings of the stability condition. Indeed, the stability condition is quite strong: the hypothetical deviation it prohibits—supporters of a candidate pooling their money to additionally buy that candidate—would often not be beneficial for these voters, as it might cause many candidates they approve to no longer be fully funded: while a supporter only pools away money from one currently selected candidate, others might pool away money from other selected candidates the supporter approves.

We propose a weaker version of stability that only prohibits deviations that are actually attractive to voters. Specifically, we ask whether there exists an unselected candidate $c$ that its supporters can afford in a way such that they either increase their utility, or maintain the same utility while increasing their residual after payments are rearranged.
 
\begin{definition}\label{def:w_costStable}
  A price system $\mathbf{ps} = (V,W, p)$ for an approval profile $A\in \app$ and committee $W\subseteq C$ is \emph{weakly stable} if for all $c \in C\setminus W$ it holds that
  \[
    \max_{\substack{X,Y \subseteq V[c] \\ X \cap Y = \emptyset}} \:\max_{\substack{S \subseteq W \\ \forall i \in X,\: |S \cap A_i| \leq 1 \\ \forall i \in Y,\: S \cap A_i = \emptyset}}
    \:\sum_{i \in X } \max_{c^* \in S} p(i, c^*)  +  \sum_{i \in Y }r_i \leq 1.
  \]
\end{definition}
 It is easy to see that for some approval profiles and non-Pareto-optimal committees, no weakly stable price system exists (e.g., consider the example in \Cref{app:obs:nonExStable}). Unfortunately, requiring that the committee be Pareto optimal does not suffice to ensure existence.
\begin{restatable}{proposition}{weakCostAwareNonEx}\label{prop:weakCostAwareNonEx}
Weakly stable price systems may fail to exist, even for Pareto optimal committees, i.e., 
  there is an approval profile $A\in \app$ with Pareto optimal committee $W\subseteq C$ for which no price system is weakly stable.
\end{restatable}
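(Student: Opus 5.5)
The plan is to give an explicit counterexample: a small profile $A$ and a Pareto optimal committee $W$ for which the constraints of \Cref{def:w_costStable} cannot all hold. The key observation is that weak stability is a family of linear constraints once $c$, $X$, $Y$, $S$ are fixed. Each such choice gives the inequality $\sum_{i\in X} p(i,c^*_i)+\sum_{i\in Y} r_i\le 1$. Here $c^*_i$ is the unique candidate of $S$ approved by $i$, or, if $i$ approves none of $S$, that term is simply dropped.

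So non-existence amounts to infeasibility of a linear system, together with the equalities $\sum_i p(i,c)=1$ for $c\in W$. The cleanest certificate is a Farkas-type argument. I will pick a few weak-stability inequalities (one per suitable choice of $c,X,S$) such that each payment variable $p(i,c')$ with $c'\in W$ appears in at least one of them. Summing them then gives $|W|\le(\text{number of chosen inequalities})$. The task is to make the number of inequalities strictly smaller than $|W|$. Residuals can be ignored, since $Y=\emptyset$ is always allowed, and the contradiction comes from payments alone.

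The construction will use selected candidates of small support, each forming a ``block'' of voters. The unselected candidates are transversals: each approved by one voter from every block. Two transversals $c,d$ then cover all payments of the blocks, giving $|W|\le 2$. The first attempt (blocks $a=\{1,2\}$, $b=\{3,4\}$, $e=\{5,6\}$, transversals $c=\{1,3,5\}$, $d=\{2,4,6\}$) already yields $3\le 2$. However, $W$ is then not Pareto optimal, since $\{c,d,\cdot\}$ dominates it.

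So the real work is to strengthen the selected side while keeping the covering argument. I would give voters additional approved selected candidates, so that no committee of the same size containing the transversals dominates $W$. This exploits that in $X$ a voter contributes only one payment, namely that to the unique approved candidate in $S$. A voter approving two selected candidates can therefore still be used, with $S$ chosen to contain only one of them. I will also adjust supports so that unselected candidates are not broader than the selected ones they would replace.

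I expect reconciling these two requirements to be the main obstacle. Weak stability needs cheap coverings of payment mass by few unselected candidates. Pareto optimality forbids unselected candidates that can profitably replace selected ones. The first thing I would try is a small computer search over profiles with roughly $4$--$6$ voters and $5$--$7$ candidates, solving the LP for weak stability and checking Pareto optimality by enumeration. Once an instance is found, I would extract a short human-readable proof. This consists of listing the chosen $(c,X,S)$ constraints and summing them against the price equalities to obtain a contradiction. Pareto optimality is then argued by a case analysis over the same-size committees containing an unselected candidate, showing that each makes some voter strictly worse off.
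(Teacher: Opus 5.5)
The gap is that the proposal never produces an instance, so for an existence statement there is no proof yet. Your reduction to linear infeasibility is sound, and it is also how the paper argues: residuals set to zero, one inequality per admissible $(c,X,S)$, together with the unit-price equalities. But the concrete profile, the check of Pareto optimality and the infeasibility certificate are the whole proof, and a planned computer search does not supply them.

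Worse, the certificate you aim for cannot work, so the tension you flag is not a matter of tuning. Suppose the chosen inequalities belong to pairwise distinct unselected candidates $d_1,\dots,d_t$ with $t<|W|$. Suppose each is used with unit weight, and together they cover every payment variable $p(i,c')$, $c'\in W$. Fix a voter $i$. Each $c'\in A_i\cap W$ is covered by some inequality $j$ with $i\in X_j\subseteq V[d_j]$ and $c'\in S_j$. Since $|S_j\cap A_i|\le 1$, distinct $c'$ need distinct $j$. Hence $|A_i\cap D|\ge|A_i\cap W|$ for $D=\{d_1,\dots,d_t\}$. Now take $W'=D\cup R$ for any $R\subseteq W$ with $|R|=|W|-t\ge 1$. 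Then $|A_i\cap W'|\ge|A_i\cap W|+|A_i\cap R|$ for all $i$, which is a Pareto improvement, because every candidate has a supporter. So the failure of your transversal example is systematic: the covering itself is the dominating committee. Giving voters extra selected approvals cannot repair this. The only escape would be to reuse one unselected candidate with several different sets $S$, which your plan does not develop.

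The missing idea is a genuinely weighted certificate. The paper's instance has $W=\{c_2,c_5,c_8\}$ with $V[c_2]\cap V[c_8]=\emptyset$, so $S=\{c_2,c_8\}$ is admissible for every voter. It uses only the price equalities of $c_2$ and $c_8$ and six inequalities, for $c_1,c_3,c_4,c_6,c_7,c_9$, which is twice $|W|$. Candidate $c_5$ is there only to make $W$ Pareto optimal. The contradiction comes from a chain of deductions:
\begin{itemize}
\item $p_{7,8}+p_{8,8}\le p_{6,2}\le p_{7,8}$, which forces $p_{8,8}=0$;
\item then $p_{5,2}=0$ and $p_{3,8}=p_{2,2}$;
\item finally $p_{7,8}\ge 1$, against $p_{7,8}\le\frac12$.
\end{itemize}
Equivalently, this is a Farkas combination in which the two price equalities get different multipliers. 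A uniform multiplier is impossible with these six inequalities: $p_{1,2}$ occurs only in those for $c_3,c_6,c_7$, and $p_{6,2}$ only in those for $c_4,c_9$, so any covering has total weight at least twice the multiplier. To complete your approach you need three things: this kind of non-uniform certificate, an explicit instance, and a hand verification of Pareto optimality.
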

\begin{proof}
  Consider the following instance with $n = 8$ voters $V=[n]$ and $m = 9$ candidates $C$:
  \begin{itemize}
    \item $c_1$ approved by $\{2, 4, 5, 7, 8\}$,
    \item $c_2$ approved by $\{1, 2, 5, 6\}$,
    \item $c_3$ approved by  $\{1, 3, 5, 7\}$,
    \item $c_4$ approved by  $\{2, 6, 7, 8\}$,
    \item $c_5$ approved by $\{1, 3, 4, 5, 6, 7\}$,
    \item $c_6$ approved by  $\{1, 5, 8\}$,
    \item $c_7$ approved by $\{1, 2, 3, 4, 7\}$,
    \item $c_8$ approved by $\{3, 4, 7, 8\}$,
    \item and $c_9$ approved by $\{3, 4, 6, 8\}$.
  \end{itemize}
  The committee is $W = \{c_2, c_5, c_8\}$, one can verify that $W$ is indeed Pareto optimal.

  Suppose for the sake of contradiction that there exists a weakly stable price system $\mathbf{ps}$ for $W$. Without loss of generality, we can assume that $p(i,i)=0$ for all $i\in V$. In the following, we will consider a subset of the constraints that $\mathbf{ps}$ needs to satisfy, and we will show that these lead to a contradiction and are thus infeasible.
  For brevity, we will shorten $p(i,c_\ell)$ by $p_{i,\ell}$ for $i\in V$ and $c_\ell \in C$. Now, since each selected candidate needs to be funded at a price of $1$, we in particular have the following two constraints for $c_2$ and $c_8$.
  \begin{align}
    &\sum_{i\in V[c_2]} p(i,c_2) = p_{1,2}+p_{2,2}+p_{5,2}+p_{6,2}=1, \label{propNonEx:eq1}\\
    &\sum_{i\in V[c_8]} p(i,c_8) = p_{3,8}+p_{4,8}+p_{7,8}+p_{8,8}=1. \label{propNonEx:eq2}
  \end{align}
  Moreover, simplifying the weak stability constraint by removing the residual contribution implies that for all $c \notin W$, the following inequality needs to hold.
  \[
    \max_{\substack{A \subseteq V[c]}} \:\max_{\substack{S \subseteq W \\ \forall i \in A,\: |S \cap A_i| \leq 1}}
    \:\sum_{i \in A } \max_{c^* \in S} p(i, c^*) \leq 1
  \]
  In particular, this implies that for $S=\{c_2,c_8\}$ and
  \begin{align}
    c=c_1,\: A=\{2,4,5,7,8\}: \: & \sum_{i \in A } \max_{c^* \in S} p(i, c^*) = p_{2,2}+p_{4,8}+p_{5,2}+p_{7,8}+p_{8,8}\le 1, \label{propNonEx:eq3}\\
    c=c_3,\: A=\{1,3,5,7\}: \: & \sum_{i \in A } \max_{c^* \in S} p(i, c^*) = p_{1,2}+p_{3,8}+p_{5,2}+p_{7,8}\le 1, \label{propNonEx:eq4}\\
    c=c_4,\: A=\{2,6,7,8\}: \: & \sum_{i \in A } \max_{c^* \in S} p(i, c^*) = p_{2,2}+p_{6,2}+p_{7,8}+p_{8,8}\le 1, \label{propNonEx:eq5}\\
    c=c_6,\: A=\{1,5,8\}: \: & \sum_{i \in A } \max_{c^* \in S} p(i, c^*) = p_{1,2}+p_{5,2}+p_{8,8}\le 1, \label{propNonEx:eq6}\\
    c=c_7,\: A=\{1,2,3,4,7\}: \: & \sum_{i \in A } \max_{c^* \in S} p(i, c^*) = p_{1,2}+p_{2,2}+p_{3,8}+p_{4,8}+p_{7,8}\le 1, \label{propNonEx:eq7}\\
    c=c_9,\: A=\{3,4,6,8\}: \: & \sum_{i \in A } \max_{c^* \in S} p(i, c^*) = p_{3,8}+p_{4,8}+p_{6,2}+p_{8,8}\le 1. \label{propNonEx:eq8}
  \end{align}
  Now, summing \eqref{propNonEx:eq3}--\eqref{propNonEx:eq8}, we get
  $$
  3\cdot(p_{1,2}+p_{2,2}+p_{3,8}+p_{4,8}+p_{5,2})+2p_{6,2}+4\cdot(p_{7,8}+p_{8,8})\leq 6.
  $$
  Rearranging \eqref{propNonEx:eq1} as $p_{1,2}+p_{2,2}+p_{5,2}=1-p_{6,2}$, and \eqref{propNonEx:eq2} as $p_{3,8}+p_{4,8}=1-p_{7,8}-p_{8,8}$, we can rewrite this as
  \begin{align}
    3\cdot(1-p_{6,2}+1-p_{7,8}-p_{8,8})+2p_{6,2}+4\cdot(p_{7,8}+p_{8,8})&=6-p_{6,2}+p_{7,8}+p_{8,8}\leq 6 \notag\\
    \iff p_{7,8}+p_{8,8} &\leq p_{6,2}.\label{propNonEx:eq9}
  \end{align}
  Since subtracting the left-hand side of \eqref{propNonEx:eq8} from the left-hand side of \eqref{propNonEx:eq2} needs to be at least $0$, we have
  \begin{align}
    p_{3,8}+p_{4,8}+p_{7,8}+p_{8,8}-p_{3,8}-p_{4,8}-p_{6,2}-p_{8,8}=p_{7,8}-p_{6,2}\geq0 \iff p_{7,8}\geq p_{6,2}.\label{propNonEx:eq10}
  \end{align}
  Together, it follows from \eqref{propNonEx:eq9} and \eqref{propNonEx:eq10} that
  $$
  p_{7,8}\geq p_{6,2}\geq p_{7,8}+p_{8,8},
  $$
  so $p_{8,8}=0$ and
  \begin{align}
    p_{7,8}= p_{6,2}.\label{propNonEx:eq11}
  \end{align}
  Using \eqref{propNonEx:eq11}, we can reformulate \eqref{propNonEx:eq1} as follows
  \begin{align}
    p_{1,2}+p_{2,2}+p_{5,2}+p_{7,8}=1.\tag{1$'$}\label{propNonEx:eq1pr}
  \end{align}
  Now, as above, subtracting the left-hand side of \eqref{propNonEx:eq4} from the left-hand side of \eqref{propNonEx:eq1pr} gives
  \begin{align}
    p_{1,2}+p_{2,2}+p_{5,2}+p_{7,8}-p_{1,2}-p_{3,8}-p_{5,2}-p_{7,8}=p_{2,2}-p_{3,8}\geq 0 \iff p_{2,2}\geq p_{3,8}. \label{propNonEx:eq12}
  \end{align}
  Using $p_{8,8}=0$ in \eqref{propNonEx:eq2}, we get
  \begin{align}
    p_{3,8}+p_{4,8}+p_{7,8}=1 \iff p_{4,8}=1-p_{3,8}-p_{7,8}. \tag{2$'$}\label{propNonEx:eq2pr}
  \end{align}
  Substituting \eqref{propNonEx:eq2pr} and $p_{8,8}=0$ into \eqref{propNonEx:eq3} gives
  \begin{align}
    p_{2,2}+1-p_{3,8}-p_{7,8}+p_{5,2}+p_{7,8}+p_{8,8}\le 1 \iff p_{2,2}+p_{5,2}\le p_{3,8}. \label{propNonEx:eq13}
  \end{align}
  From \eqref{propNonEx:eq12} and \eqref{propNonEx:eq13}
  $$
  p_{2,2}+p_{5,2}\le p_{3,8}\le p_{2,2},
  $$
  so $p_{5,2}=0$ and
  \begin{align}
    p_{3,8}=p_{2,2}. \label{propNonEx:eq14}
  \end{align}
  Rearranging \eqref{propNonEx:eq1pr} using $p_{5,2}=0$ as $p_{1,2}+p_{2,2}+p_{7,8}=1 \iff p_{1,2}=1-p_{2,2}-p_{7,8}$ and \eqref{propNonEx:eq14}, we can rewrite \eqref{propNonEx:eq7} as
  \begin{align*}
    1-p_{2,2}-p_{7,8}+p_{2,2}+p_{2,2}+p_{4,8}+p_{7,8}\le 1.
  \end{align*}
  Furthermore, substituting $p_{4,8}$ using \eqref{propNonEx:eq2pr}, and applying the identity from \eqref{propNonEx:eq14} gives
  \begin{align}
    1-p_{2,2}-p_{7,8}+p_{2,2}+p_{2,2}+1-p_{2,2}-p_{7,8}+p_{7,8}=2-p_{7,8}\le 1 \iff 1 \le p_{7,8}.\label{propNonEx:eq15}
  \end{align}
  Using $p_{8,8}=0$ and the identity from \eqref{propNonEx:eq11}, we rewrite \eqref{propNonEx:eq5} as
  $$
  p_{2,2}+p_{6,2}+p_{7,8}+p_{8,8}=p_{2,2}+p_{7,8}+p_{7,8}=2p_{7,8}+p_{2,2} \le 1
  \implies p_{7,8} \leq \frac{1}{2},
  $$
  which however contradicts \eqref{propNonEx:eq15} and concludes the proof.
\end{proof}

This means there is no existence guarantee, which leads us to rule out imposing weak stability as a requirement. Note that \Cref{prop:weakCostAwareNonEx} also strengthens \Cref{app:obs:nonExStable}. Moreover, we show that checking whether a given price system is weakly stable is computationally intractable, further limiting the practicality of this solution concept:
\begin{restatable}{proposition}{weakCostAwareStabilityCoNP}\label{prop:weakCostAwareStabilityCoNP}
  For a given approval profile $A\in \app$, committee $W\subseteq C$, and  price system $\mathbf{ps}$, it is coNP-complete to check if
  $\mathbf{ps}$ is weakly stable, even if $W$ is Pareto optimal in $A$.
\end{restatable}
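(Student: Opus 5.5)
We prove the two parts separately: membership in coNP, and coNP-hardness for Pareto optimal committees.

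\emph{Membership.} The complement problem, ``$\mathbf{ps}$ is not weakly stable'', is in NP. A certificate consists of an unselected candidate $c\in C\setminus W$, disjoint sets $X,Y\subseteq V[c]$, and a set $S\subseteq W$. We check in polynomial time that $|S\cap A_i|\le 1$ for all $i\in X$ and $S\cap A_i=\emptyset$ for all $i\in Y$. We then check that $\sum_{i\in X}\max_{c^*\in S}p(i,c^*)+\sum_{i\in Y}r_i>1$. These are polynomially many arithmetic operations on the given rational payments.

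\emph{Hardness.} We reduce from a maximization problem in which one picks a set with at most one ``hit'' per element, which is exactly the shape of the inner maximum. The natural source is \textsc{Exact Cover by 3-Sets} or \textsc{Independent Set}. We use \textsc{Independent Set}: given a graph $G=(U,E)$ and an integer $t$, decide whether $G$ has an independent set of size $t$. The construction is as follows.
\begin{itemize}
\item Each vertex $u\in U$ becomes a selected candidate $c_u\in W$.
\item Each edge $e=\{u,v\}$ becomes a voter $i_e$ approving $c_u$ and $c_v$.
\item We add one unselected candidate $c$ approved by all edge voters, plus auxiliary voters, one private voter $i_u$ per vertex approving $c_u$ and $c$.
\end{itemize}
Choosing $S$ means choosing vertices. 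An edge voter can be in $X$ only if $|S\cap A_{i_e}|\le 1$, i.e., at most one endpoint of $e$ lies in $S$. The intent is to set payments so that private voters pay a weight $w$ to their own vertex candidate and edge voters pay tiny amounts. Then the left-hand side is essentially $w\cdot|S|$ plus a correction from edge voters. The correction is designed to reward $S$ being independent: we include edge voter $i_e$ in $X$ only if both endpoints are outside $S$ or exactly one is inside, so the edge voters whose both endpoints lie in $S$ must be dropped. We tune the payments so that dropping an edge voter loses more than adding a vertex gains. Then the maximum exceeds $1$ iff an independent set of size $t$ exists, after normalizing $w\approx 1/t$ and choosing the threshold slightly below. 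Residuals are set to $0$, so $Y$ is irrelevant. Each $c_u$ must have total payment exactly $1$; the remainder of each candidate's price is absorbed by extra voters who do not approve $c$.

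\emph{Pareto optimality and the main obstacle.} We must make $W$ Pareto optimal. We do this by giving each $c_u$ a dedicated voter approving only $c_u$, so no swap to $c$ can weakly improve everyone: that voter loses. We must also check that no other unselected candidates exist, and that the only relevant maximum is over $c$. The step I expect to be hardest is the weight calibration. The payments must (i) sum to $1$ per candidate, (ii) make the objective exceed $1$ exactly when a size-$t$ independent set exists, and (iii) remain consistent with the gadget voters added for Pareto optimality. If the independent-set encoding proves awkward, I would instead reduce from \textsc{Exact Cover by 3-Sets}. There each element is a voter in $X$ with the constraint $|S\cap A_i|\le 1$, candidates are the 3-sets with payment $1/3$ per element voter, and the objective attains $|\text{universe}|/3\cdot$ scaling iff an exact cover exists, normalized to the threshold $1$.
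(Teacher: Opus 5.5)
\paragraph{Membership.} Your coNP membership argument is fine and matches the paper.

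\paragraph{Gap in the Independent Set reduction.} The hardness part has a genuine gap: your \textsc{Independent Set} gadget does not encode independence. The condition $|S\cap A_i|\le 1$ only restricts which voters may enter $X$. A voter violating it is simply left out, losing at most its own payment. So non-independence carries only a soft penalty, and that penalty cannot outweigh the private voters' gains. Take $S=\{c_u : u\in U\}$ and $X=\{i_u : u\in U\}$. Every private voter has $A_{i_u}\cap S=\{c_u\}$, so this choice is admissible. Its value is $w\lvert U\rvert\ge wt$, independent of the graph and of the edge payments, since all edge voters are dropped. With your calibration $tw>1$, every instance with $t\le\lvert U\rvert$ is mapped to a non-weakly-stable price system. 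Tuning the edge payments cannot repair this.

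As an aside, if you delete the private voters, your edge gadget computes $\eta\cdot\lvert\delta(S)\rvert$, where $\delta(S)$ is the set of cut edges. With $(K-1)\eta\le 1<K\eta$, a dedicated top-up voter per $c_u$, and the harmless assumption $K>\max_u\deg(u)$, this would give a correct reduction from \textsc{Max Cut}. That is not the argument you made, though.

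\paragraph{The X3C fallback.} This is the route the paper takes, but as written it also fails. With payment $1/3$ per element voter, a single set in $S$ already gives value $1$, and two disjoint sets give $2>1$. So every instance would be a violation. The threshold $1$ is fixed by the definition, so the calibration is the whole content of the proof. The paper does three things:
\begin{itemize}
\item It gives each element voter a payment of $1/(3q)$. Then total value $1$ is reached only if all $3q$ element voters are in $X$ and each approves exactly one candidate of $S$.
\item It covers the remaining $1-1/q$ of each set candidate's price with a single voter $i^*$ who approves exactly $W$. This voter also yields Pareto optimality.
\item It adds a supporter $i^\varepsilon$ of $c^*$ who approves nothing in $W$ and has residual $\varepsilon<1/(3q)$. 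Placing $i^\varepsilon$ in $Y$ turns the tight value $1$ into a strict violation.
\end{itemize}
For the converse, each contribution is at most $1/(3q)$ and $\varepsilon<1/(3q)$. So exceeding $1$ forces $i^\varepsilon\in Y$ and forces every element voter to be uniquely covered by $S$, which makes $S$ an exact cover. Note that the argument must go through unique coverage, not disjointness: a priori $S$ may contain overlapping sets, whose shared elements are simply excluded from $X$.
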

\begin{proof}
  We prove the statement by showing NP-completeness for the complement problem: deciding if there is a deviation as in \Cref{def:w_costStable} for some candidate $c\notin W$. Containment of the complement problem in NP is clear, since verifying that a candidate  $c\notin W$, sets $X,Y \subseteq V[c]$, and $S \subseteq W$ present a deviation can be done in polynomial time. To prove NP-hardness, we reduce from X3C. An instance for X3C consists of a set $U$ of size $3q$ and a family $\mathcal{F}=\{F_1, \dots, F_\ell\}$ of 3-element subsets of $U$. The X3C problem is to decide if there exists an \emph{exact cover} $\mathcal{F}'\subseteq \mathcal{F}$, i.e., every element in $U$ is contained in exactly one set in $\mathcal{F}'$.
  Given an instance of X3C, we construct an instance in polynomial time as follows:
  \begin{itemize}
    \item The election consists of voters $V=U \cup \{i^*, i^\varepsilon \}$ and candidates $C=\{c_{F_1}, \dots, c_{F_\ell}, c^*\}$. Candidate $c^*$ is approved by $U \cup \{i^\varepsilon \}$, and all other candidates $c_{F_a}$ are approved by $F_a \cup \{i^*\}$. The committee is $W=C \setminus \{c^*\}$, it is easy to verify that $W$ is PO since $i^*$ approves exactly the candidates in $W$.
    \item The price system $\mathbf{ps}$ is defined as follows: Voter $i^\varepsilon$ has a residual $r_{i^\varepsilon }=\varepsilon$ for some $0<\varepsilon<\frac{1}{3q}$, all other voters have a residual of zero. Voter $i^*$ pays $p(i^*, c)=\frac{q-1}{q}$ for all $c \in W$. For candidate $c_{F_a}\in W$, every one of the three voters $i\in F_a$ pays $p(i,c_{F_a})=\frac{1}{3q}$.
  \end{itemize}
  \paragraph{$\Rightarrow$}
  Let $\mathcal{F}'$ be an exact cover. Then, there is a deviation to $c^*$ with $S= \{c_{F_a} \mid F_a \in \mathcal{F}'\}\subseteq W$, $X=U\subseteq V[c^*]$, and $Y=\{i^\varepsilon\}\subseteq V[c^*]$. Note that this is a valid deviation, since $i^\varepsilon$ approves no candidates in $W$, and all voters in $U$ approve exactly one candidate in $S$, since $\mathcal{F}'$ is an exact cover.  Every one of the $3q$ voters $i\in U$ contributes $\frac{1}{3q}$ and voter $i^\varepsilon$ contributes their residual of $\varepsilon>0$, so
  $$
  \sum_{i \in X } \max_{c' \in S} p(i, c')  +  \sum_{i \in Y }r_i = 3q\cdot \frac{1}{3q} + \varepsilon > 1.
  $$
  \paragraph{$\Leftarrow$} Suppose there exists a deviation for weak stability. Clearly, this deviation needs to be for candidate $c^* \notin W$, since all other candidates are in $W$. Thus, there exist $X,Y \subseteq V[c^*]$ and a set $S\subseteq W$ satisfying the conditions in \Cref{def:w_costStable} such that
  $$
  \Delta \coloneqq \sum_{i \in X } \max_{c' \in S} p(i, c')  +  \sum_{i \in Y }r_i > 1.
  $$
  In the following, we argue that it needs to hold that $X=U$ and $Y=\{i^\varepsilon\}$, so $X,Y$ is a partition of $V[c^*]$. Recall that each of the $3q$ voters in $U$ has a maximum payment of $\frac{1}{3q}$ and $i^\varepsilon$ pays for no candidate and has a residual of $r_{i^\varepsilon}=\varepsilon<\frac{1}{3q}$. Thus, if $U\nsubseteq X$, it cannot be that $\Delta>1$. Moreover, if $i^\varepsilon \notin Y$, then $\Delta\leq 1$.
  Next, we will show that $\mathcal{F}'=\{F_a \mid c_{F_a} \in S\}\subseteq \mathcal{F}$ is an exact cover. Firstly, every element in $i\in U$ is contained in at most one set in  $\mathcal{F}'$, since $i \in F_a \iff c_{F_a} \in A_i$ and $|A_i \cap S|\leq 1$ by the restriction on $S$ in \Cref{def:w_costStable} as $i\in X$. Secondly, there cannot be an element $p\in U$ that is not contained in any set in  $\mathcal{F}'$, since otherwise $\Delta\leq 1$, as then there are at most $3q$ voters with a positive contribution (as then $A_i\cap S = \emptyset$  for some voter $i\in X$), and each voter can contribute at most $\frac{1}{3q}$.
\end{proof}

\section{Proofs \& Additional Discussion for Section 3}

\subsection{Monotonicity} \label{app:mono}
Suppose we remove the approval of a voter $i^*$ for a selected candidate and examine how budgets change. Interpreting budgets as a measure of influence suggests that $i^*$ should not receive a higher budget after this modification, as their representation has strictly worsened.
Since budgets should mostly be interpreted in relative terms, our monotonicity axiom therefore requires that, relative to other voters, the influence of $i^*$ does not increase.

\begin{axiom}[Monotonicity]\label{def:monotonicity}
  Let $A\in \app$ be an approval profile and $W\subseteq C$ a committee. We  
  let $A'\in \app$ be the approval profile derived from $A$  by removing the approval of a voter $i^*\in V$ for some selected candidate $c^*\in W$ with at least two supporters, i.e., $A'_{i^*}=A_{i^*} \setminus \{c^*\}$ and $V[c^*]\setminus \{i^*\}\neq \emptyset$.
  A resolute explanation rule $\mathfrak{E}$ satisfies \emph{monotonicity}, if it holds for $\mathbf{ps}=\mathfrak{E}(A,W)$ and $\mathbf{ps}'=\mathfrak{E}(A',W)$, and for each $j\in V\setminus \{i^*\}$ that
  $
  b_j > b_{i^*} \implies  b'_j > b'_{i^*} \text{ and } b_j = b_{i^*} \implies  b'_j \geq b'_{i^*}.
  $
\end{axiom}

It will turn out that none of our explanation methods satisfy monotonicity. However, this is not accidental: the following impossibility result shows that monotonicity is fundamentally incompatible with 1-stability, budget-averaging, and perfect-coverage-uniformity (we introduce perfect-coverage-uniformity in \Cref{prop-equal-budget} as an extremely mild connection between budgets and proportionality): 
\begin{restatable}{proposition}{NoExplRule}\label{prop:no-expl-rule}
There is no explanation rule that is 1-stable, budget-averaging, monotone, and perfect-coverage-uniform.
\end{restatable}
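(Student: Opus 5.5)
The plan is to refute the conjunction with one concrete pair of instances $(A,W)$ and $(A',W)$, where $A'$ is obtained from $A$ by deleting the approval of a voter $i^*$ for a selected candidate $c^*$. I will choose the instances so that $W$ provides perfect coverage in $A'$ but not in $A$. Perfect-coverage-uniformity then forces every returned price system for $(A',W)$ to be budget-uniform, so $b'_j=b'_{i^*}$ for all $j$. Monotonicity therefore fails as soon as some voter $j$ satisfies $b_j>b_{i^*}$ in the price system for $(A,W)$. The whole proof thus reduces to designing $(A,W)$ such that 1-stability and budget-averaging together force $b_{i^*}<b_j$ for some $j$.

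For the construction, $i^*$ must approve two selected candidates in $A$ (say $c^*$ and $c_1$) and exactly one in $A'$. All other voters approve exactly one selected candidate in both profiles. In $A'$, every selected candidate should have approval score at least that of every unselected one. I would start from a small instance, for example $V=[4]$, $W=\{c_1,c^*\}$, $V[c_1]=\{1,i^*\}$ and $V[c^*]=\{i^*,3,4\}$ in $A$, so that $V'[c^*]=\{3,4\}$ in $A'$. I would add unselected candidates of approval score at most $2$ that tie $i^*$'s fellow supporters to voters outside $V[c_1]\cup V[c^*]$, or to each other. These candidates are what generate binding 1-stability constraints, while perfect coverage in $A'$ is preserved. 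If this instance turns out to be too small, I would enlarge the groups symmetrically, since scaling only strengthens the inequalities.

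The core argument, carried out in $A$, has three steps.
\begin{enumerate}
\item Use the unselected candidates and the 1-stability inequality $\sum_{i\in V[c]\setminus V[c']} r_i+\sum_{i\in V[c']\cap V[c]}p(i,c')\le 1$, for suitable pairs $(c,c')$, to upper-bound $p(i^*,c_1)$, $p(i^*,c^*)$ and $r_{i^*}$. This caps $b_{i^*}$ below $\frac{|W|}{n}$.
\item Use budget-averaging to show that a voter $j$ outside $i^*$'s constraints, for instance a supporter of $c^*$ only, either has $b_j\ge\frac{|W|}{n}$ or has a residual that can be raised until some 1-stability constraint is tight.
\item Conclude from the tightness condition in each case that $b_j$ strictly exceeds the cap on $b_{i^*}$.
\end{enumerate}
This amounts to a short case analysis on which 1-stability constraint becomes tight for $j$.

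The main obstacle is the choice of the instance itself. Budget-averaging only gives information about voters whose budget lies below $\frac{|W|}{n}$, and 1-stability only gives upper bounds. The instance therefore has to make $i^*$ provably poor while keeping $j$'s potential violations disjoint from the constraints that bind $i^*$. Otherwise the same inequality caps both voters, and equality cannot be excluded. I would try first the variant in which $i^*$ shares an unselected candidate with voter $1$, which pins down $p(i^*,c_1)$ together with $r_1$, while $j\in\{3,4\}$ shares no unselected candidate with $i^*$. If strictness still fails, I would adjust the approval scores of the unselected candidates until a pure linear-feasibility check in both cases rules out $b_j\le b_{i^*}$.
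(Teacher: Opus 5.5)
\paragraph{Verdict: genuine gap.} The subgoal your proof reduces to cannot be met. You would need 1-stability and budget-averaging to force $b_{i^*}<b_j$ in $A$, but for every $(A,W)$ of your form an admissible rule can make $i^*$ the strictly richest voter. Take any such $(A,W)$, where $i^*$ approves $c_1,c^*\in W$ and every other voter approves exactly one selected candidate, and build the following price system:
\begin{itemize}
\item Let $i^*$ pay $1$ for each of $c_1$ and $c^*$.
\item Split every other selected candidate equally among its supporters.
\item Start with all residuals at $0$. This is 1-stable, since every constraint then reads $\sum_{i\in V[c]\cap V[c']}p(i,c')\le 1$.
\item Continuously raise the residuals of all voters whose budget is below $\frac{|W|}{n}$, freezing a voter once some (residual- or 1-)stability constraint containing their residual becomes tight.
\end{itemize}
The result is 1-stable and budget-averaging. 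Every voter left below $\frac{|W|}{n}$ is frozen by a tight constraint, hence blocked. Moreover $b_{i^*}\ge 2$, while every $j\neq i^*$ has $b_j\le\max\{1,\frac{|W|}{n}\}<2$, using $|W|\le n+1$ and $n\ge 2$.

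If the rule returns this system at $(A,W)$, both premises of monotonicity are vacuous for every $j$. No choice of unselected candidates or approval scores, and no linear-feasibility tuning, can change this. Concretely, your Step~1 fails. 1-stability never caps an individual payment below $1$: the residuals that would cap it can be set to $0$. The affected voters are then blocked by that very tight constraint, so budget-averaging demands nothing of them.

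\paragraph{The missing idea: reverse the roles.} The axioms give the opposite kind of leverage to what you need. Budget-averaging gives a \emph{lower} bound on an unblocked voter's budget. 1-stability gives \emph{upper} bounds on residuals once some payment is \emph{forced} to be large, for instance because a voter is the sole supporter of a selected candidate. So apply perfect-coverage-uniformity to the \emph{original} profile, giving $b_j=b_{i^*}$, and derive $b'_{i^*}>b'_j$ after the deletion. This is what the paper does.

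Its instance has $V=[5]$, $W=\{c_1,c_2\}$, $V[c_1]=\{1,2,3\}$, $V[c_2]=\{4,5\}$, $V[c'_\ell]=\{\ell,5\}$ for $\ell\in[3]$, and private candidates $c^*_\ell$. Delete voter $4$'s approval of $c_2$. Then:
\begin{itemize}
\item Voter $5$ becomes the sole supporter of $c_2$, so $p'(5,c_2)=1$.
\item 1-stability on each pair $(c'_\ell,c_2)$ then forces $r'_\ell=0$ for $\ell\in[3]$.
\item Since $c_1$ costs $1$ and is shared by voters $1,2,3$, some $\ell$ has $b'_\ell\le\frac13$.
\item Voter $4$ now holds only residual and is blocked only at $r'_4=1$, so budget-averaging gives $b'_4\ge\frac25$.
\end{itemize}
This violates the clause $b_\ell=b_4\Rightarrow b'_\ell\ge b'_4$.
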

\begin{proof}
Consider any resolute explanation rule $\mathfrak{E}$ that is 1-stable, budget-averaging, and perfect-coverage-uniform.
Let $V=[5]$, $C=\{c_1,c_2,c'_1,c'_2,c'_3\} \cup \{c^*_i \mid i \in V\}$, and $W=\{c_1,c_2\}$. In profile $A$, candidate $c_1$ is approved by $\{1,2,3\}$, and $c_2$ is approved by $\{4,5\}$. Candidate $c'_\ell$ is approved by $\{\ell, 5\}$, and candidate $c^*_\ell$ only by voter $\ell$ . Note that $W$ provides perfect coverage, so by perfect-coverage-uniformity, it needs to hold that $b_4=b_1=b_2=b_3$ in $\mathbf{ps}=\mathfrak{E}(A,W)$.

Consider the profile $A'$ derived by deleting $4$'s approval for $c_2$, and let $\mathbf{ps}'=\mathfrak{E}(A',W)$.
Then, it needs to hold that $p(5,c_2)=1$ and therefore $r'_1=r'_2=r'_3=0$ by 1-stability. Since the total payment for $c_1$ is $1$, it needs to hold that $b'_\ell\leq \frac{1}{3}$ for some voter $\ell \in \{1,2,3\}$. However, note that by budget-averaging it needs to hold that $b'_4=r'_4\geq \frac{\lvert W\rvert}{n}=\frac{2}{5}>\frac{1}{3}=b'_\ell$, since the only 1-stability constraint involving voter $4$ for candidate $c_4^*$ is tight only when $r'_4=1$. However, this implies that $\mathfrak{E}$ needs to violate monotonicity.
\end{proof}
One intuition for this incompatibility is that even the deletion of a single voter $i$'s approval can make a proportional committee strongly non-proportional for other voters--thereby restricting their residual by 1-stability--while voter $i$ may remain appropriately represented and thus, by budget-averaging, cannot receive a budget below the average payment.

\subsection{Payment Axioms}\label{sec:app-payments}
In this section, we discuss additional ideas for  axioms for payments, and describe challenges in formulating such requirements that are compatible with other desiderata.
As outlined in the main body, the purpose of payments is to describe how voters exercise their influence. Moreover, comparing the payments made by different voters toward the same candidate should reveal which voters played a crucial role in securing that candidate’s selection. A guiding intuition is that if, in the absence of a particular voter, there would no longer be a reason to select some candidate, then that voter should bear a larger share of the cost. Therefore, a natural first requirement would be the following: if there is a candidate $c\in W$ and voters $i,j\in V[c]$ such that there exists a candidate $c'\in C \setminus W$ with $V[c] \setminus \{i\} \subset V[c']$ but no such candidate exists for $j$, then $i$ should pay more for $c$ than $j$, i.e., $p(i,c)>p(j,c)$. The following example illustrates that, while locally appealing, this axiom may not be sensible to enforce from a more global perspective. 
 \begin{example}\label{appendix:ex:paymentPOimprov}
     Let $V=\{i_1,i_2,i_3,i_4, i_5\} \cup [100]$, $C=\{c_1,c_2, c_3,c_4\}$, and $W=\{c_1,c_2\}$. Candidate $c_1$ is approved by the voters in $\{i_1,i_2\}$, $c_2$ is approved by the voters in $\{i_3,i_4,i_5\}$, $c_3$ is approved by voters $\{i_1,i_2,i_3,i_4\}$, and $c_4$ is approved by $\{i_5\} \cup [100]$. Here, the axiom would require that $p(i_5,c_2)>p(i_4,c_2)$, as there is a Pareto improvement by exchanging $c_2$ for $c_3$ after deleting $i_5$'s approval for $c_2$, but no such candidate exists for $i_4$. However, this arguably does not reflect the structure of the instance, as the voters benefiting from this deviation are already well represented, and a deviation to $c_4$ benefiting the completely unrepresented voters $[100]$ is not captured by this requirement.
 \end{example}

Another intuition is that if one voter is generally ``easier'' to satisfy than another, then they should never pay more for a jointly approved candidate. One way to formalize this is to require that if one voter approves a superset of another voter (i.e., the voter is easier to satisfy),  then they should never pay more for a jointly approved candidate:  for any two voters $i,j\in V$ with $A_i\subseteq A_j$ and $c\in A_i\cap A_j \cap W$, it holds that
$
p(i,c) \geq p(j,c).
$
While the intuition behind this requirement is appealing, it can be very restrictive, and may rule out price systems that are arguably consistent with the structure of the instance.  
\begin{example}
    Let $V=\{i,j\} \cup [999] \cup \{x_1,x_2,x_3,x_4,y\}$, $C=(\{c^*_1,\dots,c^*_{201}\}\cup \{c, c_y, c_x\})$, where candidate $c^*_\ell\in C$ is approved by the voters $\{j\} \cup [999]$, candidate $c$ is approved by  the voters  $[990] \cup \{i,j\}$, candidate $c_y$ is approved by the voters $\{i,j,y\}$, and candidate $c_x$ is approved by $\{i,j,x_1,x_2,x_3,x_4\}$. The committee is $W=C \setminus\{c_x\}$.
    Consider the price system $\mathbf{ps}$ computed by \texttt{Continuous Phragm{\'e}n}:
    First, the voters $\{j\} \cup [999]$ pay for the candidates $c^*_1,\dots,c^*_{201}$, while $i$ and $y$ pay for $c_y$, and the voters $\{x_1,x_2,x_3,x_4\}$ accumulate residual.
    Note that before the candidates $c^*_1,\dots,c^*_{201}$ are fully funded, we have that $p(i,c_y)=\frac{1}{5}$ and $r_{x_1}=r_{x_2}=r_{x_3}=r_{x_4}=\frac{1}{5}$. Since $c_y \in C_i$, voter $i$ is blocked since otherwise there is a 1-stability violation for $c_x$. 
    Once the candidates $c^*_1,\dots,c^*_{201}$ are fully funded, voters $\{j\} \cup [990]$ pay for candidate $c$.  
    Hence, we have that $j$ has a non-zero payment for $c$, since $j$ is only blocked later when  $p(j,c)+r_{x_1}+r_{x_2}+r_{x_3}+r_{x_4}=1$, or when $c_y \in C_j$. We have that $p(i,c_y)=\frac{1}{5}>p(j,c)>0=p(i,c)$, which violates the requirement stated above. 

    However, when interpreted through a ``sequential'' intuition, the computed price system is arguably plausible: the reason why voter $i$ does not pay for $c$ is simply that they pay substantially more for another shared candidate $c_y$ (which $j$ does not pay for). Moreover, the payment of voter $j$ for $c$ is strictly smaller than $i$'s payment for this candidate $c_y$.   
\end{example}
We leave a further investigation of less demanding variants of this requirement, or of desirable explanation rules that satisfy it for future work.

\section{Proofs \& Additional Discussion for Section 4}\label{app:algos}
In this section, we prove the axiomatic guarantees of our explanation rules.
\subsection{\texttt{Approximate Priceability}}\sloppy
We begin by proving the axiomatic guarantees of \texttt{Approximate Priceability}.
\priceabilityGuarantees*
\begin{proof}
    To see that \texttt{Approximate Priceability} satisfies laminar-proportional-uniformity, perfect-coverage-uniformity, and perfect-symmetry-uniformity, note that by \Cref{thm:contPhragmenGuarantees}, the price system returned by \texttt{Continuous Phragm{\'e}n} is 1-stable (and thus in particular residual-stable), and budget-uniform for such committees. Hence, \texttt{Approximate Priceability} needs to return a budget-uniform price system, as it minimizes the sum of absolute pairwise differences between voters’ budgets among all residual-stable price systems.

    We proceed by showing that \texttt{Approximate Priceability} violates the following axioms.
    \paragraph{Equal Treatment of Equals}
    Let $V=\{1,2\}$ and $C=\{c\}$.
    Voters $1$ and $2$ both approve candidate $c$.
    The committee is $W=\{c\}$. Note that the following price system $\mathbf{ps}$ is budget-uniform and will thus be returned by \texttt{Approximate Priceability}: payments $p(1,c)=1$, $p(2,c)=0$, and residuals $r_1=0$ and $r_2=1$. This violates equal treatment of equals. 

    \paragraph{1-Stability and Single-Winner-Payment Responsiveness}
    Let $V=[2q]$ and $C=\{c_1,c_2\}$.
    The voters $V\setminus \{1\}$ approve candidate $c_1$, and only the voters in $[q]$ approve $c_2$. The committee is $W=\{c_2\}$. Observe that only $c_1$ is an approval winner. Note that the following price system $\mathbf{ps}$ is budget-uniform and will thus be returned by \texttt{Approximate Priceability}: payments $p(i,c_2)=\frac{1}{q}$ for $i\in [q]$, and residuals $r_i=0$ for $i\in [q]$ and $r_i=\frac{1}{q}$ for $i\in [q+1,2q]$. This violates both 1-stability and single-winner-payment-responsiveness. 

    \paragraph{$\Delta$-unproportional-responsiveness}
    To show that for any $\Delta \in \NN$, \texttt{Approximate Priceability} violates $\Delta$-unproportional-responsiveness, we fix $\Delta \in \NN$ and construct the following profile: 
    Let $V=[2]$ and $C=\{c^*_1,\dots, c^*_{\Delta+1}, c_1,\dots, c_{\Delta+1},c'_1,\dots, c'_{\Delta+1}\}$. All voters approve the candidates in $\{c^*_1,\dots, c^*_{\Delta+1}\}$, and the candidates $\{ c_1,\dots, c_{\Delta+1}\}$ and $\{c'_1,\dots, c'_{\Delta+1}\}$ are approved only by $1$ and $2$, respectively. Note that this is a laminar profile. 
    The committee is $W=\{c^*_1,\dots, c^*_{\Delta+1}, c_1,\dots, c_{\Delta+1}\}$. Note that $W$ is $\Delta$-laminar-unproportional.
    However, note that the following price system $\mathbf{ps}$ is budget-uniform and will thus be returned by \texttt{Approximate Priceability}: payments $p(2,c^*_\ell)=1$ for $c^*_\ell\in W$, $p(1,c_\ell)=1$ for $c_\ell\in W$, and residuals $r_i=0$ for $i\in V$. This violates $\Delta$-unproportional-responsiveness. 
    
\end{proof}

\subsection{\texttt{Continuous Phragm{\'e}n}}
In the following example, we discuss the simpler alternative of setting $C_i=A_i\cap W_{\mathrm{remaining}}$ for all $i\in V_{\mathrm{active}}$ in \texttt{Continuous Phragm{\'e}n}.
\begin{example}\label{ex:cont-phragmen-ci-alt}
    Let $V=[11]$, $C=\{c^*, c_1,c_2, c'_1, \dots, c'_{11}\}$,  where $c^*$ is approved by all voters,  $c_1,c_2$ are approved only by voter $1$,  and $c'_1, \dots, c'_{11}$ are approved by the voters $V\setminus \{1\}$.
    The committee is $W=\{c^*, c_1, c'_1, \dots, c'_{10}\}$, observe that $W$ is laminar proportional. 

    If we set $C_i=A_i\cap W_{\mathrm{remaining}}$ for all $i\in V_{\mathrm{active}}$, \texttt{Continuous Phragm{\'e}n} proceeds as follows.
    \begin{enumerate}
        \item Until $c^*$ is fully funded, in each iteration, $h_{c^*}$ is reduced by $(\frac{1}{2}+\frac{10}{11}) \cdot \varepsilon=\frac{31\varepsilon}{22}$. When $c^*$ is fully funded, we have $h_{c_1}=1-\frac{22}{31}\frac{1}{2}=\frac{20}{31}$ and $h_{c'_\ell}=1-\frac{22}{31}\frac{10}{11}=\frac{11}{31}$ for all $c'_\ell \in W$. 
        \item 
        After $c^*$ is fully funded, voter $1$ fully contributes to $c_1$, and the remaining ten voters split their earning equally over the candidates $c'_1, \dots, c'_{10}$, so $h_c$ is reduced by $\varepsilon$ for all $c\in W_{\mathrm{remaining}}$ in every iteration.
        Hence, when $h_{c'_\ell}=0$ for all $c'_\ell \in W$, we have that  $h_{c_1}=\frac{20}{31}-\frac{11}{31}=\frac{9}{31}$.
        \item After $h_{c'_\ell}=0$ for all $c'_\ell \in W$, the voters $V\setminus \{1\}$ accumulate residual until they are blocked when $r_i=\frac{1}{10}$ for all $i \in V\setminus \{1\}$. Note that we have that $h_{c_1}=\frac{9}{31}-\frac{1}{10}>0$ in this iteration, so voter $1$ continues earning and the resulting price system is not budget-uniform.
    \end{enumerate}
    As the resulting price system is not budget-uniform, laminar-proportional-uniformity is violated. The intuitive reason for this is that there are unjustified differences in the payments of the voters for the shared candidate $c^*$.
\end{example}
We proceed by proving the axiomatic guarantees for \texttt{Continuous Phragm{\'e}n}. 
Within some iteration in Line~\ref{line:outer-while}, we refer to the \emph{time step} (of this iteration) as the amount of cumulative $\varepsilon$-steps taken before it.

We begin by observing the following simple fact. 
\begin{observation}\label{app:obs:invertedAuto}
    For some profile $A \in \app$ and committee $W\subseteq C$, let $(\sigma,\pi)$ be an automorphism of $(A,W)$. Then, $(\sigma^{-1},\pi^{-1})$ is also an automorphism of $(A,W)$.
\end{observation}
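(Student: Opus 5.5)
We have to check three things: that $(\sigma^{-1},\pi^{-1})$ is a pair of bijections, that it is a profile automorphism of $A$, and that it preserves $W$. The first is immediate, since inverses of bijections $V\to V$ and $C\to C$ are again bijections.

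For the profile condition, fix a voter $j\in V$ and set $i\coloneqq\sigma^{-1}(j)$. Because $(\sigma,\pi)$ is a profile automorphism, $A_j=A_{\sigma(i)}=\pi(A_i)$. Applying the bijection $\pi^{-1}$ elementwise to both sides gives $\pi^{-1}(A_j)=A_i=A_{\sigma^{-1}(j)}$. As $j$ was arbitrary, $A_{\sigma^{-1}(j)}=\pi^{-1}(A_j)$ holds for every $j\in V$, which is exactly the profile-automorphism condition for $(\sigma^{-1},\pi^{-1})$.

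For the committee condition, start from $\pi(W)=W$ and apply $\pi^{-1}$ to both sides. Since $\pi$ is a bijection, $\pi^{-1}(\pi(W))=W$, so $W=\pi^{-1}(W)$.

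The only point needing care is the direction of indices in the profile condition. We must verify the identity at $\sigma^{-1}(j)$ for every $j$, rather than at $\sigma(i)$, and this is why we reparametrize via $j=\sigma(i)$. Beyond that, the argument is routine.
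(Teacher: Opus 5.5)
Your proposal is correct and follows essentially the same route as the paper: reindex via $\sigma^{-1}$, apply $\pi^{-1}$ elementwise to $\pi(A_{\sigma^{-1}(j)}) = A_j$, and conclude $\pi^{-1}(W)=W$ from $\pi(W)=W$ and bijectivity of $\pi$. The only difference is that you name the indices the other way around, which is immaterial.
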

\begin{proof}
    Clearly, $\sigma^{-1}$ and $\pi^{-1}$ exist and are bijections.  Next, we show that $(\sigma^{-1}, \pi^{-1})$ is a profile automorphism of $A$. Fix $i \in V$. Since $(\sigma, \pi)$ is a profile automorphism of $A$, we know that for $j = \sigma^{-1}(i)$, it holds that $\pi(A_j) = A_{\sigma(j)}$, so
    \[
        \pi(A_{\sigma^{-1}(i)}) = A_i.
    \]
    Applying $\pi^{-1}$ to both sides yields:
    \[
        A_{\sigma^{-1}(i)} = \pi^{-1}(A_i).
    \]
    Thus, $(\sigma^{-1}, \pi^{-1})$ is a profile automorphism of $A$.
    Moreover, since $\pi(W)=W$ and $\pi$ is a bijection, it follows that $\pi^{-1}(W)=W$.
\end{proof}

\contPhragmenGuarantees*
\begin{proof}
    Fix a profile $A \in \app$ and committee $W\subseteq C$, and let $\mathbf{ps}$ be the price system returned by \texttt{Continuous Phragm{\'e}n}.
    \paragraph{Symmetry}  
    We first prove the following invariants.
    \begin{claim}\label{claim:symmetry-invariant}
    For every automorphism $(\sigma, \pi)$ of $(A, W)$, at the beginning of each iteration of the outer loop (i.e., at the beginning of each iteration of the loop in Line~\ref{line:outer-while}), and at the end of each iteration of the outer loop (after Line~\ref{line:update-remaining}), the following invariants hold:
    \begin{enumerate}
        \item $i \in V_{\mathrm{active}} \iff \sigma(i) \in V_{\mathrm{active}}$ for all $i \in V$,
        \item $c \in W_{\mathrm{critical}} \iff \pi(c) \in W_{\mathrm{critical}}$ for all $c \in W$,
        \item $h_c = h_{\pi(c)}$ for all $c \in W$,
        \item $\pi(C_i) = C_{\sigma(i)}$ for all $i \in V_{\mathrm{active}}$,
        \item $p(i, c) = p(\sigma(i), \pi(c))$ for all $i \in V$ and $c \in W$,
        \item $p(i, i) = p(\sigma(i), \sigma(i))$ for all $i \in V$.
    \end{enumerate}
    \end{claim}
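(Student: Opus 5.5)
We prove the claim by induction on the number of outer iterations, fixing an arbitrary automorphism $(\sigma,\pi)$ of $(A,W)$. By \Cref{app:obs:invertedAuto}, $(\sigma^{-1},\pi^{-1})$ is also an automorphism. So it suffices to prove one direction of each equivalence, e.g., $i\in V_{\mathrm{active}}\Rightarrow \sigma(i)\in V_{\mathrm{active}}$, and apply it to the inverse. We also use two basic facts. First, $V[\pi(c)]=\sigma(V[c])$, since $c\in A_i \iff \pi(c)\in A_{\sigma(i)}$. Second, $\pi(C\setminus W)=C\setminus W$.

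\emph{Base case.} At initialization, $V_{\mathrm{active}}=V$, $W_{\mathrm{critical}}=\emptyset$, all $h_c=1$, and all payments and residuals are $0$, so invariants 1--3 and 5--6 hold trivially. For invariant 4, the spending sets are computed at the start of the planning phase, and we show below that this computation is equivariant.

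\emph{Inductive step.} Assume the invariants hold at the start of an iteration. We show that every subroutine maps $\sigma/\pi$-invariant inputs to $\sigma/\pi$-invariant outputs.

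(i) \textsc{MoneyFlow}. We induct over its inner while loop, keeping $V^*$ $\sigma$-invariant. The quantity $h_c/|V[c]\cap V^*|$ equals $h_{\pi(c)}/|V[\pi(c)]\cap V^*|$, because $V[\pi(c)]\cap V^*=\sigma(V[c]\cap V^*)$ and $h_c=h_{\pi(c)}$. Hence the argmin set $C'$ satisfies $\pi(C')=C'$, and $A_{\sigma(i)}\cap C'=\pi(A_i\cap C')$. This gives $C_{\sigma(i)}=\pi(C_i)$, and the updated $V^*$ remains $\sigma$-invariant.

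(ii) \textsc{Check Blocking}. Each sum in the tightness conditions is preserved when $(c,c')$ is replaced by $(\pi(c),\pi(c'))$, using invariants 5--6 and $V[\pi(c)]=\sigma(V[c])$. The membership conditions $A_i\cap W_{\mathrm{critical}}=\emptyset$, $C_i=\emptyset$, and $c'\in C_i$ transform correctly by invariants 2 and 4. So the returned set $V^*$ is $\sigma$-invariant.

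(iii) Lines \ref{line:uns}--\ref{line:crit}. These preserve invariance, since $W_{\mathrm{unsupp}}$ is $\pi$-invariant and the union of its supporter sets is $\sigma$-invariant.

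Since every update in the repeat loop preserves the invariants, they hold at the fixed point as well. The fixed point is deterministic because each step is deterministic.

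Spending adds $\varepsilon/|C_i|$ to $p(i,c)$ for each $c\in C_i$, and the same amount to $p(\sigma(i),\pi(c))$ because $|C_{\sigma(i)}|=|C_i|$. Likewise, $h_c$ and $h_{\pi(c)}$ decrease equally: the contributors to $\pi(c)$ are exactly the $\sigma$-images of the contributors to $c$. For pruning, the excess $\delta$ of the pair $(\pi(c),\pi(c'))$ equals that of $(c,c')$, and the affected voter set is mapped by $\sigma$. Since $t_i$ is a minimum over all violating pairs and $\pi$ permutes these pairs, we get $t_{\sigma(i)}=t_i$. Finally, Line~\ref{line:update-remaining} preserves invariants 2 and 3.

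We then extend the argument to \textsc{Distribute Residual}. The blocking test is equivariant by (ii), budgets satisfy $b_{\sigma(i)}=b_i$, and so the set of minimum-budget active voters is $\sigma$-invariant. The stopping condition does not depend on the labels. Together, these give symmetry of the output.

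\emph{Main obstacle.} The most delicate step is the nested fixed-point structure: the repeat loop alternates between \textsc{MoneyFlow}, blocking and unblocking. We need all six invariants, including a version of invariant 4 for the partially assigned spending sets, to hold after every single inner update, not only at the boundaries of outer iterations. We must also check that the argmin in \textsc{MoneyFlow} takes the whole tied set $C'$ rather than breaking ties, since symmetry of $C'$ is exactly what makes the computation equivariant. Finally, we treat the continuous $\varepsilon$-process as a limit of discrete steps, each of which preserves the invariants, so the limit preserves them as well.
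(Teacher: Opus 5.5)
Your proposal is correct and takes essentially the same route as the paper. Both argue by induction over outer iterations and check that each step is equivariant: \textsc{MoneyFlow} (via an inner induction keeping $V^*$ $\sigma$-invariant and the tied argmin set $C'$ $\pi$-invariant), \textsc{Check Blocking}, the reactivation of supporters of $W_{\mathrm{unsupp}}$, the spending phase, and the pruning factors $t_i=t_{\sigma(i)}$. Your extension to \textsc{Distribute Residual} corresponds to the paper's separate follow-up claim and is not needed for this statement.
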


    \begin{claimproof}
    We proceed by induction over the iterations of the outer loop. At the beginning of the first iteration, the claim holds trivially from the initialization. For the inductive step, assume the claim holds at the beginning of an iteration, and consider any automorphism $(\sigma, \pi)$. We argue that the invariants are maintained within this iteration.

    \paragraph{\textsc{MoneyFlow}}
    First, consider the spending sets $(C_i)_{i \in V_{\mathrm{active}}}$ returned by \textsc{MoneyFlow} (Line~\ref{line:C-fund}). We now show that $c \in C_i \implies \pi(c)\in C_{\sigma(i)}$ for all $i \in V_{\mathrm{active}}$ and $c \in W$. Together with \Cref{app:obs:invertedAuto}, this implies $\pi(C_i) = C_{\sigma(i)}$ for all $i \in V_{\mathrm{active}}$ (hence, (4) is maintained).
    We show this by considering the execution of \textsc{MoneyFlow} and showing that at the beginning and end of each iteration therein, we have that $i\in V^* \iff \sigma(i) \in V^*$ and $c \in A_i \cap C' \iff \pi(c) \in A_{\sigma(i)} \cap C'$  for all $i \in V$ and $c \in W$. In the first iteration, this holds from the initialization of $V^*$: assume that $i\in V^*$, i.e., $i\in V_{\mathrm{active}}$ and there exists $c \in W \cap A_i$ with $h_{c}>0$. With invariants (1), (3), and (4), it follows that $\sigma(i)\in V_{\mathrm{active}}$, $\pi(c) \in W \cap A_{\sigma(i)}$, and $h_{\pi(c)}>0$, so $\sigma(i)\in V^*$ (and conversely with \Cref{app:obs:invertedAuto}). Next, consider any iteration and assume that $c \in A_i \cap C'$ for some $c\in W$. Note that 
    $$
        \frac{h_{\pi(c)}}{|V[\pi(c)]\cap V^*|}=\frac{h_{c}}{|\sigma(V[c]\cap V^*)|}=
        \frac{h_c}{|V[c]\cap V^*|},
    $$
    from invariant (3) and the fact that $\sigma(V[c])=V[\pi(c)]$ together with $i\in V^* \iff \sigma(i) \in V^*$.
    This implies that $\pi(c) \in C' \cap  A_{\sigma(i)}$.
    Again, the other direction follows with \Cref{app:obs:invertedAuto}. Moreover, this implies that $i\in V^* \iff \sigma(i) \in V^*$ is maintained (cf. Line~\ref{line:assign-ci}). 

    \paragraph{\textsc{Check Blocking}} 
    Let $V_{\mathrm{new}}$ be the set of voters returned by \textsc{Check Blocking}.
    We show that if $i\in V_{\mathrm{new}}$, then $\sigma(i)\in V_{\mathrm{new}}$.
    The other direction follows with \Cref{app:obs:invertedAuto}.
    If $i\in V_{\mathrm{new}}$, there is tight constraint for some  candidate $c\in C\setminus W$ where $i$ is added to $V^*$ in \textsc{Check Blocking}. 
    
    First, assume that this happens for $c\in C\setminus W$ with condition $\sum_{j\in V[c]}p(j,j)=1$. Hence, $i\in V[c]$, $A_i\cap W_{\mathrm{critical}}=\emptyset$ , and $C_i=\emptyset$. By the induction hypothesis, it holds for candidate $\pi(c)$ that $\sum_{j\in V[\pi(c)]}p(j,j)= \sum_{j\in \sigma(V[c])}p(j,j)=\sum_{j\in V[c]}p(\sigma(j),\sigma(j))= 1$. Moreover, $\sigma(i)\in V[\pi(c)]$, $A_{\sigma(i)}\cap W_{\mathrm{critical}}=\emptyset$ by (2), and $C_{\sigma(i)}=\emptyset$ by (4). Thus, $\sigma(i)$ is also added to $V^*$

    Second, assume that $i$ is added for $c\in C\setminus W$ with condition $$\sum_{j\in V[c] \setminus V[c']} p(j,j) + \sum_{j\in V[c']\cap  V[c] }  p(j,c') =1$$ for some $c'\in W$.
     By the invariants, it holds for candidates $\pi(c)$ and $\pi(c')$ that 
    \begin{align*}
        &\sum_{j\in V[\pi(c)] \setminus V[\pi(c')]} p(j,j) + \sum_{j\in V[\pi(c')]\cap V[\pi(c)]} p(j,\pi(c')) \\
        &= \sum_{j\in \sigma(V[c] \setminus V[c'])} p(j,j) + \sum_{j\in \sigma(V[c']\cap V[c])} p(j,\pi(c')) \\
        &= \sum_{j\in V[c] \setminus V[c']} p(\sigma(j),\sigma(j)) + \sum_{j\in V[c']\cap V[c]} p(\sigma(j),\pi(c'))=1.
    \end{align*}
    If $i \in V[c] \setminus V[c']$, it needs to hold that $A_i\cap W_{\mathrm{critical}}=\emptyset$ and $C_i=\emptyset$ since $i$ is added. 
    Then, $\sigma(i) \in V[\pi(c)] \setminus V[\pi(c')]$, $A_{\sigma(i)} \cap W_{\mathrm{critical}} = \emptyset$ by (2), and $C_{\sigma(i)} = \emptyset$ by (4). Thus, $\sigma(i)$ is also added to $V^*$.
    If $i \in V[c] \cap V[c']$, it needs to hold that $A_i\cap W_{\mathrm{critical}}=\emptyset$ and $c' \in C_i$ since $i$ is added. Then, $\sigma(i) \in V[\pi(c)] \cap V[\pi(c')]$, $A_{\sigma(i)} \cap W_{\mathrm{critical}} = \emptyset$ by (2), and $\pi(c') \in C_{\sigma(i)}$ by (4). Thus, $\sigma(i)$ is also added to $V^*$. Hence, (1) is maintained.

    \paragraph{Computation of $W_{\mathrm{critical}}$}
    If $c \in W_{\mathrm{unsupp}}$, then $c\in W$, $h_c>0$, and $V[c]\cap V_{\mathrm{active}}=\emptyset$. This implies that $\pi(c)\in W$, $h_{\pi(c)}=h_c>0$, and $V[\pi(c)]\cap V_{\mathrm{active}}=\sigma(V[c])\cap V_{\mathrm{active}}=\emptyset$ by (1), so $\pi(c)\in W_{\mathrm{unsupp}}$ (and conversely with  \Cref{app:obs:invertedAuto}). From this, it follows that Line~\ref{line:crit} maintains (1) and (2). 
    
    \paragraph{Spending Phase}
    Hence, all invariants hold at the beginning of the spending phase. This implies that the invariants hold after the spending phase: it holds that $i \in V_{\mathrm{active}} \iff \sigma(i) \in V_{\mathrm{active}}$ for every $i\in V$, and $|C_i|=|C_{\sigma(i)}|$ and $c \in C_i \iff \pi(c) \in C_{\sigma(i)}$ for every $i\in V_{\mathrm{active}}$ and $c\in W$. Thus, for all $i\in V$ and $c \in W$, either both $i$ and $\sigma(i)$ are inactive, both contribute $\varepsilon$ to their residual, or they contribute the same to $c$ and $\pi(c)$, respectively.

    \paragraph{Pruning Phase}
    We argue that $t_i=t_{\sigma(i)}$ for all $i\in V$. To see that $t_{\sigma(i)}\leq t_i$ for all $i\in V$, fix $i\in V$ and  assume there exist $c \in C\setminus W$ and $c'\in W$ such that $i\in  V[c] \setminus V[c']$ and $\delta\coloneqq (\sum_{j\in V[c] \setminus V[c']} p(j,j) + \sum_{j\in V[c']\cap  V[c] }  p(j,c') )-1>0$.
    Then, it holds for $\pi(c)$ and $\pi(c')$ that 
    \begin{align*}
        \quad&\sum_{j\in V[\pi(c)] \setminus V[\pi(c')]} p(j,j) + \sum_{j\in V[\pi(c')]\cap  V[\pi(c)] }  p(j,\pi(c')) - 1 \\
        &=  \sum_{j\in \sigma(V[c] \setminus V[c'])} p(j,j) + \sum_{j\in \sigma(V[c']\cap  V[c]) }  p(j,\pi(c')) - 1\\
        &= \sum_{j\in V[c] \setminus V[c']} p(\sigma(j),\sigma(j)) + \sum_{j\in V[c']\cap  V[c] }  p(\sigma(j),\pi(c')) - 1=\delta, 
    \end{align*}
    and $\sigma(i)\in  V[\pi(c)] \setminus V[\pi(c')]$.
    The other direction holds by \Cref{app:obs:invertedAuto}. Hence, invariant (6) is maintained.

    \paragraph{Update of $W_{\mathrm{remaining}}$ and $W_{\mathrm{critical}}$}
    Clearly, this steps preserves the invariants as $h_c=h_{\pi(c)}$ for all $c\in W$.
    \end{claimproof}
     
    It remains to show that the residual phase preserves symmetry. 
    \begin{claim}\label{claim:symmetry-res}
        If  $W_{\mathrm{remaining}}=\emptyset$ and \begin{enumerate*}[label=(\alph*)]
        \item $p(i, c) = p(\sigma(i), \pi(c))$ for all $i \in V$ and $c \in W$, and
        \item $p(i, i) = p(\sigma(i), \sigma(i))$ for all $i \in V$
    \end{enumerate*} at the beginning of  \textsc{Distribute Residual}, then the following invariants hold at the beginning and end of each iteration in the loop in \textsc{Distribute Residual}:
    \begin{enumerate}
        \item $i \in V_{\mathrm{active}} \iff \sigma(i) \in V_{\mathrm{active}}$ for all $i \in V$,
        \item $p(i, c) = p(\sigma(i), \pi(c))$ for all $i \in V$ and $c \in W$,
        \item $p(i, i) = p(\sigma(i), \sigma(i))$ for all $i \in V$.
    \end{enumerate}
    \end{claim}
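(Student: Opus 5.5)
We prove \Cref{claim:symmetry-res} by induction over the iterations of the loop in \textsc{Distribute Residual}, mirroring the proof of \Cref{claim:symmetry-invariant}. Fix an automorphism $(\sigma,\pi)$ of $(A,W)$. For the base case, $V_{\mathrm{active}}=V$ at initialization, so invariant (1) holds trivially. Invariants (2) and (3) hold by hypotheses (a) and (b). For the inductive step, assume (1)--(3) hold at the beginning of an iteration. We show that each of the two lines of the loop body preserves them, and that the loop condition itself is invariant under $\sigma$. As before, \Cref{app:obs:invertedAuto} lets us prove only one direction of each equivalence.

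First we handle the call to \textsc{Check Blocking} in Line~\ref{line:block-res}, which uses $W_{\mathrm{critical}}=\emptyset$ and $C_i=\emptyset$ for all $i$. Suppose $i$ is returned. Then some $c\in C\setminus W$ makes a constraint tight. This is either $\sum_{j\in V[c]}p(j,j)=1$, or the constraint for $c$ together with some $c'\in W$ equals $1$; in the latter case $i\in V[c]\setminus V[c']$, because the second set in \textsc{Check Blocking} requires $c'\in C_i$, which is impossible here. We use $\sigma(V[c])=V[\pi(c)]$, $\sigma(V[c]\setminus V[c'])=V[\pi(c)]\setminus V[\pi(c')]$, and invariants (2) and (3). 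Exactly as in the \textsc{Check Blocking} part of the proof of \Cref{claim:symmetry-invariant}, the corresponding constraint for $\pi(c)$ (and $\pi(c')$) is then also tight. Moreover, $\pi(c)\notin W$ since $\pi(W)=W$, and $\sigma(i)$ is active by (1). Hence $\sigma(i)$ is returned as well, so the updated $V_{\mathrm{active}}$ still satisfies (1), while payments and residuals are unchanged.

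Next we handle the residual increase in Line~\ref{line:distr-res}. By (2) and (3), the budgets satisfy $b_{\sigma(i)}=r_{\sigma(i)}+\sum_{c\in W}p(\sigma(i),c)=r_i+\sum_{c\in W}p(\sigma(i),\pi(c))=b_i$, where the reindexing uses that $\pi$ is a bijection on $W$. Together with (1), the quantity $\min_{j\in V_{\mathrm{active}}}b_j$ is unchanged under $\sigma$. Therefore $i$ attains the minimum budget among active voters if and only if $\sigma(i)$ does. Consequently $i$ and $\sigma(i)$ either both receive $\varepsilon$ or neither does, which preserves (3); (2) is untouched. The loop condition depends only on $V_{\mathrm{active}}$ and on the multiset of budgets, both of which are $\sigma$-invariant, so the loop terminates simultaneously for symmetric voters.

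The main obstacle is the first step. We must check carefully that the blocking condition in the residual phase reduces to the residual-only branches, and that the automorphism maps each tight $(c,c')$ pair to a tight $(\pi(c),\pi(c'))$ pair. This is essentially the computation already done in \Cref{claim:symmetry-invariant}, so we reuse it verbatim. Combining \Cref{claim:symmetry-invariant} with the claim, the returned price system satisfies $p(i,c)=p(\sigma(i),\pi(c))$ and $p(i,i)=p(\sigma(i),\sigma(i))$ for every automorphism, which is symmetry.
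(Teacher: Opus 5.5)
Your proof is correct and follows the paper's argument. You induct over the loop, show that \textsc{Check Blocking} is $\sigma$-equivariant exactly as in \Cref{claim:symmetry-invariant} (with $C_i=\emptyset$ and $W_{\mathrm{critical}}=\emptyset$), and then use $b_i=b_{\sigma(i)}$ together with invariant (1) to conclude that residual increments are symmetric; you simply make explicit the budget reindexing and the reduction to the residual-only blocking branches, which the paper leaves implicit.
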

    \begin{claimproof}
        Proof by induction. At the beginning of the first iteration, the claim follows directly from the initialization and the fact that (a) and (b) hold at the beginning of  \textsc{Distribute Residual}.
        Let $V_{\mathrm{new}}$ be the set of voters returned by \textsc{Check Blocking}.
        The argument to show that $i\in V_{\mathrm{new}}$ implies $\sigma(i)\in V_{\mathrm{new}}$ for all $i\in V$ is analogous to \Cref{claim:symmetry-invariant}, using that $C_i=\emptyset = C_{\sigma(i)}$.
        Since $i \in V_{\mathrm{active}} \iff \sigma(i) \in V_{\mathrm{active}}$ and $b_i=b_{\sigma(i)}$ for all $i \in V$ in Line~\ref{line:distr-res}, it follows that $p(i,i)$ is increased by $\varepsilon$ if and only if $p(\sigma(i),\sigma(i))$ is increased by the same amount.
    \end{claimproof}

    This completes the proof for symmetry. 

    \paragraph{Perfect-Symmetry-Uniformity}
    Assume that all candidates in $C$ are pairwise isomorphic and $W$ contains the same number of candidates from each type. Thus, for every pair of candidates $c,c'\in C$, there exists a profile automorphism $(\sigma, \pi)$ of $A$ with $\pi(c) = c'$. Note that we can construct an automorphism $(\sigma', \pi')$ for $(A,W)$ from $(\sigma, \pi)$ (i.e., ensure that $\pi'(W)=W$) by permuting candidates within the types, since $W$ contains the same number of candidates from each type. Hence, we can apply \Cref{claim:symmetry-invariant}. It follows that $h_c=h_{c'}$ for all $c,c'\in W$ at the end of each iteration. 
    Next, we argue that, in every iteration, no voter is blocked by \textsc{Check Blocking}:
    First, observe that since every voter approves some candidate $c\in C$, they also approve a candidate $c'\in W$ of the same type. Since it needs to hold that $h_{c'}>0$ (otherwise, $h_{c^*}=0$ for all $c^*\in W$ and the loop in Line~\ref{line:outer-while} terminates), we have that $C_j\neq\emptyset$ for all $j\in V$ in every iteration. Since therefore $r_j=0$ for all $j\in V$ and $\sum_{j'\in V[c^*]} p(j',c^*) < 1$ for all $c^*\in W$, it follows that no voter is blocked by \textsc{Check Blocking}, as no constraint can be tight.  
    From this, it follows that $\mathbf{ps}$ is budget-uniform, since $b_i=b_j$ for all $i,j\in V$ (as both $i$ and $j$ are active in all iterations) before \textsc{Distribute Residual} in Line~\ref{line:dis}, so no additional residual is distributed.
            
    \paragraph{1-Stability}
    We first consider the price system computed by the algorithm before entering the residual phase in Line~\ref{line:dis}. Clearly, since residuals are pruned as in Line~\ref{line:pruning-end} and $\sum_{i\in V[c']\cap  V[c] }  p(i,c') \leq 1$, there cannot exist any candidates $c \in C\setminus W$ and $c'\in W$ with $\sum_{i\in V[c] \setminus V[c']} p(i,i) + \sum_{i\in V[c']\cap  V[c] }  p(i,c') > 1$.
    Next, suppose there is a candidate  $c \in C\setminus W$ with $\sum_{i\in V[c]} p(i,i)>1$.
    Then, as residuals are increased continuously, at some point in the algorithm, it needs to hold that $\sum_{i\in V[c]} p(i,i)=1$ and $C_j=\emptyset$ for some $j\in V_{\mathrm{active}}\cap V[c]$ in Line~\ref{line:spending-start}. However, this cannot be since $j$ is blocked by \textsc{Check Blocking} in Line~\ref{line:V-blocked} in the same iteration, and $C_j=\emptyset$ implies that $A_j\cap W_{\mathrm{remaining}}=\emptyset$ so $j$ is not unblocked in Line~\ref{line:crit}. Therefore, it needs to hold that $j\notin V_{\mathrm{active}}$ in Line~\ref{line:spending-start}.

    To see that the residual phase preserves 1-stability, note that voters are removed from $ V_{\mathrm{active}}$ in Line~\ref{line:block-res} if increasing the residuals violates 1-stability, and voters are never added to $ V_{\mathrm{active}}$ once removed.

    \paragraph{Single-Winner-Payment-Responsiveness}
    Suppose that $W=\{c\}$ for some candidate $c\in C$ and $c$ is not an approval winner. Let $i,j\in V[c]$ where $i$ does not approve an approval winner and $j$ approves some approval winner $c^*\in C \setminus W$. We need to show that $p(i,c)>p(j,c)$. 
    Note that up to the time step $t=\frac{1}{|V[c^*]|}$,  no voters will be blocked, since for all $c' \in C \setminus W$, $|V[c']|\leq|V[c^*]|$  and $r_{j'}<\frac{1}{|V[c^*]|}$ and $p(j',c'')<\frac{1}{|V[c^*]|}$ for all $j'\in V$ and $c''\in W$. 
    Hence, up to $t$,  it holds that $V_{\mathrm{active}}=V$. Moreover, for all $j'\in V$ we have $C_{j'}=\{c\}$ if $c\in A_{j'}$ and $C_{j'}=\emptyset$ otherwise since $c$ is the only selected candidate. Therefore $p(j',c)=t$ for all $j'\in V[c]$ and $r_{j'}=t$ for all $j'\in V\setminus V[c]$ at the beginning of the iteration with time step $t$.
    We consider the iteration for time step $t$. Note that it needs to hold $h_c>0$, since $c$ is not an approval winner, so $|V[c]|<|V[c^*]|$. Hence $C_{i}=C_{j}=\{c\}$. 
    Let $V^*$ be the set of voters returned by \textsc{Check Blocking}.
    Since it holds for candidate  $c^*\in A_j \setminus W$ that 
    $$
    \sum_{j'\in V[c^*] \setminus V[c]} p(j',j') + \sum_{j'\in V[c^*]\cap  V[c] }  p(j',c)=1,
    $$
    we have $j\in V^*$. Note that $i\notin V^*$, as $|V[c']|<|V[c^*]|$ for all $c'\in A_i \setminus W$. Thus, $W_{\mathrm{critical}}=\emptyset$.
    Hence, $i\in V_{\mathrm{active}}$, $j\notin V_{\mathrm{active}}$, and $C_{i}=\{c\}$ at the beginning of the spending phase in Line~\ref{line:spending-start}, 
    Therefore, after this iteration we have
    \begin{equation}\label{eq:p-c-ineq}
        p(i,c)\geq p(j,c) + \varepsilon > p(j,c).
    \end{equation}
    Note that in all later iterations  until $c$ is fully funded, we have that $j\in V_{\mathrm{active}}$ implies $i\in V_{\mathrm{active}}$: voter $i$ is reactivated only if $c\in W_{\mathrm{critical}}$, which implies that also $i$ is reactivated and both $i,j\in V_{\mathrm{active}}$ from this iteration until $c$ is fully funded.
    Hence, \Cref{eq:p-c-ineq} holds when the algorithm terminates.

    \paragraph{Budget-Averaging}
     Consider any voter $i\in V$ with $b_i<\frac{|W|}{n}$. Clearly, it needs to hold that $b_i<\max_{j\in V} b_j$. Hence, in \textsc{Distribute Residual}, voter $i$ needs to be blocked (as otherwise they would continue to accumulate residual until $b_i=\max_{j\in V} b_j$), which implies that there is some candidate $c \in C \setminus W$ for which increasing $i$'s residual creates a 1-stability violation.

     \paragraph{Laminar-Proportional-Uniformity and 1-Unproportional-Responsiveness}
     We first prove the following invariant, which together with symmetry implies that $\mathbf{ps}$ satisfies \lnice{} when $A$ is laminar.
     \begin{claim}\label{claim:laminar}
         When $A$ is a laminar profile, at the beginning of each iteration of the loop in Line~\ref{line:outer-while}, and before Line~\ref{line:update-remaining}, it holds for all $c\in W$ and (not necessarily distinct) $i,j\in V[c]$ that:
        \begin{enumerate}
            \item If $c\in W_{\mathrm{remaining}}$, then $i,j\in V_{\mathrm{active}}$, $c'\in C_i \iff c'\in C_j$ for all $c'\in W$, $h_{c'}=1$ for all $c'\in W$ with $V[c']\subset V[c]$,  and $r_{j'}=0$ for all $j'\in V[c]$.
            \item It holds that $p(i,c)=p(j,c)$.
        \end{enumerate}
     \end{claim}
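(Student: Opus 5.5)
The plan is an induction over the iterations of the outer loop in Line~\ref{line:outer-while}, in the style of \Cref{claim:symmetry-invariant}, and a recursive use of the structure from \Cref{def:laminar}. The base case is immediate from the initialization. Initially $V_{\mathrm{active}}=V$, all $h_{c'}=1$, all payments and residuals are zero, and the spending-set condition in (1) is checked on the first call of \textsc{MoneyFlow}.

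The key structural fact I would establish first is a nesting lemma for laminar profiles. For any two candidates $c,c'$, either $V[c]\cap V[c']=\emptyset$ or one supporter set contains the other. Moreover, if $i,j\in V[c]$, then for every candidate $c'$ with $V[c']\supseteq V[c]$ both voters approve $c'$. For every $c'$ with $V[c']\subsetneq V[c]$, the set $V[c']$ lies inside one "subparty" below $c$. With this in hand, I would fix $c\in W_{\mathrm{remaining}}$ and $i,j\in V[c]$, and go through the phases.

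\emph{\textsc{MoneyFlow}.} Suppose the invariant holds at the start of the iteration, so every $c'\in W$ with $V[c']\subset V[c]$ still has $h_{c'}=1$. Any such $c'$ then has ratio $1/|V[c']\cap V^*|$, and $h_c\le 1$ with $|V[c]\cap V^*|\ge|V[c']\cap V^*|$. Hence $c$ (or some candidate with an even larger supporter set) is chosen no later than any strictly smaller candidate. Therefore, when $i$ is assigned, the minimizing set $C'$ consists only of candidates whose supporter sets contain $V[c]$. All of these are approved by both $i$ and $j$, and $i,j$ are removed from $V^*$ in the same round with $A_i\cap C'=A_j\cap C'$. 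This gives $C_i=C_j$ and shows that no strictly smaller candidate receives money, so $h_{c'}=1$ is maintained.

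\emph{Blocking.} Residuals in $V[c]$ are zero, and every voter of $V[c]$ spends on candidates of equal supporter sets. I would argue that the blocking conditions in \textsc{Check Blocking} are symmetric in $i$ and $j$ for laminar profiles. For any unselected $c''$, nesting gives either $V[c]\subseteq V[c'']$ or $V[c'']\cap V[c]$ lying in one subparty. In the latter case, a tight constraint for $c''$ can only involve a spending candidate $c'$ with $V[c']\supseteq V[c]$. I expect this to be the main obstacle: one must show that a tight constraint blocks either all of $V[c]\cap V_{\mathrm{active}}$ or none of it. If a strict subset were blocked, $c$ could become critical and part of $V[c]$ could pay while another part could not. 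I would handle it by noting that if some supporter of $c$ is blocked then $c\in C'$-type spending stops. Then either all of $V[c]$ is blocked, making $c$ critical and re-activating all of $V[c]$ together, or I derive a contradiction from the nesting lemma. The pruning phase only touches residuals, which are zero on $V[c]$, so $r_{j'}=0$ is preserved there.

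\emph{Spending and payments.} In the spending phase, $i$ and $j$ are both active with identical spending sets, so they add identical amounts $\varepsilon/|C_i|$ to each candidate. This maintains (2) for $c$ while $c$ is remaining. Once $h_c=0$, no one pays $c$ anymore, so $p(i,c)=p(j,c)$ persists. For the update in Line~\ref{line:update-remaining}, I note that (1) is conditioned on $c\in W_{\mathrm{remaining}}$, so it is vacuous for candidates that become fully funded.
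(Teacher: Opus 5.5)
\paragraph{Verdict: gap in the blocking step.}
Your outline follows the paper's proof: outer induction; \textsc{MoneyFlow} gives all of $V[c]$ identical spending sets and never a strictly nested candidate; blocking is all-or-nothing on $V[c]$; and the spending phase keeps payments equal. However, the step you yourself flag as the main obstacle is never proved. That step is that \textsc{Check Blocking} blocks either all of $V[c]$ or none of it.

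Your suggested handling (``either all of $V[c]$ is blocked \dots{} or I derive a contradiction from the nesting lemma'') only restates the dichotomy. Nesting alone yields no contradiction, since an unselected $c''$ with $V[c'']\subsetneq V[c]$ is perfectly compatible with laminarity. What you need is a non-tightness argument that uses the branch structure of \textsc{Check Blocking}.

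\paragraph{The missing argument.}
Since $c\in W_{\mathrm{remaining}}$, every $i\in V[c]$ receives $C_i\neq\emptyset$. The two residual branches of \textsc{Check Blocking} require $C_i=\emptyset$, so they cannot fire. Hence $i$ can only be blocked through a tight pair $(c'',c^*)$ with $c''\in A_i\setminus W$, $c^*\in C_i$, and $i\in V[c'']\cap V[c^*]$.

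By your \textsc{MoneyFlow} step, $c^*\in C_j$. Suppose $c''\notin A_j$. Then laminarity gives $V[c'']\subsetneq V[c]$. This has two consequences:
\begin{itemize}
\item $\sum_{\ell\in V[c'']\setminus V[c^*]} r_\ell\le\sum_{\ell\in V[c]} r_\ell=0$ by invariant (1).
\item $\sum_{\ell\in V[c'']\cap V[c^*]}p(\ell,c^*)\le 1-h_{c^*}<1$, because $c^*\in C_i$ forces $h_{c^*}>0$.
\end{itemize}
So the constraint is not tight, a contradiction. Therefore $c''\in A_j$, so $j\in V[c'']\cap V[c^*]$, and the very same tight constraint blocks $j$.

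Consequently, blocking one supporter of $c$ blocks all of them. Then $c$ enters $W_{\mathrm{unsupp}}$, and Line~\ref{line:crit} reactivates all of $V[c]$ together. You list the right ingredients: zero residuals on $V[c]$, and spending candidates whose supporter sets contain $V[c]$. The decisive uses of $h_{c^*}>0$ and of ``$C_i\neq\emptyset$ rules out the residual branches'' are missing.

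\paragraph{A smaller fix in the \textsc{MoneyFlow} step.}
``$\ge$'' and ``no later than'' are not enough. A tie would put a strictly nested $c'$ into $C'$ together with $c$, so that $c'\in C_i$ while possibly $c'\notin A_j$. You need strictness. It holds because $V[c]\subseteq V^*$ at that moment, by your inner all-or-nothing invariant. Hence $|V[c']\cap V^*|=|V[c']|<|V[c]|=|V[c]\cap V^*|$, which together with $h_{c'}=1\ge h_c$ gives $1/|V[c']|>h_c/|V[c]|$.
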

     \begin{claimproof}
         We proceed by induction over the iterations of the outer loop. At the beginning of the first iteration, the claim holds trivially from the initialization. For the inductive step, assume the claim holds at the beginning of an iteration. Assume that $c\in W_{\mathrm{remaining}}$, otherwise we are done. 
         First, we consider the computation of the spending sets $(C_\ell)_{\ell \in V_{\mathrm{active}}}$ in \textsc{MoneyFlow} (Line~\ref{line:C-fund}).
         
         Using an inductive argument, we show that 
         \begin{enumerate*}[label=(\alph*)]
        \item $c'\notin C_i$ for all $c'\in W$ with $V[c']\subset V[c]$, 
        \item $V[c] \subseteq V^*$ or $V[c] \cap V^*=\emptyset$, and
        \item $c' \in C_i$ implies that $c' \in C_j$ for all $c'\in W$:
        \end{enumerate*}
         For the induction step, consider an iteration in \textsc{MoneyFlow} and assume that $V[c] \subseteq V^*$, otherwise we are done.
         Assume for the sake of contradiction that some $c'\in W$ with $V[c']\subset V[c]$ is added to $C_i$ in Line~\ref{line:assign-ci} in this iteration.
         This implies that in this iteration
         $$
             \frac{h_{c'}}{|V[c']\cap V^*|}\leq\frac{h_c}{|V[c]\cap V^*|},
         $$
         as otherwise $c'$ would not be added to $C_j$ in Line~\ref{line:assign-ci}. However, this directly leads to a contradiction, since $h_{c'}=1\geq h_c$ and $|V[c']\cap V^*|< |V[c]\cap V^*|$ as $V[c']\subset V[c] \subseteq V^*$.
         
         Next, assume there is a candidate $c'\in W$ such that $c'$ is added to $C_i$ but not to $C_j$. It must hold that $c'\notin A_j$.  Since $A$ is laminar, $c'\notin A_j$ implies that $V[c']\subset V[c]$, which leads to a contradiction as argued above.
         Therefore, both $i$ and $j$ are assigned the same candidates in Line~\ref{line:assign-ci}, so  (b) and (c) are maintained.

         Moreover, note that $C_\ell\neq \emptyset$ for all $\ell \in V[c]$ since $c\in W_{\mathrm{remaining}}$.

         We consider the set of voters $V^*$ returned by \textsc{Check Blocking} in Line~\ref{line:V-blocked} next. We will argue that if $i\in V^*$, then it needs to be that $j\in V^*$. Suppose that $i$ is added in \textsc{Check Blocking} for some candidate $c'\in A_i\setminus W$. Since $C_i\neq\emptyset$, voter $i$ can only be added if there is a candidate $c^* \in C_i$ with
        \begin{equation}\label{eq:sum-cprime}
            \sum_{\ell\in V[c'] \setminus V[c^*]} p(\ell,\ell) + \sum_{\ell\in V[c']\cap  V[c^*] }  p(\ell,c^*) =1.
        \end{equation}
        As shown above, we have that  $c^* \in C_j$. It remains to show that $c'\in A_j$. Note that if $c'\notin A_j$, then $V[c']\subset V[c]$ as $A$ is laminar. This however contradicts \Cref{eq:sum-cprime}, as by (1)
        $$
        \sum_{\ell\in V[c'] \setminus V[c^*]} p(\ell,\ell) \leq \sum_{\ell\in V[c]} p(\ell,\ell) =0, 
        $$
        and as it needs to be that $h_{c^*}>0$ as $c^* \in C_i$
        $$
         \sum_{\ell\in V[c']\cap  V[c^*] }  p(\ell,c^*) \leq\sum_{\ell\in  V[c^*] }  p(\ell,c^*) <1.
        $$
        
        Since we have shown above that it holds for arbitrary $i,j\in V[c]$ that if $i$ is added to $V_{\mathrm{blocked}}$, then also $j$ is added, it directly follows that $c \in W_{\mathrm{unsupp}}$ if $i\in V_{\mathrm{blocked}}$. 
        Hence, we have that $i,j\in V_{\mathrm{active}}$ after Line~\ref{line:crit}. 

         Therefore, we have that $c'\notin C_\ell$ for all $\ell \in V$ and $c'\in W$ with $V[c']\subset V[c]$, and $C_\ell\neq \emptyset$ for all $\ell \in V[c]$. Moreover, all invariants hold at the beginning of Line~\ref{line:spending-start}. Thus, the invariants are maintained by the spending phase.
     \end{claimproof}
    \sloppy 
     With this claim, it follows from \Cref{prop:lnice-residual-unpropresp} that \texttt{Continuous Phragm{\'e}n} satisfies 1-unproportional-responsiveness.
     If $W$ is laminar proportional, we need to show that $\mathbf{ps}$ is budget-uniform.
     With the property from \Cref{claim:laminar} that the cost of a candidate $c\in W$ is split equally among the voters in $V[c]$, it follows that $p_i=p_j$ for all $i,j\in V$ using a simple structural induction on the laminar profile:
     In the base case where the profile is unanimous, all payments are clearly equal. The induction step for adding a unanimous candidate is clear, since all voters pay the same amount for the unanimous candidate. When taking the union of two disjoint profiles with voters $V_1,V_2$ and candidates $C_1,C_2$, let $p_1^*, p^*_2$ be the payment of voters in $V_1$ and $V_2$ for $C_1$ and $C_2$, respectively. Since $W$ is laminar proportional, we have that 
     \begin{align*}
         \frac{|V_1|}{|W\cap C_1|}=\frac{|V_2|}{| W\cap C_2|} \iff \frac{|V_1|}{|V_1|p_1^*}=\frac{|V_2|}{|V_2|p_2^*} \iff p_1^* = p_2^*.
     \end{align*}
     Note that it now suffices to show that $r_i=0$ for all $i\in V$ before \textsc{Distribute Residual} in Line~\ref{line:dis}, since \textsc{Distribute Residual} will then not allocate additional residual (as all budgets are equal) and $\mathbf{ps}$ will hence be budget-uniform. 
     Consider the first iteration where after Line~\ref{line:update-remaining} we have $A_i\cap W_{\mathrm{remaining}}=\emptyset$ for some $i\in V$. Note that by \Cref{claim:laminar}~(1), it follows that $V_{\mathrm{active}}=V$ when entering the spending phase in Line~\ref{line:spending-start} in all iterations up to (including) this iteration, as for every voter $j\in V$ there is some $c\in A_j\cap W_{\mathrm{remaining}}$. Moreover, we have that $r_j=0$ for all  $j\in V$, since \Cref{claim:laminar}~(1) holds in this iteration before Line~\ref{line:update-remaining}.
     Let $p^*=p_i$ be the total payment of $i$. Note that since $r_j=0$ for all  $j\in V$ and $V_{\mathrm{active}}=V$ in every spending phase, it follows that $p_j=p^*$ for all $j\in V$. Since we argued before that the total payment is equal for all voters, it follows that $h_c=0$ for all $c\in W$, so $W_{\mathrm{remaining}}=\emptyset$ and the condition for the loop in Line~\ref{line:outer-while} is no longer met. Hence, we have $r_j=0$ for all $j\in V$ before \textsc{Distribute Residual} in Line~\ref{line:dis}.

     \paragraph{Perfect-Coverage-Uniformity}
     Let $W$ be a committee that provides perfect coverage. We show that $\mathbf{ps}$ is budget-uniform by proving the following stronger claim. 
     \begin{claim}
         When $W$ provides perfect coverage, at the beginning and end of each iteration of the loop in Line~\ref{line:outer-while}, it holds that
         \begin{enumerate}
             \item $p(i,c)=p(j,c)$ for every $c\in W$ and $i,j \in V[c]$, 
             \item $r_i=0$ for $i\in V$ with $A_i\cap W_{\mathrm{remaining}}\neq \emptyset$,
             \item $V_{\mathrm{active}}=V$, and
             \item $b_i=b_j$ for all $i,j\in V$.
         \end{enumerate}
     \end{claim}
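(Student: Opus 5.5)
We argue by induction over the iterations of the outer loop in Line~\ref{line:outer-while}, with all four invariants holding trivially at initialization. Perfect coverage gives two facts we will use repeatedly. First, every voter approves exactly one selected candidate, so the candidates $c\in W$ have pairwise disjoint supporter sets $V[c]$ that partition $V$. Second, every unselected candidate $c^*\notin W$ satisfies $|V[c^*]|\le \min_{c'\in W}|V[c']|$. Write $s\coloneqq\min_{c\in W}|V[c]|$.

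\textbf{Step 1 (\textsc{MoneyFlow}).} Fix the start of an iteration with $V_{\mathrm{active}}=V$. Each voter $i$ with $A_i\cap W_{\mathrm{remaining}}\neq\emptyset$ approves exactly one remaining candidate $c$. Since the sets $V[c]$ are disjoint, the covering procedure assigns $C_i=\{c\}$, and all of $V[c]$ receives this same spending set. Voters whose unique selected candidate is already funded get $C_i=\emptyset$.

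\textbf{Step 2 (no voter is blocked).} This is the main obstacle. Invariants~(1) and~(4), together with $V_{\mathrm{active}}=V$, imply that all supporters of a remaining candidate $c$ have paid the same amount $t$, where $t$ is the current time. Hence $h_c=1-|V[c]|\,t>0$, so $t<1/|V[c]|$. A candidate $c$ is fully funded exactly at time $1/|V[c]|$. After that, its supporters accumulate residual, so a voter $j\in V[c]$ has $r_j=t-1/|V[c]|$ and budget $b_j=t$.

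To see that no constraint in \textsc{Check Blocking} is tight before the loop ends, take $c^*\notin W$ and $c'\in W$ with $c'$ remaining. Every voter in $V[c^*]$ has $r_j+\sum_{c''}p(j,c'')=t$, so the relevant sum is at most $|V[c^*]|\cdot t$. The loop is still running, so some candidate remains and $t<1/s'$, where $s'$ is the smallest supporter-set size among remaining candidates; in particular $t<1/s$ fails in general. We therefore have to compare $|V[c^*]|$ with the sizes of the remaining candidates, which are all at least $s$.

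The honest bound is $\sum_{j\in V[c^*]} r_j \le |V[c^*]|\,t$, and this can exceed $1$ once $t>1/|V[c^*]|$. This is the delicate point. Large selected candidates, which are funded at early times $1/|V[c]|$, would then let their supporters accumulate residual. So the invariant $r_i=0$ must be restricted to voters of remaining candidates, as in item~(2), and budget equality must come from the fact that all voters earn at rate $1$ while active. Such a voter is never blocked because the same total $|V[c^*]|\,t$ bounds every relevant sum. Once $t\le 1/s$, this total is at most $|V[c^*]|/s\le 1$, with equality only at the final instant. I would therefore argue that the whole run ends at $t=1/s$. At every time step before that, each sum is strictly below $1$, so \textsc{Check Blocking} returns $\emptyset$, no candidate becomes unsupported, $W_{\mathrm{critical}}$ stays empty, and $V_{\mathrm{active}}=V$ is preserved.

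\textbf{Step 3 (spending, pruning, conclusion).} All voters are active and receive $\varepsilon$ per step, either into their single remaining candidate or into their residual. This preserves~(1), (2) and~(4), with $b_i=t$ for all $i$. Pruning does nothing, because no sum exceeds $1$ while $t<1/s$; at the final step the bound $|V[c^*]|\,t\le 1$ still holds. When the loop ends at $t=1/s$, all budgets equal $1/s$. \textsc{Distribute Residual} then stops immediately, since $\min_i b_i=\max_i b_i$, so the returned price system is budget-uniform.
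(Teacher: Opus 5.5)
Your argument is correct and follows the paper's proof. Perfect coverage forces singleton spending sets, and every sum checked in \textsc{Check Blocking} is at most (number of supporters of the unselected candidate) times the common budget. That budget stays below $1/s$ because a minimum-support selected candidate is funded last (the paper phrases this as $h_{c^*}>0$ for such a $c^*$), while unselected candidates have at most $s$ supporters. One cleanup: your aside that ``$t<1/s$ fails in general'' is false and contradicts your own correct conclusion that the run ends exactly at $t=1/s$; likewise the worry about sums exceeding $1$ once $t>1/|V[c^*]|$ never arises, since $1/|V[c^*]|\ge 1/s>t$ for as long as the loop runs.
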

     \begin{claimproof}
          We proceed by induction over the iterations of the outer loop. At the beginning of the first iteration, the claim holds trivially from the initialization. For the inductive step, assume the claim holds at the beginning of an iteration.
          Fix $i\in V$. As $W$ provides perfect coverage, we have that $A_i\cap W = \{c\}$ for some $c\in W$. Hence,  for the spending sets $(C_\ell)_{\ell \in V_{\mathrm{active}}}$ in \textsc{MoneyFlow} (Line~\ref{line:C-fund}), it holds that $C_i= \{c\}$ if $c\in W_{\mathrm{remaining}}$ and  $C_i=\emptyset$ otherwise. In particular, this implies that $C_j=C_{j'}$ for all $c'\in W$ and $j,j'\in V[c']$. 
          
          Next, we show that $i\notin V_{\mathrm{blocked}}$ after Line~\ref{line:V-blocked}. 
          Let $c^*\in W$ be a selected candidate with minimum size $|V[c^*]|$. It needs to hold that $h_{c^*}>0$, since otherwise $h_{c'}=0$ for all $c'\in W$ and $W_{\mathrm{remaining}}=\emptyset$.
          By (1), it holds that $p(j,c^*)=p(j', c^*)$ for all $j,j'\in V[c^*]$.
          Consider voter $j\in V[c^*]$.
          Since $h_{c^*}>0$, it follows that 
          $$
          p(j,c^*)<\frac{1}{|V[c^*]|},
          $$
          and using (2) we get
          $$
          b_j= p(j,c^*)<\frac{1}{|V[c^*]|}.
          $$
          Since by (4) all budgets are equal (all voters are active in all iterations), we get that for all $j'\in V$
          $$
          r_{j'}\leq b_{j'}=b_j<\frac{1}{|V[c^*]|}.
          $$
        
          Consider any $c'\in A_i\setminus W$ and note that $|V[c']|\leq|V[c^*]|$ since $W$ provides perfect coverage. We thus have
          $$
            \sum_{\ell\in V[c']} p(\ell,\ell) < \sum_{\ell\in V[c']}\frac{1}{|V[c^*]|} = \frac{|V[c']|}{|V[c^*]|} \leq 1,
          $$
          and 
          $$
              \sum_{\ell\in V[c'] \setminus V[c]} p(\ell,\ell) + \sum_{\ell\in V[c']\cap  V[c] }  p(\ell,c) < \sum_{\ell\in V[c']}\frac{1}{|V[c^*]|} = \frac{|V[c']|}{|V[c^*]|} \leq 1,
          $$
          so voter $i$ is not blocked.
          
          Since $i\in V_{\mathrm{active}}$, $C_i= \{c\}$ if $c\in W_{\mathrm{remaining}}$ and  $C_i=\emptyset$ otherwise, and $C_{j'}=C_{j''}$ for all $c'\in W$ and $j',j''\in V[c']$, the spending phase maintains the invariants. Note that $t_{j'}=1$ for all $j'\in V$, since no 1-stability constraint can be violated as argued above. 
     \end{claimproof}
     
     Note that it follows with the above claim that $\mathbf{ps}$ is budget-uniform, since $b_i=b_j$ for all $i,j\in V$ before \textsc{Distribute Residual} in Line~\ref{line:dis}, so no additional residual is distributed.
\end{proof}

\paragraph{Implementation of \Cref{alg:cont-phragmen}}
To implement \Cref{alg:cont-phragmen} (cf. code available on \href{https://github.com/luca-kreisel/Explanation-Systems}{GitHub}), before the spending phase in each iteration of the outer loop, we compute the minimum time step $\delta_t$ until
\begin{enumerate*}[label=(\alph*)]
    \item a candidate $c \in W_{\mathrm{remaining}}$ is fully paid for, or
    \item continuing funding according to the current spending sets $(C_i)_{i\in V_{\mathrm{active}}}$ creates a 1-stability violation (involving at least one voter not approving a candidate in  $W_{\mathrm{critical}}$), i.e., \textsc{Check Blocking} returns a non-empty set of voters that need to be blocked. 
\end{enumerate*}
Then, we apply the spending phase for this computed time step $\delta_t$. 
Note that there can be at most $\bigO(n\cdot \lvert W\rvert)$ such steps, as in each step we need to fully pay for a candidate $c\in W$, or add a new voter to $V_{\mathrm{blocked}}$ (note that voters are removed from $V_{\mathrm{blocked}}$ only if they approve a candidate $c \in W_{\mathrm{critical}}$, and can then only be re-added once $h_c=0$).
Moreover, since we add candidates to $W_{\mathrm{critical}}$ and unblock their supporters when they do not have any active supporters, all selected candidates will be fully paid for. 
Furthermore, note that using this argument, it follows that there can in total be at most $\bigO(n\cdot \lvert W\rvert)$ iterations of the convergence-loop in the planning phase, amortized over all iterations of the outer loop.

Similarly, we implement the residual phase by computing the minimum time step $\delta_t$ until
\begin{enumerate*}[label=(\alph*)]
    \item allocating additional residual to a voter $i\in V_{\mathrm{active}}$ violates 1-stability, so $i$ is added to $V_{\mathrm{blocked}}$, or 
    \item we need to add a voter to the set $V^*=\{j \in V_{\mathrm{active}} \mid b_j = \min_{i\in V_{\mathrm{active}}} b_i \}$ (or all voters in $V^*$ reach the maximum budget).
\end{enumerate*}
Then, we add residual for all voters in $V^*$ for this computed time step $\delta_t$. Note that voters are never removed from $V_{\mathrm{blocked}}$ in this phase, and each voter can only be added once to $V^*$ , as they are only removed from $V^*$ when they are blocked and then remain blocked until termination. Hence we have at most $\bigO(n)$ such steps. 

All other operations within the loops can be executed in $\bigO(n^2\cdot m^2\cdot \lvert W\rvert)$.
Thus, the total running time is in $\bigO(n^3\cdot m^2\cdot \lvert W\rvert^2)$.

\subsection{\texttt{Equal Split}}
Finally, we consider \texttt{Equal Split}.
\eqSplitGuarantees*
\begin{proof}
    Fix a profile $A \in \app$ and committee $W\subseteq C$, and let $\mathbf{ps}$ be the price system returned by \texttt{Equal Split}.
    
    Recall that residuals are distributed as in \textsc{Distribute Residual} in \Cref{alg:cont-phragmen}.
    As 1-stability is trivially satisfied when $r_i=0$ for all $i\in V$, it remains to show that \textsc{Distribute Residual} preserves 1-stability, which follows analogously as in the proof of \Cref{thm:contPhragmenGuarantees}.
    Budget-averaging also depends only on \textsc{Distribute Residual} and hence follows analogously as in the proof of  \Cref{thm:contPhragmenGuarantees}.
    For symmetry, let $(\sigma, \pi)$ be an automorphism of $(A, W)$. Clearly, we have that $p(i, c) = p(\sigma(i), \pi(c))$ for all $i \in V$ and $c \in W$. 
     It remains to show that $p(i, i) = p(\sigma(i), \sigma(i))$ for all $i \in V$, which follows with \Cref{claim:symmetry-res} from the proof of \Cref{thm:contPhragmenGuarantees}.

    Note that if $A$ is a laminar profile, then $\mathbf{ps}$ satisfies \lnice{}, as all supporters $i\in V[c]$ have the same payment for every candidates $c\in W$, and all residuals are equal for voters $i,j\in V$ with $A_i=A_j$ by symmetry.
    Hence, with residual-stability, it follows from \Cref{prop:lnice-residual-unpropresp} that \texttt{Equal Split} satisfies $1$-unproportional-responsiveness. Moreover, if $W$ is laminar proportional, then $p_i=p_j$ for all $i,j\in V$. Hence, no residual is distributed in \textsc{Distribute Residual} and $\mathbf{ps}$ is budget-uniform, so \texttt{Equal Split} satisfies laminar-proportional-uniformity. 

    It remains to consider perfect-coverage-uniformity.
    We need to consider the computation of the residuals in \textsc{Distribute Residual}.
    Using an inductive argument, we show that $V_{\mathrm{active}}=V$ in \textsc{Distribute Residual} in every iteration, so \textsc{Distribute Residual} terminates when all voters have equal budget:
    Since $r_i=0$ for all $i\in V$ at the beginning of \textsc{Distribute Residual}, it holds for all $i^*\in V$ with $b_{i^*}=\max_{j\in V} b_j$ that $r_{i^*}=0$ throughout, since \textsc{Distribute Residual} terminates when all voters have reached the maximum budget. Let $c^*\in W$ be a selected candidate with minimum size $|V[c^*]|$. We have that $\max_{j\in V} b_j=\max_{j\in V} p_j=\frac{1}{|V[c^*]|}$ in every iteration.
    Suppose that a voter $i\in V_{\mathrm{active}}$ is blocked in some iteration. By the condition in the loop in \textsc{Distribute Residual}, we have that $\frac{1}{|V[c^*]|}=\max_{j\in V} b_j>\min_{j\in V_{\mathrm{active}}} b_j =\min_{j\in V} b_j$. Hence, $r_{j^*}<\frac{1}{|V[c^*]|}$ for all $j^*\in V$ with $b_{j^*}= \min_{j\in V} b_j$. Since $r_{j'}=0$ for all voters $j'\in V$ with $b_{j'}>\min_{j\in V_{\mathrm{active}}} b_j =\min_{j\in V} b_j$, we have that $r_{j'}<\frac{1}{|V[c^*]|}$ for all $j'\in V$.

    Voter $i\in V$ needs to be added to $V^*$ in \textsc{Check Blocking} for some  candidate $c \in A_i\setminus W$.
    Note that  $|V[c^*]|\geq |V[c]|$ as $W$ provides perfect coverage. 
    It cannot be that $i$ is added for condition $\sum_{j\in V[c]} p(j,j)=1$, as
    $$
        \sum_{j\in V[c]} p(j,j) < \sum_{j\in V[c]}\frac{1}{|V[c^*]|} = \frac{|V[c]|}{|V[c^*]|} \leq 1.
    $$
    Hence, it needs to hold that there exists a candidate $c'\in W$ such that 
    \begin{equation*}
        \sum_{j\in V[c]\setminus V[c']} p(j,j) +  \sum_{j\in V[c]\cap V[c']} p(j,c') =1.
    \end{equation*}
    Since $i$ is added to $V^*$ and $C_i=\emptyset$, it needs to hold that $i\in V[c]\setminus V[c']$. However, this leads to a contradiction: 
    Since $i\in V[c]\setminus V[c']$, we have that 
\begin{align*}
    \sum_{j\in V[c]\setminus V[c']} p(j,j) +  \sum_{j\in V[c]\cap V[c']} p(j,c') &= p(i,i) +  \sum_{j\in V[c]\setminus (V[c']\cup \{i\})} p(j,j) +  \sum_{j\in V[c]\cap V[c']} p(j,c') \\
    &\leq p(i,i) +  \frac{|V[c]|-1}{|V[c^*]|} < \frac{1}{|V[c^*]|}  +  \frac{|V[c]|-1}{|V[c^*]|} \leq 1.
\end{align*}
\end{proof}
\paragraph{Weighted Split}
A possible refinement of \texttt{Equal Split} is to change how each candidate’s cost is shared among its supporters. One approach for this, which we term \texttt{Weighted Split}, is the following: For each voter $i$, we define their total obligation as 
$ o_i = \sum_{c \in A_i \cap W} \frac{1}{|V[c]|}$.
Then, for each selected candidate $c \in W$, we distribute $c$'s cost among its supporters $i \in V[c]$ inversely proportional to their total obligation, i.e.,
$
  p(i,c) \propto \frac{1}{o_i}
  \text{for all } c \in W \text{ and } i \in V[c]$.
We leave an investigation of this approach for future work.

\subsection{Alternative Optimization-Based Approaches}\label{app:subsec:opt}
In the following, we discuss alternative, optimization-based approaches for computing price systems. Convex programming is particularly appealing because a strictly convex objective combined with a convex feasible set ensures that variables with symmetric constraints and roles in the objective function have equal values in an optimal solution. The main challenge lies in identifying an appropriate objective function.

We explore the following objective function:
\begin{equation*}
\text{Maximize} \sum_{i\in V} log(r_i+\varepsilon) + \sum_{c \in W, i\in V[c]} log(p(i,c)+\varepsilon),\tag{\texttt{Convex Program}}
\end{equation*}
for some small $\varepsilon>0$ (otherwise, the objective is undefined when the residual of some voter needs to be zero). The objective function can be expressed using exponential cone constraints, supported by standard solvers.  

We optimize the objective function subject to linear feasibility constraints encoding 1-stability.  
However, note that for $i\in V$ with $A_i\subseteq W$,  the residual for voter $i$ is unbounded. Hence, we need to introduce an additional constraint that limits $r_i\leq |W|$ for all $i\in V$.
We then ``prune'' excess residual as follows: let $b^* = \max(\max_{i\in V : r_i \leq 1} b_i, \max_{i\in V} p_i)$ if there is a ``bounded'' voter $i\in V$ with $ r_i \leq 1$, and $b^* = \max_{i\in V} p_i$ otherwise.
Then, for each $i\in V$ with $b_i > b^*$, we reduce $r_i$ by $b_i - b^*$. Intuitively, this achieves that the voters with unbounded residual should only receive as much residual as they need to match the maximum budget that any bounded voter has.

However, we find that this approach exhibits two problems. 
First, note that increasing residual when possible always improves the objective. As a consequence, \texttt{Convex Program} violates laminar-proportional-uniformity, since the 1-stability constraint allows voters from different ``subparties'' to receive different residuals depending on their size.
Second, the ``trade-off'' in the objective between the residuals and payments assigned to the voters approving some candidate $c\in C\setminus W$ (subject to the 1-stability constraint for $c$) may lead to unintuitive, undesirable behavior. For example, even when a committee $W$ provides perfect coverage, a voter $i$ who is part of many such 1-stability constraints may have a very low total payment, since increasing their payment necessitates reducing the residuals of many other voters to satisfy the constraints.  Hence, \texttt{Convex Program} violates perfect-coverage-uniformity.
Exploring suitable ways to overcome the challenges outlined above, or exploring alternative avenues for optimization-based approaches is an interesting direction for future work. 
\section{Additional Experimental Results}\label{app:exp}
In addition to the results for the Euclidean-VCR model and committee size $k=\lfloor m/2 \rfloor$ discussed in the main body, we report additional results in this section: We vary the committee size using $k\in\{\lfloor m/8 \rfloor,\,\lfloor m/4 \rfloor\}$.
Additionally, we evaluate our approach on 1000 profiles sampled from the resampling model, as well as for real-world participatory budgeting data from Pabulib.
\begin{itemize}
    \item \textbf{Pabulib:} We use real-world participatory budgeting election data from the Pabulib dataset \citep{Faliszewski23}. We collect all instances with approval ballots available as of January 23, 2026, with $m>4$ and average ballot length at least $3$. For instances with $n>500$, we subsample voters uniformly at random to $n=500$. Project costs are discarded.
    \item \textbf{Resampling:} 
    For each profile, we sample $n,m \in [10, 100]$ uniformly at random.
    We generate profiles using the resampling model \citep{DBLP:conf/ijcai/BoehmerFJ0LPRSS24} with parameters $\phi, p \in [0,1]$ sampled uniformly at random. A central ballot is sampled with each candidate approved with probability $p$. Each voter's ballot is then generated independently starting from the central ballot: for each candidate, with probability $1-\phi$ we keep the (dis-)approval for the candidate as in the central ballot, and with probability $\phi$ the candidate is resampled (approved independently with probability $p$). 
\end{itemize}
\subsection{EJR+ and Budgets}
We begin by reporting additional results for the experiments from \Cref{sec:exp:ejr}, where, for a given profile $A$ and committee $W$,  we compare the minimum $\alpha$ value for $\alpha$-EJR+ satisfied by $W$ and the minimum budget of a voter in the computed price system as a fraction of $\nicefrac{\lvert W\rvert}{n}$. The results are displayed in \Cref{fig:ejr:k-one}, \Cref{fig:ejr:k-two}, and \Cref{fig:ejr:k-three} for $k\in\{\lfloor m/8 \rfloor,\,\lfloor m/4 \rfloor,\,\lfloor m/2 \rfloor\}$, respectively. Additionally, we present histograms of the number of instances with $1$-EJR+ violations across the minimum budget fractions in \Cref{fig:ejr:k-hist-one}, \Cref{fig:ejr:k-hist-two}, and \Cref{fig:ejr:k-hist-three}.
\begin{figure}[htbp]
\centering
\includegraphics[width=\textwidth]{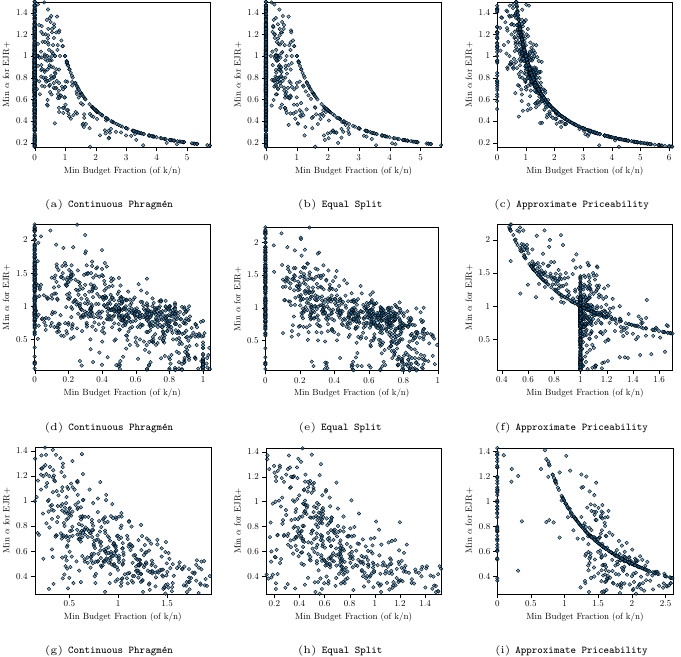}
\caption{Results for committees of size $k=\lfloor m/8 \rfloor$. Scatterplot of minimum budget and $\alpha$-EJR+ approximation across different datasets and explanation rules. Axes are clipped to the 5th–95th percentile range. Rows show different datasets: Euclidean (top), Resampling (middle), Pabulib (bottom).}
\label{fig:ejr:k-one}

\end{figure}

\begin{figure}[htbp]
\centering
\includegraphics[width=\textwidth]{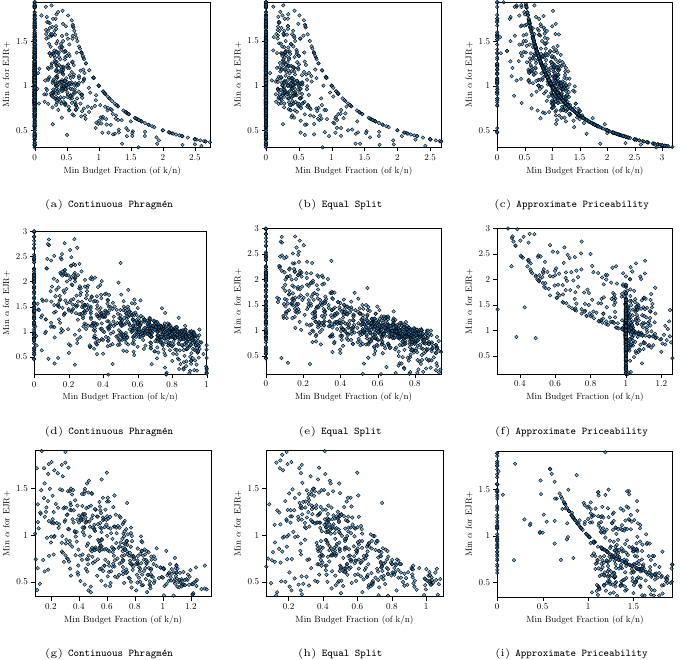}
\caption{Results for committees of size $k=\lfloor m/4 \rfloor$. Scatterplot of minimum budget and $\alpha$-EJR+ approximation across different datasets and explanation rules. Axes are clipped to the 5th–95th percentile range. Rows show different datasets: Euclidean (top), Resampling (middle), Pabulib (bottom).}
\label{fig:ejr:k-two}
\end{figure}
\begin{figure}[htbp]
\centering
\includegraphics[width=\textwidth]{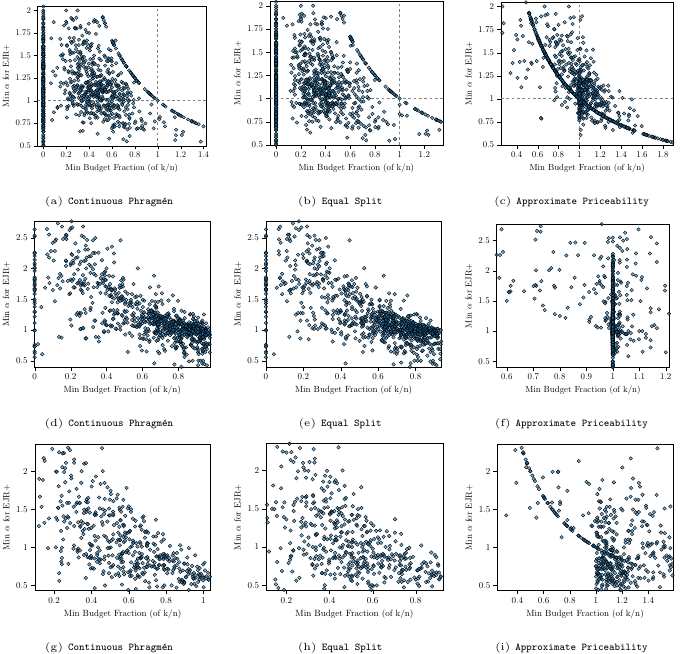}
\caption{Results for committees of size $k=\lfloor m/2 \rfloor$. Scatterplot of minimum budget and $\alpha$-EJR+ approximation across different datasets and explanation rules. Axes are clipped to the 5th–95th percentile range. Rows show different datasets: Euclidean (top), Resampling (middle), Pabulib (bottom). We also display the results for the Euclidean model already reported in the main body for completeness.}
\label{fig:ejr:k-three}
\end{figure}

\begin{figure}[htbp]
\centering
\includegraphics[width=\textwidth]{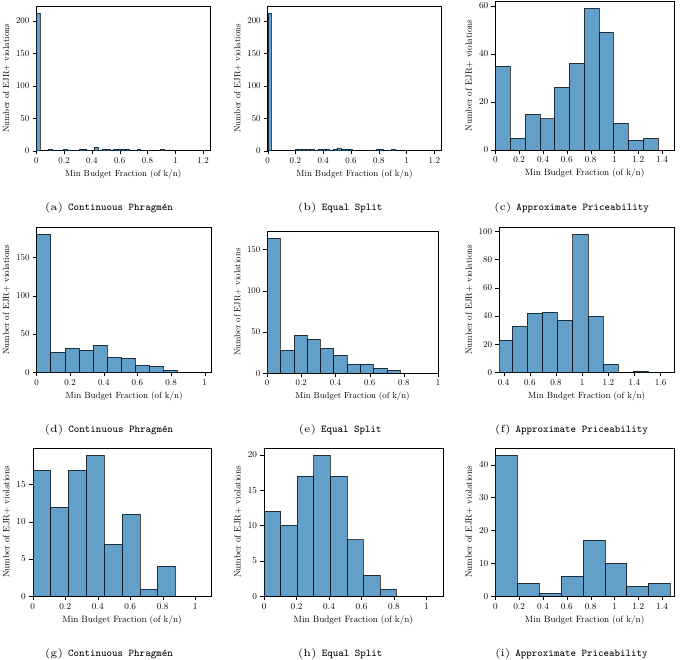}
\caption{Histogram of the number of instances with $1$-EJR+ violations across the minimum budget fractions. Results for committees of size $k=\lfloor m/8 \rfloor$ and different explanation rules.  Rows show different datasets: Euclidean (top), Resampling (middle), Pabulib (bottom).}
\label{fig:ejr:k-hist-one}
\end{figure}

\begin{figure}[htbp]
\centering
\includegraphics[width=\textwidth]{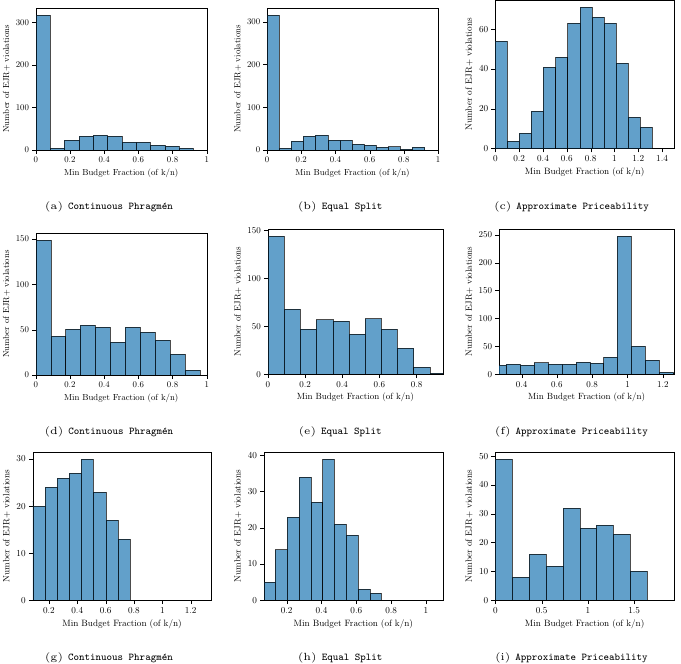}
\caption{Histogram of the number of instances with $1$-EJR+ violations across the minimum budget fractions. Results for committees of size $k=\lfloor m/4 \rfloor$ and different explanation rules.  Rows show different datasets: Euclidean (top), Resampling (middle), Pabulib (bottom).}
\label{fig:ejr:k-hist-two}
\end{figure}

\begin{figure}[htbp]
\centering
\includegraphics[width=\textwidth]{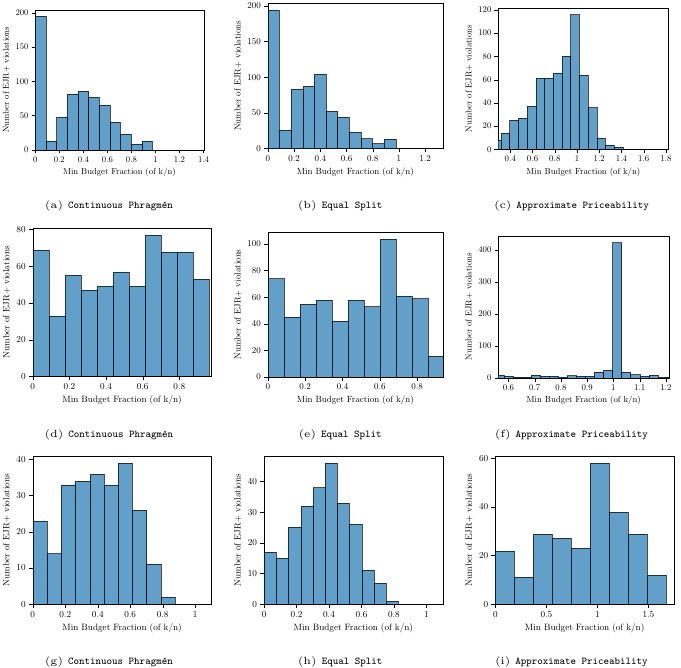}
\caption{Histogram of the number of instances with $1$-EJR+ violations across the minimum budget fractions. Results for committees of size $k=\lfloor m/2 \rfloor$ and different explanation rules.  Rows show different datasets: Euclidean (top), Resampling (middle), Pabulib (bottom).}
\label{fig:ejr:k-hist-three}
\end{figure}
We find that the observation made in the main body—that violations of EJR+ correspond with low minimum budgets, and that this relation is stronger for \texttt{Continuous Phragm\'{e}n} than for \texttt{Approximate Priceability}—is generally consistent across all datasets and committee sizes.

Comparing between datasets, we observe that for the resampling model, the differences between the price systems computed by \texttt{Continuous Phragm\'{e}n} or \texttt{Equal Split} compared to \texttt{Approximate Priceability} are more pronounced. 
Here, for \texttt{Continuous Phragm\'{e}n} and \texttt{Equal Split} the minimum budget fraction is below 1 on most instances. For \texttt{Approximate Priceability}, there is again a considerable fraction of the instances that have a minimum budget fraction close to 1 (this effect is especially apparent for $k=\lfloor m/2 \rfloor$), including instances that violate EJR+.  
For the Pabulib dataset, the results are similar to the resampling model: the range of minimum budget fractions is considerably smaller than for the Euclidean-VCR model, though generally slightly larger than for the resampling model. Moreover, we observe that a smaller number of the random committees violate $1$-EJR+ compared to the other datasets.

Comparing between committee sizes, we find that for larger values of $k$, the minimum budget fraction is generally considerably lower. A possible explanation for this effect might be that the budget assigned to underrepresented groups (i.e., a subset of voters who cannot receive a large budget due to the stability constraints) is lower in relative terms when the average budget is higher. The general relationship between minimum budget fractions and the minimum $\alpha$-EJR+ approximation is similar across committee sizes. 
We chose to report the results for $k=\lfloor m/2 \rfloor$ in the main body, as committees exhibit a wide range of $\alpha$-EJR+ approximations across datasets for this value. 

\subsection{Recovering Voting Power}
We now present the additional results for the experiments from \Cref{sec:exp:power}. As described in the main body, we consider instances where we generate voter weights on Euclidean-VCR instances using spatial weighting, and then compute the committee $W$ using the method of equal shares in the resulting weighted instance.
For each instance and each computed price system, we measure the Pearson correlation coefficient (PCC) between the ground-truth voter weights and the budgets output by the explanation rule.
In \Cref{tab:correlation_weights}, we report the additional results for a wider range of committee sizes $k\in\{\lfloor m/8 \rfloor,\,\lfloor m/4 \rfloor,\,\lfloor m/2 \rfloor\}$.

For $k=\lfloor m/4 \rfloor$, the results for \texttt{Continuous Phragm\'{e}n} and \texttt{Equal Split} are generally comparable to $k=\lfloor m/2 \rfloor$, though both methods exhibit a larger fraction of instances with a very strong correlation. For  \texttt{Approximate Priceability}, the correlation is stronger compared to $k=\lfloor m/2 \rfloor$.

For $k=\lfloor m/8 \rfloor$, the differences between the explanation rules are less pronounced. Moreover, for both \texttt{Continuous Phragm\'{e}n} and \texttt{Equal Split}, a smaller fraction of instances exhibits strong or very strong correlations.
A possible explanation for this is that for low committee sizes $k$, groups of voters with high weight can receive a lower number of additional selected candidates compared to larger values of $k$, limiting the degree to which they are overrepresented in the committee.

\begin{table}[htbp]
\centering
\small
\setlength{\tabcolsep}{4pt}
\renewcommand{\arraystretch}{1.1}
\caption{
Results on recovering voter weights obtained via spatial weighting on Euclidean-VCR instances through budgets returned by the explanation rule.
Percentage of instances with moderate (PCC $> 0.4$), strong (PCC $> 0.7$), and very strong (PCC $> 0.9$) correlation between voter weights and budgets for different committee sizes and explanation rules. We include the results for $k=\lfloor m/2 \rfloor$ already reported in the main body for completeness.}
\label{tab:correlation_weights}
\begin{tabular}{@{}l ccc ccc ccc@{}}
\toprule
& \multicolumn{3}{c}{\texttt{Continuous Phragm\'{e}n}} 
& \multicolumn{3}{c}{\texttt{Equal Split}} 
& \multicolumn{3}{@{}c@{}}{\texttt{Approx.\ Priceability}} \\
\cmidrule(lr){2-4} \cmidrule(lr){5-7} \cmidrule(lr){8-10}
$k$ 
& $>0.9$ & $>0.7$ & $>0.4$
& $>0.9$ & $>0.7$ & $>0.4$
& $>0.9$ & $>0.7$ & $>0.4$ \\
\midrule
$\lfloor m/8\rfloor$ 
& 9.7 & 47.6 & 81.1
& 16.5 & 51.1 & 81.7
& 10.0 & 48.0 & 82.5 \\
$\lfloor m/4 \rfloor$ 
& 18.2 & 64.3 & 89.9
& 33.2 & 70.0 & 90.7
& 11.6 & 57.1 & 87.0 \\
$\lfloor m/2 \rfloor$ 
& 11.9 & 65.3 & 91.1
& 28.1 & 72.8 & 92.9
& 3.3  & 37.3 & 79.3 \\
\bottomrule
\end{tabular}
\end{table}

\subsection{Experiments on Detecting Bias in Text Summarization}\label{app:exp:textsum}
We sample $n=75$ text statements from the Bowling Green dataset available from \citet{polis_data_2026}, where citizens comment on possible improvements to the city of Bowling Green in KY, US.
We manually extract six common topics: Infrastructure \& Transit, Social Equity, Urban Planning \& Environment, Public Health \& Safety, Education \& Youth, and Economic Development.
Then, we assign each statement at most one topic (some statements may remain unlabeled). The labeled dataset is available on \href{https://github.com/luca-kreisel/Explanation-Systems}{GitHub}.

To construct an approval profile, we embed the text statements using the LLM-based \texttt{text-embedding-3-large} (OpenAI). 
We compute pairwise cosine similarities between all statement embeddings and set a threshold at the 80th percentile of these similarities. Cosine similarities have been observed to correlate with human judgments \citep{DBLP:conf/emnlp/GaoYC21}, and have been used in recent work by \citet{DGG+26a} to audit representation in online deliberative processes.
Each statement (i.e., voter) then approves all statements (i.e., the respective candidates, including the candidate corresponding to their own statement) whose similarity exceeds this threshold. We then select a subset of $k=10$ statements as a summary (i.e., a committee $W$) using the LSA summarization method\footnote{Note that the summarization algorithm is independent of the assigned labels, generated embeddings, or approvals.} \citep{DBLP:journals/jasis/DeerwesterDLFH90}.

Based on the manually assigned labels, we find that the produced summary seems to provide an unbalanced representation of the input statements. Specifically, no statement assigned to Infrastructure \& Transit (21\% of statements) is selected, while statements concerning Urban Planning \& Environment (50\% of the summary; 16\% in the input) and Education \& Youth (30\% of the summary; 9\% in the input) are disproportionally represented. 

We then compute price systems for $W$ and the approval profile using  \texttt{Approximate Priceability}, \texttt{Continuous Phragm\'{e}n}, and \texttt{Equal Split}. For \texttt{Approximate Priceability}, despite the apparent disproportional representation, we again observe a close to budget-uniform price system: all voters have a budget of at least $0.18$, and all but 5 voters have a budget of at most $0.25$. In contrast, for \texttt{Continuous Phragm\'{e}n} and \texttt{Equal Split}, we observe lower minimum budgets that reflect the disproportional representation. For \texttt{Continuous Phragm\'{e}n}, there are 40 voters with budget below $k/n\approx 0.13$, for \texttt{Equal Split} there are 35 such voters.
Moreover, the average budgets across topics reflect their level of representation: the overrepresented topics (Urban Planning \& Environment, Education \& Youth) have the highest average budgets ($0.27$ and $0.29$ for \texttt{Equal Split}; $0.25$ and $0.36$ for \texttt{Continuous Phragmén}), while the underrepresented topic Infrastructure \& Transit has the lowest average budget ($0.12$ for both \texttt{Equal Split} and \texttt{Continuous Phragmén}). 
Looking more closely at the statements with budget below $k/n$, we find that their budgets are mostly consistent with their representation. As an example, the budgets point to statements that advocate for improving traffic flow as underrepresented (i.e., the voters corresponding to these statements have low budgets), which is consistent with these opinions being present in the input but not substantially reflected in the summary. Similarly, several statements calling for improved internet and cable TV services have low budgets.

These results indicate that explanations computed by \texttt{Continuous Phragmén} and \texttt{Equal Split} can be a useful tool to detect and analyze proportional representation in subset selection problems, and that price systems can be used to detect unfairness at a more fine-grained level.

\paragraph{Limitations}
We remark on the following limitations of our exploratory analysis. Constructing approval profiles from text embeddings is non-trivial and involves several design choices (e.g., the embedding model and threshold). While the approvals generated by our approach seem to largely be sensible, we observe some outliers.  Moreover, more sophisticated approaches beyond setting approvals based on a threshold are possible.
Investigating this challenge further and validating our results on more datasets is an interesting opportunity for future research.

\end{document}